\newtheorem{theorem}{Theorem}[section]
\newtheorem{corollary}{Corollary}[section]
\newtheorem{lemma}{Lemma}[section]
\newtheorem{proposition}{Proposition}[section]
\theoremstyle{remark}
\newtheorem{remark}{Remark}[section]
\numberwithin{equation}{section}
\newcommand{\e}{\mathrm{e}}
\newcommand{\imag}{\mathrm{i}}
\newcommand{\gtr}{>}
\newcommand{\less}{<}
\newcommand{\abs}[1]{\lvert #1 \rvert}
\newcommand{\Z}{{\mathbb Z}}
\newcommand{\C}{{\mathbb C}}
\newcommand{\R}{{\mathbb R}}
\newcommand{\cA}{{\mathcal A}}
\newcommand{\cE}{{\mathcal E}}
\newcommand{\cD}{{\mathcal{D}}}
\newcommand{\cU}{{\mathcal{U}}}
\newcommand{\cV}{{\mathcal{V}}}
\newcommand{\half}{\frac{1}{2}}
\newcommand{\ve}{\varepsilon}
\renewcommand{\r}{\mathfrak{r}}
 \newcommand{\tPhi}{\Phi^{\vee}}
\newcommand\varpm{\mathbin{\vcenter{\hbox{%
  \oalign{\hfil$\scriptstyle+$\hfil\cr
          \noalign{\kern-.3ex}
          $\scriptscriptstyle({-})$\cr}%
}}}}
\begin{document}

\title[Eigenfunctions of van Diejen's model]{Eigenfunctions of the van Diejen model generated by gauge and integral transformations}

\author{Farrokh Atai}
%\address{Department of Mathematics, Kobe University, Rokko, Kobe 657-8501, Japan}
%\email{farrokh@math.kobe-u.ac.jp}
\address{School of Mathematics, University of Leeds, Leeds, LS2 9JT, UK}
\email{F.Atai@leeds.ac.uk} 

\author{Masatoshi Noumi}
\address{Department of Mathematics, Rikkyo University, Toshima-Ku, Tokyo 171-8501, Japan}
\email{mnoumi@rikkyo.ac.jp}
%\address{Department of Mathematics, KTH Royal Institute of Technology, SE-100 44 Stockholm, Sweden}
%\email{noumi@kth.se}

\date{\today}

\begin{abstract}
We present how explicit eigenfunctions of the principal Hamiltonian for the $BC_{m}$ relativistic Calogero-Moser-Sutherland model, due to van Diejen, can be constructed using gauge and integral transformations. In particular, we find that certain $BC$-type elliptic hypergeometric integrals, including elliptic Selberg integrals, of both Type I and Type II arise as eigenfunctions of the van Diejen model under some parameter restrictions. Among these are also joint eigenfunctions of so-called modular pairs of van Diejen operators. Furthermore, these transformations are related to reflections of the $E_{8}$ Weyl group acting on the space of model parameters. 
%\noindent\textbf{Keywords:} Exactly solvable models; van Diejen model; Elliptic hypergeometric integrals; Elliptic Selberg integrals of Type I and Type II; $E_{8}$ Weyl group 
\end{abstract}

\maketitle
%%%
\setcounter{tocdepth}{1}
% REMOVES THE SUBSECTIONS FROM THE TABLE OF CONTENT
%%%
\tableofcontents

\section{Introduction and preliminaries}

\subsection{Introduction} 
\label{sec_intro}
In this paper we present several gauge and integral transformations for the eigenfunctions of the van Diejen model and construct various classes of explicit solutions to the eigenvalue equation for the principal van Diejen Hamiltonian. The van Diejen model \cite{vDi94,KH97} is defined by a family of mutually commuting analytic difference operators, depending on 8 parameters $a = ( a_{0} , \ldots, a_{7} )\in(\C^{\ast})^{8}$ ($\C^{\ast}=\C\setminus\{0\}$) determining the `external' interaction and the coupling parameter $t\in\C^{\ast}$ determining the `pairwise' interaction, as well as the shift parameter $q \in \C^{\ast} $ and the elliptic nom\'e $p \in \C$ satisfying $\abs{p} < 1$. The principal Hamiltonian is given by the analytic difference operator
\begin{equation}
\label{eq_van_diejen_operator}
\cD_{x}{(a \lvert p , q,t)} =  A^{0}(x ; a \lvert p, q , t) +  \sum_{ 1 \leq i \leq m} A_{i}^{+}(x ; a \lvert p, q,t ) T_{q , x_{i}} + A_{i}^{-}(x ; a \lvert p, q , t ) T_{q, x_{i}}^{-1},
\end{equation}
where $x = (x_1 , x_2, \ldots , x_m)$ ($m\in\Z_{\gtr 0}$) are the canonical coordinates on the algebraic torus $(\C^{\ast})^{m}$ and $T_{q , x_i }$ ($i\in\{1,2,\ldots,m\}$) denotes the multiplicative $q$-shift operator\footnote{That is, $T_{q , x_i}$ acts on suitable functions $f(x)$ by shifting $x_i \to q x_i$ while leaving remaining variables unaffected.}, with coefficients 

\begin{equation}
\label{eq_coefficients_positive_multiplicative}
A_{i}^{\ve}(x ; a \lvert p,q,t ) = \frac{\prod_{0 \leq s \leq 7} \theta(a_s x_{i}^{\ve};p) }{ \theta(x_{i}^{\ve 2} , q x_{i}^{\ve 2} ; p )} \prod_{j \neq i } \frac{ \theta( t x_{i}^{\ve} x_{j}^{\pm};p) }{\theta(x_{i}^{\ve} x_{j}^{\pm};p)} \quad (i\in\{1,2,\ldots,m\}; \ \ve \in\{\pm\}),
\end{equation}
where $\theta(x;p)$ denotes the multiplicative theta function (see Section~\ref{sec_prel_and_notation}); see Section~\ref{sec_vD_symmetries_and_properties}, Eq.~\eqref{eq_vD_external_potential_1}-\eqref{eq_vD_external_potential_2}, for the definition of the zeroth order coefficient $A^{0}(x ; a \lvert p,q,t)$. (We also refer the reader to Section~\ref{sec_prel_and_notation} for details on our notation.) Throughout the paper, the principal Hamiltonian is simply referred to as \emph{the van Diejen operator} as a thorough consideration of the family of commuting operators is outside the scope of this paper.

Constructing the exact eigenfunctions of the van Diejen operator has been an ongoing open problem and little is still known for non-trivial parameters (non-trivial in these sense that the coefficients do not reduce to constants) except in the univariate case \cite{Spi03,Spi04,Kom04,Cha07,Spi07,Spi08} which we collect in Section~\ref{sec_previous_results}. In particular, it is known that certain eigenfunctions in the univariate case are given by the \emph{elliptic hypergeometric integrals} $I(b_{0},\ldots,b_{5}, c x , c x^{-1};p,q)$ ($c \in \C^{\ast}$) where $b_{s} = b_{s}(a \lvert p,q)$ ($s\in\{0,1,\ldots,5\}$) and 
\begin{equation}
I(b_{0},\ldots,b_{7}; p , q) = \int_{\mathbb{T}^{1}} \frac{dy}{2\pi\imag y} \ \frac{\prod_{0 \leq s \leq 7} \Gamma( b_s y^{\pm} ; p , q)}{\Gamma( y^{\pm 2};p,q) },
\end{equation}
with $\Gamma(y ;p ,q)$ denoting the elliptic Gamma function \cite{Rui97} and $\mathbb{T}^{n} = \mathbb{T}^{n}_{1}$ ($n\in\Z_{>0}$) the $n$-dimensional torus, \emph{i.e.} 
\begin{equation}
\mathbb{T}^{n}_{\rho} = \bigl\{ y = (y_{1}, \ldots , y_{n}) \in (\C^{\ast})^{n} \lvert \ \abs{y_k} = \rho \quad (k \in \{1,2,\ldots, n \} )\bigr\} \quad ( \rho \in \mathbb{R}_{> 0}  ),
\label{eq_torus}
\end{equation}
under two restrictions on the model parameters.
(In this Section, the cycles for the integrations are set to the torus for simplicity and their analytic continuation are discussed in Section~\ref{sec_domain_of_holomorphy}.) The van Diejen model is known to be closely related to the Type II $BC_{m}$ elliptic hypergeometric integrals, of which the integral above corresponds to the $m=1$ case: The van Diejen operator is known to be formally self-adjoint (or symmetric) with respect to the symmetric $\C$-bilinear form \cite{vDi94,Rui04}
\begin{equation}
\langle f , g \rangle = \int_{\mathbb{T}^{m}}d\omega_{m}(x) \  w_{m}(x ; a \lvert p,q,  t) \ f(x) g(x) , \quad d\omega_{m}(x) = \frac{1}{(2 \pi \imag)^{m}} \frac{dx_1 \cdots dx_m }{x_1 \cdots x_m},
\label{eq_inner_product}
\end{equation}
with \emph{weight function} $w_{m}(x ; a \lvert p,q , t )$ given by 
\begin{equation}
w_{m}(x ; a \lvert p , q , t ) = \prod_{1 \leq i \leq m} \frac{ \prod_{0 \leq s \leq 7} \Gamma(a_{s} x_{i}^{\pm} ; p , q)}{\Gamma( x_{i}^{\pm 2}; p , q)} \prod_{1 \leq i \less j \leq m} \frac{\Gamma( t x_{i}^{\pm} x_{j}^{\pm} ; p , q)}{\Gamma( x_{i}^{\pm} x_{j}^{\pm};p , q)}.
\label{eq_weight_function}
\end{equation}
Setting $f = g = 1$ in the $\C$-bilinear form yields the Type II elliptic $BC_{m}$ hypergeometric integrals of Selberg type and it is clear that the hypergeometric integral $I(a_{0},\ldots,a_{7};p, q)$ corresponds to the $m=1$ case.{\medskip}

To our knowledge, the only known eigenfunction in the literature for $m\geq 2$ and non-trivial parameters is the constant eigenfunction when the parameters satisfy the ellipticity condition \eqref{eq_ellipticity_condition_1}; see also Section~\ref{sec_special_cases_eigenfunction}. (The ellipticity condition constrains the space of model parameters to a level set of an $E_8$ root, which we elaborate further upon below.) The purpose of this paper is to extend the list of known eigenfunctions by applying gauge and integral transformations (\emph{cf.} Theorems~\ref{thm_Cauchy_eigenfunction_transform_Type_II}-\ref{thm_dual-Cauchy_eigenfunction_transform_Type_I}). These transformations follow from the symmetries of the operator's coefficients which provide two types of gauge transformations for the van Diejen model, and the kernel functions \cite{Rui09a,KNS09} which play crucial roles in constructing integral transformations \cite{KNS09,Rui15,AL20}. It is not by chance that the exact eigenfunctions are given by elliptic hypergeometric integrals: The combination of the $\C$-bilinear form, gauge transformations, and the kernel functions yield that eigenfunctions of the van Diejen operator are given by multivariate elliptic hypergeometric integrals of type $BC$. %When combining the $\C$-bilinear form, gauge transformations, and the kernel functions, it naturally follows that the eigenfunctions of the van Diejen operator are given by multivariate elliptic hypergeometric integrals of type $BC$. 
Although our main results are summarized in Section~\ref{sec_main_results}, here we present two illuminating examples of typical eigenfunctions that are constructed by gauge and integral transformations:
\begin{theorem}
\label{thm_Selberg_type_II}
Let $K \subseteq \{0,1,\ldots,7\}$ with even cardinality, i.e. $\abs{K}\in2\Z_{\geq 0}$, and the parameters satisfy $\abs{p}< \min(\abs{q t}, \abs{q^{-1} t^{n-m-1}})$, $\abs{q},\abs{t} < 1$, and $\abs{a_{s}} > \abs{(p q t)^{\half}}$ for all $s\notin K$. The Type II $BC_{n}$ elliptic hypergeometric integrals
\begin{multline}
\label{eq_Selberg_eigenfunction_type_II}
\psi_{K}(x ; a \lvert p, q , t) = \int_{\mathbb{T}^{n}} d\omega_{n}(y) \prod_{ 1 \leq k \leq n} \frac{\bigl(\prod_{s\notin K } \Gamma( b_{s} y_{k}^{\pm} ; p ,q )\bigr) \bigl(\prod_{1 \leq i \leq m} \Gamma( c x_{i}^{\pm} y_{k}^{\pm} ; p ,q )\bigr)}{\Gamma( y_{k}^{\pm2 } ; p , q) } \\ 
\cdot \prod_{1 \leq k < l \leq n} \frac{\Gamma( t y_{k}^{\pm} y_{l}^{\pm} ; p ,q )}{\Gamma( y_{k}^{\pm} y_{l}^{\pm} ; p , q)},
\end{multline}
where $c = p^{\half} q^{\half} t^{-\half}$ and $b_{s}=p^{\half} q^{\half} t^{\half}a_{s}^{-1}$ for all $s\notin K$, are eigenfunctions of the van Diejen operator $\cD_{x}(a \lvert p,  q, t)$ \emph{(}with eigenvalues $\Lambda_{0}( b_{K(pq)} \lvert q, t)$ in \eqref{eq_vD_constant_eigenvalues}\emph{)} for $x\in\mathbb{T}^{m}$ if the parameters satisfy 
\begin{equation}
\prod_{s\in K} a_{s} = \ve t^{-m + \half \abs{K}} , \quad \prod_{s\notin K} a_{s} = \ve p^{2} q^{2} t^{2n-m-\half\abs{K} + 2} \quad (\ve \in \{\pm\}).
\end{equation}
\end{theorem}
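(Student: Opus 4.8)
The plan is to realise $\psi_K$ as the image of the \emph{constant} eigenfunction of an $n$-variable van Diejen operator under a Cauchy-type integral transform, and then to move the operator $\cD_x(a\mid p,q,t)$ across the transform using a kernel-function identity together with the formal self-adjointness recorded in \eqref{eq_inner_product}--\eqref{eq_weight_function}. Concretely, I would first rewrite the integrand as a pairing
\begin{equation*}
\psi_K(x;a\mid p,q,t)=\langle 1,\Phi_x\rangle_n,\qquad \Phi_x(y)=\Phi(x,y)=\prod_{1\le i\le m}\prod_{1\le k\le n}\Gamma(c\,x_i^{\pm}y_k^{\pm};p,q),
\end{equation*}
where $\langle\cdot,\cdot\rangle_n$ is the $n$-variable form \eqref{eq_inner_product} with weight $w_n(y;b_{K(pq)}\mid p,q,t)$ of the shape \eqref{eq_weight_function}. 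The point is that $\abs{K}$ of the eight external parameters of this weight are arranged in pairs whose product equals $pq$, so that the corresponding factors cancel through $\Gamma(z;p,q)\Gamma(pq\,z^{-1};p,q)=1$; this is exactly why only the factors $\prod_{s\notin K}\Gamma(b_sy_k^{\pm};p,q)$ survive in \eqref{eq_Selberg_eigenfunction_type_II}, and it is the structural reason the hypothesis $\abs K\in 2\Z_{\ge0}$ is imposed.

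The core of the argument is the kernel-function identity
\begin{equation*}
\cD_x(a\mid p,q,t)\,\Phi(x,y)=\cD_y\!\left(b_{K(pq)}\mid p,q,t\right)\Phi(x,y),
\end{equation*}
read as an identity of analytic difference operators, the left operator acting in $x$ and the right one in $y$. I would establish it by reducing to the coefficients $A_i^{\ve}$ of \eqref{eq_coefficients_positive_multiplicative}: after dividing by $\Phi$, the $q$-shifts turn the elliptic Gamma factors of $\Phi$ into ratios of theta functions, and the claim becomes an identity among theta functions that one verifies by matching zeros, poles and quasi-periodicity in a single variable (a Liouville/interpolation argument). The relations $c=p^{\half}q^{\half}t^{-\half}$ and $b_s=p^{\half}q^{\half}t^{\half}a_s^{-1}$ are tailored so that the external theta products generated on the two sides coincide, and the two balancing conditions on $\prod_{s\in K}a_s$ and $\prod_{s\notin K}a_s$ are precisely the kernel condition needed for the identity to hold with no residual additive constant; they also encode the ellipticity condition \eqref{eq_ellipticity_condition_1} for the completed parameters $b_{K(pq)}$ of the $n$-variable operator. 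The gauge symmetries of the coefficients would be used here to bring $\Phi$ and the two operators into a common normalisation.

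Granting the kernel identity, the eigenfunction property follows formally: since the weight and the constant are independent of $x$,
\begin{equation*}
\cD_x(a)\,\psi_K=\langle 1,\cD_x(a)\Phi_x\rangle_n=\langle 1,\cD_y(b_{K(pq)})\Phi_x\rangle_n=\langle \cD_y(b_{K(pq)})1,\Phi_x\rangle_n,
\end{equation*}
the last equality being the formal self-adjointness of the $n$-variable van Diejen operator with respect to \eqref{eq_inner_product}. Because the balancing conditions force $b_{K(pq)}$ to obey \eqref{eq_ellipticity_condition_1}, the constant is an eigenfunction, $\cD_y(b_{K(pq)})1=\Lambda_0(b_{K(pq)}\mid q,t)\,1$ with the eigenvalue \eqref{eq_vD_constant_eigenvalues}, and the asserted eigenvalue drops out.

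The main obstacle is analytic rather than algebraic. Two interchanges must be justified on the torus $\mathbb{T}^n$: passing the $x$-shifts of $\cD_x(a)$ under the $y$-integral, and the self-adjoint ``integration by parts'' that moves $\cD_y(b_{K(pq)})$ from $\Phi_x$ onto the constant. The latter amounts to implementing each shift $T_{q,y_k}^{\pm1}$ by deforming the cycle, and this is legitimate only if no poles of the elliptic Gamma functions in the integrand are crossed and if the integral converges. I expect the stated inequalities to be exactly the admissibility conditions: $\abs q,\abs t<1$ together with $\abs{a_s}>\abs{(pqt)^{\half}}$ (equivalently $\abs{b_s}<1$) for $s\notin K$ keep the external pole strings away from $\mathbb{T}^n$, while $\abs p<\min(\abs{qt},\abs{q^{-1}t^{n-m-1}})$ controls the pinching of the $y_k^{\pm}y_l^{\pm}$ and $x_i^{\pm}y_k^{\pm}$ strings as $m$ and $n$ interact. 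Verifying that these bounds guarantee holomorphicity of the integrand in a neighbourhood of $\mathbb{T}^n$ and permit the contour deformations---and tracking the $p$-, $q$- and $t$-powers through the two balancing relations---is where the real work lies; the analytic continuation to more general cycles is then handled as in Section~\ref{sec_domain_of_holomorphy}.
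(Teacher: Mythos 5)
Your overall strategy is the paper's own: realise $\psi_K$ as a Cauchy-type integral transform of the constant eigenfunction, move $\cD_x(a\lvert q,t)$ across the integral via a kernel identity plus formal self-adjointness, and recover the two parameter restrictions by combining the balancing condition with the ellipticity condition for $b_{K(pq)}$ (this is exactly Theorem~\ref{thm_Cauchy_eigenfunction_transform_Type_II} specialised to $\varphi=1$). However, two of your load-bearing displayed claims are false as written, and both failures have the same source: the gauge factor $U_K$ has been mislaid. First, the pairing representation is wrong. By the reflection property \eqref{eq_reflection_property} one has $\Gamma(pq\,b_s^{-1}y_k^{\pm};p,q)=1/\Gamma(b_s y_k^{\pm};p,q)$, so replacing $b_s$ by $pq\,b_s^{-1}$ for $s\in K$ does not delete those factors from the weight, it inverts them: $w_n(y;b_{K(pq)}\lvert q,t)=w_n(y;b\lvert q,t)\,U_K(y;b\lvert q)^2$, whose integrand carries $\prod_{s\in K}\Gamma(b_sy_k^{\pm};p,q)$ in the \emph{denominator}. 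Hence $\langle 1,\Phi_x\rangle$ taken against $w_n(y;b_{K(pq)})$ is not the integral \eqref{eq_Selberg_eigenfunction_type_II}. The correct factorisation of the integrand is $w_n(y;b\lvert q,t)\,U_K(y;b\lvert q)\,\Phi(x,y\lvert q,t)$ with a \emph{single} gauge factor \eqref{eq_gauge_function_U}; the product-$pq$ cancellation you invoke takes place between the factors $\Gamma(b_sy_k^{\pm};p,q)$ of $w_n(y;b)$ and the factors $\Gamma(pq\,b_s^{-1}y_k^{\pm};p,q)$ coming from $U_K(y;b)$, not inside $w_n(y;b_{K(pq)})$. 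Second, the kernel identity $\cD_x(a)\Phi=\cD_y(b_{K(pq)})\Phi$ is false: Lemma~\ref{lemma_Cauchy_KFI} gives $\cD_x(a)\Phi=\cD_y(b)\Phi$ with $b=p^{\half}q^{\half}t^{\half}a^{-1}$, and since $\cD_y(b_{K(pq)})=U_K(y;b)^{-1}\circ\cD_y(b)\circ U_K(y;b)$ while $\Phi$ carries no $U_K$ factor, the two operators act differently on $\Phi$ (their shift coefficients differ by the theta ratios in \eqref{eq_coefficient_symmetries_3}), so your proposed Liouville-type verification cannot close. The two errors do not compensate each other: the function you literally wrote down, $\int_{\mathbb{T}^n}w_n(y;b_{K(pq)})\Phi\,d\omega_n(y)$, differs from $\psi_K$ by an extra factor $U_K(y;b)$ in the integrand, and when the transform machinery applies to it, it diagonalises $\cD_x(a')$ with $a_s'=t\,a_s^{-1}$ for $s\in K$ and $a_s'=a_s$ otherwise --- not $\cD_x(a\lvert q,t)$.

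The repair is to keep the three ingredients separate, as in the paper's proof: (i) the kernel identity of Lemma~\ref{lemma_Cauchy_KFI} with parameters $b$, valid under the single balancing condition \eqref{eq_balancing_condition_KFI}; (ii) formal self-adjointness with respect to $w_n(y;b)$, which moves $\cD_y(b)$ onto $U_K\cdot 1$; (iii) the gauge Lemma~\ref{lemma_gauge_transform_U}, which rewrites $\cD_y(b)\circ U_K$ as $U_K\circ\cD_y(b_{K(pq)})$, after which the constant is an eigenfunction precisely when $b_{K(pq)}$ satisfies the ellipticity condition \eqref{eq_ellipticity_condition_1} (with $m$ replaced by $n$). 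Alternatively, your pairing against $w_n(y;b_{K(pq)})$ can be salvaged by replacing the kernel $\Phi$ with the composite kernel $U_K(y;b)^{-1}\Phi(x,y)$: then $w_n(y;b_{K(pq)})\,U_K^{-1}\Phi=w_n(y;b)\,U_K\,\Phi$ does reproduce \eqref{eq_Selberg_eigenfunction_type_II}, and the modified identity $\cD_x(a)\bigl(U_K^{-1}\Phi\bigr)=\cD_y(b_{K(pq)})\bigl(U_K^{-1}\Phi\bigr)$ holds, being the KNS09 identity conjugated by $U_K$; from there your self-adjointness step and your final bookkeeping (balancing plus ellipticity yielding the two restrictions on $\prod_{s\in K}a_s$ and $\prod_{s\notin K}a_s$, with eigenvalue $\Lambda_0(b_{K(pq)}\lvert q,t)$) go through, and your discussion of the analytic issues, including the origin of $\abs{p}<\min(\abs{qt},\abs{q^{-1}t^{n-m-1}})$, is consistent with the paper's treatment.
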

\noindent{}(Proof of Theorem~\ref{thm_Selberg_type_II} is given in Section~\ref{sec_proofs}.)
\begin{theorem}
\label{thm_Selberg_type_I}
Let $K \subseteq \{0,1,\ldots,7\}$ with even cardinality, i.e. $\abs{K}\in2\Z_{\geq 0}$, and the parameters satisfy $\abs{p} < \min(\abs{q t} , \abs{q^{-1} t^{n-m-1}})$, $\abs{q} < 1$ and $\abs{a_{s}} > \abs{(p q t)^{\half}}$ for all $s\notin K$. The Type I $BC_{n}$ elliptic hypergeometric integrals
\begin{multline}
\label{eq_Selberg_eigenfunction_type_I}
\tilde{\psi}_{K}(x ; a \lvert p, q , t) = \int_{\mathbb{T}^{n}} d\omega_{n}(y) \prod_{1 \leq k \leq n} \frac{\bigl(\prod_{s\notin K } \Gamma( b_{s} y_{k}^{\pm} ; p ,q )\bigr) \bigl(\prod_{1 \leq i \leq m} \Gamma( c x_{i}^{\pm} y_{k}^{\pm} ; p ,q )\bigr)}{\Gamma( y_{k}^{\pm2 } ; p , q) } \\ 
\cdot \prod_{1 \leq k < l \leq n} \frac{ 1 }{\Gamma( y_{k}^{\pm} y_{l}^{\pm} ; p , q)},
\end{multline}
where $c = p^{\half} q^{\half} t^{-\half}$ and $b_{s} = p^{\half} q^{\half} t^{\half} a_{s}^{-1}$ for all $s \notin K$, are eigenfunctions of the van Diejen operator $\cD_{x}(a \lvert p,  q, t)$ \emph{(}with eigenvalues $\Lambda_{0}( b_{K(pq)} \lvert q, p q t^{-1})$ in \eqref{eq_vD_constant_eigenvalues}\emph{)} for $x\in\mathbb{T}^{m}$ if the parameters satisfy 
\begin{equation}
\prod_{s\in K} a_{s} = \ve (pq)^{-n+1} t^{2 n - m + \half \abs{K}-2} , \quad \prod_{s\notin K} a_{s} = \ve (pq)^{n+1} t^{-m-\half\abs{K} + 4} \quad (\ve \in \{\pm\}).
\end{equation}
\end{theorem}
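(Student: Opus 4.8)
The plan is to obtain $\tilde{\psi}_{K}$ as the image of the constant function under the Type I integral transformation, in complete parallel with the proof of Theorem~\ref{thm_Selberg_type_II}, the one structural change being that the auxiliary $n$-variable operator now carries the \emph{reflected} coupling $pqt^{-1}$ in place of $t$; this is exactly what replaces the eigenvalue $\Lambda_{0}(\cdot\lvert q,t)$ by $\Lambda_{0}(b_{K(pq)}\lvert q,pqt^{-1})$ in \eqref{eq_vD_constant_eigenvalues}. First I would rewrite the $y$-integrand as the Cauchy kernel $\prod_{i,k}\Gamma(cx_{i}^{\pm}y_{k}^{\pm};p,q)$ times a residual weight depending on $y$ alone. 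The $\abs{K}$ external factors indexed by $K$ are made to cancel by grouping the corresponding parameters of the $n$-variable model into reflection pairs $\hat{a}_{s}\hat{a}_{s'}=pq$: since $\Gamma(z;p,q)\,\Gamma(pq/z;p,q)=1$, one gets $\Gamma(\hat{a}_{s}y_{k}^{\pm};p,q)\,\Gamma(\hat{a}_{s'}y_{k}^{\pm};p,q)=1$, which is possible precisely because $\abs{K}$ is even. The same reflection identity is what I would use to recognise the Type I pairwise factor $\prod_{k<l}\Gamma(y_{k}^{\pm}y_{l}^{\pm};p,q)^{-1}$ as the relevant $n$-variable weight of the form \eqref{eq_weight_function} at coupling $pqt^{-1}$; reconciling this factor correctly is the first place where care is needed, and it is what distinguishes the Type I from the Type II integrand.

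Second, I would invoke the Type I transform among Theorems~\ref{thm_Cauchy_eigenfunction_transform_Type_II}--\ref{thm_dual-Cauchy_eigenfunction_transform_Type_I}, whose content is that the Cauchy kernel intertwines $\cD_{x}(a\lvert p,q,t)$ with the $n$-variable van Diejen operator of coupling $pqt^{-1}$. The mechanism is the standard kernel-function argument: writing the integrand as $\Phi(x,y)$ times the $y$-weight, the kernel identity turns $\cD_{x}$ acting on $\Phi$ into the $y$-operator acting on the full integrand, and the integral over $\mathbb{T}^{n}$ of the resulting total $q$-difference vanishes. Applying this to the constant eigenfunction of the $y$-operator then transfers its eigenvalue to $\tilde{\psi}_{K}$, giving $\cD_{x}(a\lvert p,q,t)\,\tilde{\psi}_{K}=\Lambda_{0}(b_{K(pq)}\lvert q,pqt^{-1})\,\tilde{\psi}_{K}$.

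Third comes the parameter bookkeeping. Under the dictionary $c=p^{\half}q^{\half}t^{-\half}$ and $\hat{a}_{s}=b_{s}=p^{\half}q^{\half}t^{\half}a_{s}^{-1}$, the pairing constraint $\prod_{s\in K}\hat{a}_{s}=(pq)^{\abs{K}/2}$ fixes $\prod_{s\in K}a_{s}$, while the ellipticity condition \eqref{eq_ellipticity_condition_1} for the $n$-variable operator at coupling $pqt^{-1}$ fixes $\prod_{s\notin K}a_{s}$. I expect the factors $(pq)^{-n+1}$ and $(pq)^{n+1}$ in the two stated balancing conditions to be precisely the trace of the reflected coupling $pqt^{-1}$ together with the $n$-dependence of \eqref{eq_ellipticity_condition_1}, which is exactly what separates them from the purely $t$-power balancing of Theorem~\ref{thm_Selberg_type_II}; the sign $\ve\in\{\pm\}$ is inherited from the two branches of the ellipticity condition.

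Finally I would verify that the prescribed domain legitimises the manipulations. The conditions $\abs{q}<1$ and $\abs{a_{s}}>\abs{(pqt)^{\half}}$ for $s\notin K$ put the increasing and decreasing geometric sequences of poles of the elliptic Gamma factors on opposite sides of $\mathbb{T}^{n}$, so the integral converges and one may apply $\cD_{x}$, hence the shifts $T_{q,x_{i}}$, under the integral sign without crossing poles, while $\abs{p}<\min(\abs{qt},\abs{q^{-1}t^{n-m-1}})$ controls the $y_{k}$--$y_{l}$ and kernel poles uniformly for $x\in\mathbb{T}^{m}$. The main obstacle, as in the Type II case, is the kernel-function step itself: establishing the intertwining at the reflected coupling $pqt^{-1}$ and checking that the $\mathbb{T}^{n}$-integral of the total $q$-difference genuinely vanishes — that no residues are collected when the contour is deformed under the $q$-shifts — uniformly in $x$. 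Once that identity is secured, the remainder is the bookkeeping described above.
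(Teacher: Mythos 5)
Your top-level strategy is the paper's own: set $\varphi=1$ in the Type I Cauchy transform (Theorem~\ref{thm_Cauchy_eigenfunction_transform_Type_I}), use that $1$ is an eigenfunction of the auxiliary $n$-variable operator at the reflected coupling $pqt^{-1}$, and read off the eigenvalue $\Lambda_{0}(b_{K(pq)}\lvert q, pqt^{-1})$. However, your concrete mechanism for producing the Type I integrand — and consequently your parameter bookkeeping — does not work. In the paper, the absent $K$-indexed factors and the pairwise factor $\prod_{k<l}\Gamma(y_{k}^{\pm}y_{l}^{\pm};p,q)^{-1}$ arise by multiplying the coupling-$t$ weight $w_{n}(y;b\lvert q,t)$ by the gauge functions $U_{K}(y;b\lvert q)$ and $V(y\lvert q,t)$; the $y$-operator then becomes $\cD_{y}(b_{K(pq)}\lvert q,pqt^{-1})$ via Lemmas~\ref{lemma_gauge_transform_U} and \ref{lemma_gauge_transform_V}, and the parameters $(b_{K(pq)})_{s}=pq\,b_{s}^{-1}$ for $s\in K$ remain completely unconstrained. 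Your alternative — forcing the $K$-indexed parameters into reflection pairs $\hat{a}_{s}\hat{a}_{s'}=pq$ so their Gamma factors cancel — is not an innocuous rewriting: pulling such pairs back through the kernel dictionary forces the original parameters to pair as $a_{s}a_{s'}=t$, hence $\prod_{s\in K}a_{s}=t^{\half\abs{K}}$, which contradicts the stated condition $\prod_{s\in K}a_{s}=\ve(pq)^{-n+1}t^{2n-m+\half\abs{K}-2}$ for generic $(p,q,t)$, and it also constrains the individual $a_{s}$, $s\in K$, on which $\tilde{\psi}_{K}$ does not depend but $\cD_{x}(a\lvert q,t)$ does (the theorem must hold for every choice of these with the prescribed product). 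Likewise, the Type I pairwise factor is not the weight at coupling $pqt^{-1}$: by reflection that weight's pairwise part is $\prod_{k<l}\bigl(\Gamma(ty_{k}^{\pm}y_{l}^{\pm};p,q)\,\Gamma(y_{k}^{\pm}y_{l}^{\pm};p,q)\bigr)^{-1}$; the correct reading is coupling-$t$ weight times $V$, and this matters because the formal self-adjointness step must be performed with respect to the coupling-$t$ weight, the one matched to the kernel identity of Lemma~\ref{lemma_Cauchy_KFI}.

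The downstream error is in your third paragraph: "pairing fixes $\prod_{s\in K}a_{s}$, ellipticity fixes $\prod_{s\notin K}a_{s}$" produces conditions with no $(pq)^{\mp n+1}$ factors at all, so it cannot reproduce the theorem. The correct pair of equations is the Cauchy balancing condition \eqref{eq_balancing_condition_KFI}, $a_{0}\cdots a_{7}\,t^{2(m-n)}=p^{2}q^{2}t^{2}$ (required by the kernel identity; it is a condition on the full product, not a pairing statement), together with the ellipticity condition for $\cD_{y}(b_{K(pq)}\lvert q,pqt^{-1})$, namely $(\prod_{s}(b_{K(pq)})_{s})(pqt^{-1})^{2n}=p^{2}q^{2}(pqt^{-1})^{2}$. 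Translating the latter through $b_{s}=(pqt)^{\half}a_{s}^{-1}$ gives $\prod_{s\in K}a_{s}\big/\prod_{s\notin K}a_{s}=(pq)^{-2n}t^{2n+\abs{K}-6}$; multiplying and dividing against the balancing condition and taking square roots yields exactly $\prod_{s\in K}a_{s}=\ve(pq)^{-n+1}t^{2n-m+\half\abs{K}-2}$ and $\prod_{s\notin K}a_{s}=\ve(pq)^{n+1}t^{-m-\half\abs{K}+4}$, with the single sign $\ve$ coming from the square root (the ellipticity condition has no branches, contrary to your remark). With these two replacements — gauge functions in place of pairing, and balancing plus ellipticity in place of pairing plus ellipticity — the remainder of your argument (kernel identity, self-adjointness, contour control) goes through as in the paper.
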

\noindent{}(Proof of Theorem~\ref{thm_Selberg_type_I} is given in Section~\ref{sec_proofs}.) \newline
We note that the restriction $\abs{p} < \abs{q t} $ in the statements above ensures that the eigenvalue equation makes sense in the domain of holomorphy for the functions \eqref{eq_Selberg_eigenfunction_type_II} and \eqref{eq_Selberg_eigenfunction_type_I} . The integrand itself is holomorphic in the domain $\abs{c} < \abs{ x_{i} } < \abs{c}^{-1}$ ($i\in\{1,\ldots,m\}$) as far as $\abs{c} = \abs{ p^{\half} q^{\half} t^{-\half} }< 1 $.

Setting $m=n=1$ in the Theorems above yields the elliptic beta/hypergeometric integrals (for different cardinality of $K$) and allows us to provide an alternative proof that the elliptic hypergeometric integrals $I(b_{0},\ldots,b_{5},c x , c x^{-1};p,q)$ are exact eigenfunctions of the van Diejen operator. Furthermore, the integrals \eqref{eq_Selberg_eigenfunction_type_II} and \eqref{eq_Selberg_eigenfunction_type_I} have known evaluation formulas \cite{vDS01,Rai10,NI19} for particular values of $m$, $n$, and $\abs{K}$. These elliptic Selberg integrals then yield (globally meromorphic) exact eigenfunctions given in terms of particular products of the elliptic Gamma function. We collect some of these known evaluations in Appendix~\ref{app_elliptic_SF} for the convenience of the reader.
{\medskip}

Finally, we wish to recall that the van Diejen operator $\cD_{x}(a\lvert p, q, t)$ has a fascinating Weyl group symmetry in the model parameters $a\in(\C^{\ast})^{8}$, in the sense that its eigenvalues are invariant under the reflections of the Weyl group associated to the Lie algebra $D_8$ \cite{Rui09a}; see also Section~\ref{sec_E8_symmetry}. In the one-variable case, it is also proven that the eigenvalues are invariant under the actions of the Weyl group associated with the exceptional Lie algebra $E_8$ \cite{Rui15}. This symmetry can be extended to the multivariate case: The gauge transformations correspond to the $D_8$ Weyl group $W(D_{8})$ acting on the parameter space. Combining the gauge transformations with the Cauchy-type integral transformation (for $n=m$ and certain constraints on $t$) allows us to obtain an additional reflection on the parameter space related to the $E_8$ Weyl group $W(E_{8})$; see Section~\ref{sec_E8_symmetry} for details. Although we suppose that the $E_8$-symmetry have been known to experts in this field, this has (to our knowledge) not been shown in the literature previously.
{\smallskip}

Before proceeding to give our main results in Section~\ref{sec_main_results}, let us take a moment to introduce our notation and outline the plan of the paper.

%It is known that the type I elliptic Selberg integrals can be obtained from the type II for special values of $t$: It follows from the properties of the elliptic Gamma function that the type II integrand \eqref{eq_Selberg_eigenfunction_type_II} reduces to the type I integrand \eqref{eq_Selberg_eigenfunction_type_I} for $t = p^{\half} q^{\half}$. This is the special case where the integrals coincide.

\subsection{Notation and preliminaries}
\label{sec_prel_and_notation}
We use the standard notation $\Z$, resp. $\C$, for the set of integers, resp. complex numbers and use $\imag = + \sqrt{-1}$. The hyperoctahedral group (Weyl group of type $BC_{m}$) is denoted by $W_{m} = \{\pm 1\}^{m} \rtimes \mathfrak{S}_{m}$ and acts naturally on $x=(x_1 , x_2, \ldots , x_m)$ through permutations $x_i \leftrightarrow x_j$ and inversions $x_i \to x_{i}^{-1}$ ($i,j\in\{1,2,\ldots,m\}$, $i \neq j$).

Throughout the paper we have that $n,m\in\Z_{\geq1}$ are positive integers. For any $a=(a_1,\ldots,a_m) \in (\C^{\ast})^{m}$ we use the notation $c a =(c a _1, \ldots, c a_m)$ for any $c\in\C$ and $a^{-1} = (1/a_1,\ldots,1/a_m)$. Given a constant $c\in\C^\ast$ and two vector $a , b\in\C^{m}$, we write $c^{a} b$ for the vector $( c^{a_{1}} b_1 , \ldots , c^{a_{m}} b_{m}) \in \C^{m}$. For any constant $c$ and set $\mathbb{S}$, we define
$
c^{\mathbb{S}} = \bigl\{ c^{k} \  \lvert \  k \in \mathbb{S} \bigr\}
$
as a shorthand notation.\newline
We always assume that the parameters $q$ and $p$ satisfy $\abs{p}<1$ and $\abs{q}<1$, and use the canonical notation for the shifted factorials, \emph{i.e.} $(a ; p)_{\infty} = \prod_{\ell = 0}^{\infty} ( 1 - a p^{\ell})$ and $(a; p , q )_\infty = \prod_{\ell=0}^{\infty} (p^{\ell} a ; q)_{\infty}$.
The usual multiplicative theta function and elliptic Gamma function are denoted by $\theta(x ; p)$ and $\Gamma(x ; p , q)$ respectively, and are given by $\theta(x;p) = ( x ;p)_{\infty}(p x^{-1};p)_{\infty}$, resp. $\Gamma(x ; p ,q ) = ( p q x^{-1} ; p , q)_{\infty} (x ; p , q )^{-1}_{\infty}$. The properties of the multiplicative theta and elliptic Gamma functions, that are needed for this paper, are summarized in Appendix~\ref{sec_theta_and_gamma_functions}.
It will also be convenient to introduce the shorthand notation 
\begin{equation}
\Gamma( x_1, x_2, \ldots, x_r ;p,q) = \Gamma(x_1 ; p , q) \Gamma( x_2 ; p , q) \cdots \Gamma(x_r ; p , q) \quad (r\in\Z_{> 0})
\end{equation}
as well as
\begin{equation}
\Gamma(c x^{\pm};p,q) = \Gamma(c x;p,q) \Gamma(c x^{-1} ; p , q ) \quad (c\in \C^{\ast}),
\end{equation}
\begin{equation}
\Gamma(c x^{\pm} y^{\pm} ; p , q ) = \Gamma( c x y^{\pm} ; p , q ) \Gamma (c x^{-1} y^{\pm} ; p , q ) \quad ( c \in \C^{\ast}),
\end{equation}
and so forth, and similar for the multiplicative theta function. 
{\medskip}

From here and onwards, we omit the dependence on the elliptic nom\'e $p$ in the operators, eigenfunctions, and gauge functions, \emph{e.g.} write $\cD_{x}{(a \lvert q, t )} = \cD_{x}(a\lvert p , q , t)$, when this does not lead to ambiguity.

\subsection{Plan of the paper}
 In the next Section, Section~\ref{sec_main_results}, we present the gauge and integral transformations for the eigenfunctions of the van Diejen operator. The previously known eigenfunctions, that these transformations can be applied to, are presented in Section~\ref{sec_previous_results} and we explore further properties of the van Diejen operator in the parameter space in Section~\ref{sec_shift_properties}. In  Section~\ref{sec_vD_symmetries_and_properties}, the symmetries of the van Diejen model are given which allow us to provide straightforward proofs of Lemmas~\ref{lemma_gauge_transform_V} and \ref{lemma_gauge_transform_U}, and Propositions~\ref{prop_gauge_transf_V} and \ref{prop_gauge_transf_U}. In order for the paper to be self-contained, we also present a short proof that the van Diejen operator is (formally) self-adjoint, or symmetric, with respect to the symmetric $\C$-bilinear form \eqref{eq_inner_product}. The proofs for Theorems~\ref{thm_Cauchy_eigenfunction_transform_Type_II}-\ref{thm_dual-Cauchy_eigenfunction_transform_Type_I} and Theorems~\ref{thm_Selberg_type_II} and \ref{thm_Selberg_type_I} are given in Section~\ref{sec_proofs}. In Section~\ref{sec_domain_of_holomorphy}, we focus on the analytic continuation of the integral transformations presented in Section~\ref{sec_main_results} as well as the domain and range of the integral transformations. A brief introduction to the $E_{8}$ root lattice and the relation between our transformations and the $E_{8}$ Weyl group $W(E_{8})$ are given in Section~\ref{sec_E8_symmetry}. Section~\ref{sec_conclusions_outlook} contains some final remarks and an outlook. Appendix~\ref{sec_theta_and_gamma_functions} presents the properties of the multiplicative theta functions and elliptic Gamma function needed for our paper. Appendix~\ref{app_elliptic_SF} presents the elliptic hypergeometric series and its relation to the elliptic hypergeometric integrals and concludes with known evaluations of (multivariate) elliptic hypergeometric integrals, \emph{i.e.} elliptic Selberg integrals of type $BC$. The various possible integral transformations that we can obtain are presented in Appendix~\ref{app_transforms} for the convenience of the reader. The paper concludes by presenting the conventions and relations to \cite{KNS09} and \cite{Rui09a} in Appendix~\ref{app_KFI}.

\section{Main results}
\label{sec_main_results}

As mentioned in Section~\ref{sec_intro}, we construct several different transformations for the eigenfunctions of the van Diejen operator. These gauge and integral transformations, as well as how they can be used to construct exact eigenfunctions, are presented in this Section:
\begin{lemma}\label{lemma_gauge_transform_V} 
Define the gauge function $V(x \lvert q , t)$ as 
\begin{equation}
V( x \lvert q ,t ) = \prod_{1 \leq i \less j \leq m} \frac{1}{ \Gamma( t x_{i}^{\pm} x_{j}^{\pm} ; p , q)}.
\label{eq_gauge_function_V}
\end{equation}
The van Diejen operator $\cD_{x}{(a \lvert q, t )}$ satisfies the relation
\begin{equation}
V(x \lvert q , t )^{-1} \circ \cD_{x}{(a \lvert q , t)} \circ V(x \lvert q , t ) = \cD_{x}{(a \lvert q , p q  t^{-1})}.
\label{eq_gauge_transform_V}
\end{equation}
\end{lemma}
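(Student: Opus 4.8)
The plan is to treat the conjugation grade by grade. Since $V(x\lvert q,t)$ is multiplication by a function, it commutes with the zeroth-order coefficient, and on each shift term it acts by
\[
V(x\lvert q,t)^{-1}\circ\bigl(A_{i}^{\ve}(x;a\lvert q,t)\,T_{q,x_{i}}^{\ve}\bigr)\circ V(x\lvert q,t)=A_{i}^{\ve}(x;a\lvert q,t)\,\frac{T_{q,x_{i}}^{\ve}V(x\lvert q,t)}{V(x\lvert q,t)}\,T_{q,x_{i}}^{\ve}.
\]
Because the operators $\{T_{q,x_{i}}^{\pm1}\}$ and the identity are linearly independent over meromorphic functions, comparing coefficients in \eqref{eq_gauge_transform_V} reduces the lemma to the shift-coefficient identities
\[
A_{i}^{\ve}(x;a\lvert q,t)\,\frac{T_{q,x_{i}}^{\ve}V(x\lvert q,t)}{V(x\lvert q,t)}=A_{i}^{\ve}(x;a\lvert q,pqt^{-1})\qquad(i\in\{1,\ldots,m\},\ \ve\in\{\pm\}),
\]
together with the invariance $A^{0}(x;a\lvert q,t)=A^{0}(x;a\lvert q,pqt^{-1})$ of the zeroth-order coefficient.

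Next I would verify the shift-coefficient identities. In $A_{i}^{\ve}(x;a\lvert q,t)$ the only $t$-dependent factor is the pairwise product $\prod_{j\neq i}\theta(t x_{i}^{\ve}x_{j}^{\pm};p)$, so the identity is equivalent to $\tfrac{T_{q,x_{i}}^{\ve}V}{V}=\prod_{j\neq i}\theta(pqt^{-1}x_{i}^{\ve}x_{j}^{\pm};p)\big/\theta(t x_{i}^{\ve}x_{j}^{\pm};p)$. Shifting $x_{i}\to q^{\ve}x_{i}$ affects only the factors of \eqref{eq_gauge_function_V} carrying the index $i$, and the elliptic Gamma difference equation $\Gamma(qz;p,q)=\theta(z;p)\,\Gamma(z;p,q)$ converts the resulting ratio of Gamma functions into
\[
\frac{T_{q,x_{i}}^{\ve}V(x\lvert q,t)}{V(x\lvert q,t)}=\prod_{j\neq i}\frac{\theta(tq^{-1}x_{i}^{-\ve}x_{j}^{\pm};p)}{\theta(t x_{i}^{\ve}x_{j}^{\pm};p)}.
\]
The denominators already agree, and for the numerators I would set $z=qt^{-1}x_{i}^{\ve}x_{j}^{\delta}$ ($\delta\in\{\pm\}$) and chain the quasi-periodicity and reflection laws $\theta(pz;p)=-z^{-1}\theta(z;p)=\theta(z^{-1};p)$, which gives $\theta(pqt^{-1}x_{i}^{\ve}x_{j}^{\delta};p)=\theta(tq^{-1}x_{i}^{-\ve}x_{j}^{-\delta};p)$; as $\delta$ runs over $\{\pm\}$ this identifies the two $x_{j}^{\pm}$-products and settles the shift coefficients.

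Finally, the genuinely substantive point is the invariance $A^{0}(x;a\lvert q,t)=A^{0}(x;a\lvert q,pqt^{-1})$: unlike the shift coefficients, $A^{0}$ is untouched by the conjugation, so this must hold as an identity of the coefficient itself. I would establish it directly from the explicit expression for the zeroth-order coefficient (Eqs.~\eqref{eq_vD_external_potential_1}--\eqref{eq_vD_external_potential_2}) using the same reflection and quasi-periodicity identities for $\theta(\,\cdot\,;p)$ collected in Section~\ref{sec_vD_symmetries_and_properties}. I expect this bookkeeping to be the main obstacle, since $A^{0}$ has a more intricate dependence on $t$ and on the external parameters $a$ than the single theta product governing the shift coefficients. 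Combining the trivial action on $A^{0}$, the matching of the $2m$ shift coefficients, and the $t\mapsto pqt^{-1}$ invariance of $A^{0}$ then yields \eqref{eq_gauge_transform_V}.
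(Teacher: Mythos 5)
Your proposal is correct, and for the shift terms it is exactly the paper's proof: the paper computes the same quotient $T_{q,x_i}^{\ve}V/V=\prod_{j\neq i}\theta(q^{-1}tx_i^{-\ve}x_j^{\pm};p)/\theta(tx_i^{\ve}x_j^{\pm};p)$ from $\Gamma(qz;p,q)=\theta(z;p)\Gamma(z;p,q)$, and then uses the reflection property $\theta(pz;p)=\theta(z^{-1};p)$, with the two signs of $x_j^{\pm}$ paired exactly as you do, to rewrite the numerator as $\prod_{j\neq i}\theta(pqt^{-1}x_i^{\ve}x_j^{\pm};p)$, giving $A_i^{\ve}(x;a\lvert q,t)\,T_{q,x_i}^{\ve}V/V=A_i^{\ve}(x;a\lvert q,pqt^{-1})$. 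Where you diverge is the zeroth-order coefficient: the paper does not recompute it inside this proof but cites \eqref{eq_zeroth_order_symmetries_4} of Lemma~\ref{lemma_zeroth_order_symmetries}, which is established via the characterization of $A^{0}$ in Remark~\ref{remark_zeroth_order_coefficient} (ellipticity, $W_m$-invariance, pole locations and residue conditions, then fixing an additive constant), whereas you propose verifying $A^{0}(x;a\lvert q,pqt^{-1})=A^{0}(x;a\lvert q,t)$ directly from \eqref{eq_vD_external_potential_1}--\eqref{eq_vD_external_potential_2}. Your route does go through, and termwise: under $t\mapsto pqt^{-1}$ the denominator factors $\theta(t;p)$ and $\theta(q^{-1}t;p)$ simply swap; for $r=0,1$ each product $\theta(c_rq^{-\half}tx_j^{\pm};p)$ is invariant (pair $x_j^{\pm}$ and use $\theta(pz;p)=\theta(z^{-1};p)$ again); for $r=2,3$ the $x_j$-products are \emph{not} invariant but acquire the factor $(pqt^{-2})^{\mp m}$, which is precisely cancelled by the change in the prefactors $L_{2}^{(m)},L_{3}^{(m)}\propto t^{\pm m}$ --- this cancellation is the one piece of bookkeeping you must make explicit, since each summand $A_r^{0}$ is invariant only because of it. With that, your argument reproves exactly the special case of Lemma~\ref{lemma_zeroth_order_symmetries} needed here (and the paper itself pins down its additive constant via the same termwise identities $A_r^{0}(x;a\lvert q,pqt^{-1})=A_r^{0}(x;a\lvert q,t)$); the heavier residue machinery in that lemma is really only indispensable for the other symmetries, such as the half-period shift \eqref{eq_zeroth_order_symmetries_2}. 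So the two proofs differ only in that yours is self-contained while the paper's delegates $A^{0}$ to a general lemma; nothing substantive is lost either way.
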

\noindent(Proof of Lemma~\ref{lemma_gauge_transform_V} is given in Section~\ref{eq_symmetries_proof}.)

\begin{lemma}\label{lemma_gauge_transform_U}
For any $K\subseteq\{0,1,\ldots, 7\}$ with even cardinality, \emph{i.e.} $\abs{K} \in 2 \Z_{\geq 0}$, define the gauge function $U_{K}(x ; a \lvert q )$ as
\begin{equation}
U_{K}(x ; a \lvert q ) =  \prod_{1 \leq i \leq m} \prod_{s\in K} \frac{1}{\Gamma( a_{s} x_{i}^{\pm};p , q)}
\label{eq_gauge_function_U},
\end{equation}
and the notation $a_{K(pq)} = ( (a_{K(pq)})_{0} , \ldots , (a_{K(pq)})_{7} ) \in (\C^{\ast})^{8}$ as
\begin{equation}
(a_{K(pq)})_{s}  = \begin{cases}
p q a_{s}^{-1} \quad &\text{if } s\in K \\
a_{s} \quad &\text{if } s\notin K
\end{cases} \quad (s\in\{0,1,\ldots,7\}).
\end{equation}
The van Diejen operator $\cD_{x}{(a \lvert q, t )}$ satisfies the relation
\begin{equation}
\label{eq_gauge_transform_U}
U_{K}(x ; a \lvert q )^{-1} \circ \cD_{x}{(a \lvert q , t )} \circ U_{K}(x ; a \lvert q ) = \cD_{x}{(a_{K(pq)} \lvert q , t)}.
\end{equation}
\end{lemma}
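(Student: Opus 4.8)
The plan is to treat this as a standard gauge (conjugation) computation for analytic difference operators, reducing the operator identity \eqref{eq_gauge_transform_U} to term-by-term scalar identities that follow from the shift relation of the elliptic Gamma function together with the even-cardinality symmetry of the zeroth-order coefficient. First I would record the general principle that conjugating $\cD_{x}(a\lvert q,t)$ by the multiplication operator $U_{K}$ leaves the zeroth-order term $A^{0}$ untouched and multiplies each shift coefficient by the ratio of the shifted gauge function to itself: writing $T=T_{q,x_{i}}^{\ve}$ and regarding $U_{K}$ as a multiplication operator, one has $T\circ U_{K}=(TU_{K})\circ T$, whence $U_{K}^{-1}\circ (A_{i}^{\ve}\,T)\circ U_{K}=\bigl(A_{i}^{\ve}\cdot(TU_{K})/U_{K}\bigr)\,T$. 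Thus \eqref{eq_gauge_transform_U} is equivalent to the identities $A^{0}(x;a\lvert q,t)=A^{0}(x;a_{K(pq)}\lvert q,t)$ and $A_{i}^{\ve}(x;a\lvert q,t)\,(T_{q,x_{i}}^{\ve}U_{K})/U_{K}=A_{i}^{\ve}(x;a_{K(pq)}\lvert q,t)$ for $\ve\in\{\pm\}$ and each $i\in\{1,\ldots,m\}$.

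Next I would compute the ratios explicitly. Since $U_{K}$ factorizes over the variables, only the $i$-th factor is affected by $T_{q,x_{i}}^{\ve}$, and applying the elliptic Gamma shift relation $\Gamma(qz;p,q)=\theta(z;p)\Gamma(z;p,q)$ (Appendix~\ref{sec_theta_and_gamma_functions}) to each $\Gamma(a_{s}x_{i}^{\pm};p,q)$ yields
\[
\frac{T_{q,x_{i}}^{\ve} U_{K}}{U_{K}} = \prod_{s\in K}\frac{\theta(a_{s} q^{-1} x_{i}^{-\ve};p)}{\theta(a_{s} x_{i}^{\ve};p)} \qquad (\ve\in\{\pm\}).
\]
Multiplying $A_{i}^{\ve}(x;a\lvert q,t)$ by this ratio cancels precisely the factors $\prod_{s\in K}\theta(a_{s}x_{i}^{\ve};p)$ in the numerator of \eqref{eq_coefficients_positive_multiplicative} and replaces them by $\prod_{s\in K}\theta(a_{s}q^{-1}x_{i}^{-\ve};p)$, while leaving the denominator $\theta(x_{i}^{\ve 2},qx_{i}^{\ve 2};p)$ and the pairwise $t$-factors (all independent of $a$) untouched. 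Finally the theta reflection $\theta(z;p)=\theta(p/z;p)$ (Appendix~\ref{sec_theta_and_gamma_functions}) rewrites $\theta(a_{s}q^{-1}x_{i}^{-\ve};p)=\theta(pq\,a_{s}^{-1}x_{i}^{\ve};p)$, which is exactly the factor produced by the substitution $a_{s}\mapsto pq\,a_{s}^{-1}$; since $a_{s}$ is unchanged for $s\notin K$, the product $\prod_{0\le s\le 7}\theta((a_{K(pq)})_{s}x_{i}^{\ve};p)$ is recovered, matching $A_{i}^{\ve}(x;a_{K(pq)}\lvert q,t)$ coefficient by coefficient.

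It remains to verify the zeroth-order identity $A^{0}(x;a\lvert q,t)=A^{0}(x;a_{K(pq)}\lvert q,t)$, and I expect this to be the main obstacle, since $A^{0}$ is not a bare product of theta functions but the more intricate external potential whose definition is deferred to Section~\ref{sec_vD_symmetries_and_properties}. The strategy is to appeal to the invariance properties of $A^{0}$ recorded there: the simultaneous reflection $a_{s}\mapsto pq\,a_{s}^{-1}$ on a subset $K$ of \emph{even} cardinality is precisely a composite of generators of the $W(D_{8})$-action under which the external potential is invariant, and the hypothesis $\abs{K}\in 2\Z_{\geq 0}$ is exactly what makes the attendant quasi-periodicity/sign factors recombine to leave $A^{0}$ fixed. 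Concretely I would substitute $a_{K(pq)}$ into the explicit expression for $A^{0}$ and use the same two theta identities to reduce the difference to a ratio of theta functions that equals $1$ precisely because $\abs{K}$ is even. Combining the three verified identities with the conjugation principle of the first paragraph then establishes \eqref{eq_gauge_transform_U}.
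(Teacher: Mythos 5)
Your proposal is correct and takes essentially the same route as the paper: the paper likewise reduces \eqref{eq_gauge_transform_U} to the $q$-difference equation $U_{K}^{-1}T_{q,x_{i}}^{\ve}U_{K}=\prod_{s\in K}\theta(q^{-1}a_{s}x_{i}^{-\ve};p)/\theta(a_{s}x_{i}^{\ve};p)=\prod_{s\in K}\theta(pq\,a_{s}^{-1}x_{i}^{\ve};p)/\theta(a_{s}x_{i}^{\ve};p)$, which turns each shift coefficient into $A_{i}^{\ve}(x;a_{K(pq)}\lvert q,t)$, and then invokes the invariance $A^{0}(x;a_{K(pq)}\lvert q,t)=A^{0}(x;a\lvert q,t)$, \emph{i.e.} Eq.~\eqref{eq_zeroth_order_symmetries_3} of Lemma~\ref{lemma_zeroth_order_symmetries}. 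Your plan for that zeroth-order identity (substituting $a_{K(pq)}$ into the explicit formula and letting the even cardinality of $K$ cancel the quasi-periodicity signs) is precisely how the paper settles it in the proof of Lemma~\ref{lemma_zeroth_order_symmetries}, where $A^{0}_{r}(x;a_{K(pq)}\lvert q,t)=A^{0}_{r}(x;a\lvert q,t)$ is checked term by term, the evenness being needed only for the $r=2,3$ terms carrying $(a_{0}\cdots a_{7})^{\half}$.
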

\noindent{}(Proof of Lemma~\ref{lemma_gauge_transform_U} is given in Section~\ref{eq_symmetries_proof}.)

The following results follow naturally from the relations \eqref{eq_gauge_transform_V} and \eqref{eq_gauge_transform_U} (and using the constant eigenfunction under certain parameter restriction):
\begin{proposition}
\label{prop_gauge_transf_V}
Let $\varphi(x)$ be an eigenfunction of the van Diejen operator \newline $\cD_{x}{( a \lvert q , p q t^{-1})}$ with eigenvalue $\Lambda \in \C$, then the function
\begin{equation}
\label{eq_prop_eigenfunction_V}
\psi(x) = V(x \lvert q , t ) \varphi(x) ,\quad V(x \lvert q, t ) = \prod_{1 \leq i \less j \leq m} \frac{1}{\Gamma( t x_{i}^{\pm} x_{j}^{\pm}; p , q )},
\end{equation}
is an eigenfunction of the operator $\cD_{x}{(a \lvert q,t)}$ with the same eigenvalue.

Furthermore, suppose that the parameters satisfy
\begin{equation}
\label{eq_ellipticity_condition_V}
a_{0} \cdots a_{7}  p^{2m} q^{2m} t^{-2m}  = p^{4} q^{4} t^{-2}
\end{equation}
then the function $V(x \lvert q, t )$ is an eigenfunction of the van Diejen operator $\cD_{x}{(a \lvert q , t )}$ with eigenvalue $\Lambda_{0}(a \lvert q  , p q t^{-1} )$ in \eqref{eq_vD_constant_eigenvalues}.
\end{proposition}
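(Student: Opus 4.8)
The first assertion is an immediate consequence of the conjugation relation \eqref{eq_gauge_transform_V} in Lemma~\ref{lemma_gauge_transform_V}, and my plan is simply to record it. Suppose $\varphi(x)$ satisfies $\cD_{x}(a\lvert q,pqt^{-1})\,\varphi=\Lambda\varphi$. Applying \eqref{eq_gauge_transform_V} gives
\begin{equation*}
V(x\lvert q,t)^{-1}\circ\cD_{x}(a\lvert q,t)\circ V(x\lvert q,t)\,\varphi=\Lambda\varphi ,
\end{equation*}
and composing on the left with $V(x\lvert q,t)$ shows that $\psi=V(x\lvert q,t)\,\varphi$ obeys $\cD_{x}(a\lvert q,t)\,\psi=\Lambda\psi$, with the same eigenvalue. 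Before invoking this I would note that $V(x\lvert q,t)$ and $\varphi$ lie in compatible spaces, so that the gauge factor and the composition of operators are meaningful on the relevant domain of holomorphy.

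For the second assertion the plan is to specialise the first part to the constant function $\varphi\equiv 1$. The one nontrivial input I would invoke is the known fact, recalled in Section~\ref{sec_special_cases_eigenfunction}, that the constant function is an eigenfunction of the van Diejen operator $\cD_{x}(a\lvert q,\tau)$ with eigenvalue $\Lambda_{0}(a\lvert q,\tau)$ in \eqref{eq_vD_constant_eigenvalues} precisely when the parameters satisfy the ellipticity condition \eqref{eq_ellipticity_condition_1} at coupling $\tau$. Applying this with $\tau=pqt^{-1}$ makes $\varphi\equiv 1$ an eigenfunction of $\cD_{x}(a\lvert q,pqt^{-1})$ with eigenvalue $\Lambda_{0}(a\lvert q,pqt^{-1})$, whereupon the first part yields that $\psi=V(x\lvert q,t)\cdot 1=V(x\lvert q,t)$ is an eigenfunction of $\cD_{x}(a\lvert q,t)$ with the same eigenvalue $\Lambda_{0}(a\lvert q,pqt^{-1})$, as claimed.

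It then remains only to verify that the hypothesis \eqref{eq_ellipticity_condition_V} is exactly the ellipticity condition \eqref{eq_ellipticity_condition_1} under the substitution $t\mapsto pqt^{-1}$ that passes from $\cD_{x}(a\lvert q,t)$ to $\cD_{x}(a\lvert q,pqt^{-1})$. This is the only genuine computation in the argument: starting from \eqref{eq_ellipticity_condition_1} in the form $a_{0}\cdots a_{7}=(pq)^{2}t^{-2(m-1)}$ and replacing $t$ by $pqt^{-1}$ gives $a_{0}\cdots a_{7}=(pq)^{4-2m}t^{2m-2}$, which rearranges into $a_{0}\cdots a_{7}\,p^{2m}q^{2m}t^{-2m}=p^{4}q^{4}t^{-2}$, i.e. \eqref{eq_ellipticity_condition_V}. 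I do not anticipate any real obstacle: the whole proposition is a clean combination of the gauge relation \eqref{eq_gauge_transform_V} with the constant-eigenfunction result, and the only point demanding care is the bookkeeping of the powers of $p$, $q$ and $t$ under $t\mapsto pqt^{-1}$.
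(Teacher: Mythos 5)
Your proof is correct and follows essentially the same route as the paper's own argument: the first part is exactly the conjugation relation of Lemma~\ref{lemma_gauge_transform_V} rewritten as an eigenvalue statement, and the second part specializes to $\varphi\equiv 1$ and identifies \eqref{eq_ellipticity_condition_V} as the ellipticity condition \eqref{eq_ellipticity_condition_1} under $t\mapsto pqt^{-1}$, which is precisely what the paper does. The power-counting check at the end is also correct.
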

\noindent(Proof of Proposition~\ref{prop_gauge_transf_V} is given in Section~\ref{eq_symmetries_proof}.)
\begin{proposition}
\label{prop_gauge_transf_U}
Let $K\subseteq\{0,1,\ldots,7\}$ with even cardinality, i.e. $\abs{K} \in 2 \Z_{\geq 0}$, and $\varphi(x)$ be an eigenfunction of the van Diejen operator $\cD_{x}{(a_{K(pq)}\lvert q , t )}$ with eigenvalue $\Lambda$. The function
\begin{equation}
\label{eq_prop_eigenfunction_U}
\psi(x) = U_{K}(x ; a \lvert q ) \varphi(x) , \quad U_{K}(x ; a\lvert q ) = \prod_{1 \leq i \leq m} \prod_{s\in K} \frac{1}{\Gamma( a_{s} x_{i}^{\pm};p , q)},
\end{equation}
is then an eigenfunction of the van Diejen operator $\cD_{x}{(a \lvert q, t )}$ with the same eigenvalue $\Lambda$. 

Furthermore, suppose the parameters satisfy
\begin{equation}
\label{eq_balancing_condition_U}
 (\prod_{s\notin K} a_{s} )p^{\abs{K}} q^{\abs{K}} t^{2 m}  = (\prod_{s\in K} a_{s})p^{2} q^{2}  t^{2} 
\end{equation}
then the function $U_{K}(x ; a \lvert q )$ is an exact eigenfunction of the van Diejen operator $\cD_{x}{(a \lvert q , t )}$ with eigenvalue $\Lambda_{0}(a_{K(pq)} \lvert q, t )$ in \eqref{eq_vD_constant_eigenvalues}.
\end{proposition}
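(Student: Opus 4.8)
The plan is to deduce both assertions from the gauge relation \eqref{eq_gauge_transform_U} of Lemma~\ref{lemma_gauge_transform_U}, in complete parallel with the treatment of Proposition~\ref{prop_gauge_transf_V}.

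First I would dispose of the transformation statement by a one-line conjugation argument. Rewriting \eqref{eq_gauge_transform_U} in the intertwining form $\cD_{x}(a \lvert q, t)\circ U_{K}(x; a \lvert q) = U_{K}(x; a \lvert q)\circ \cD_{x}(a_{K(pq)} \lvert q, t)$ and applying both sides to a given eigenfunction $\varphi$ of $\cD_{x}(a_{K(pq)} \lvert q, t)$ with eigenvalue $\Lambda$, the right-hand side produces $\Lambda\, U_{K}(x; a \lvert q)\varphi$, so that $\psi = U_{K}(x; a \lvert q)\varphi$ satisfies $\cD_{x}(a \lvert q, t)\psi = \Lambda\psi$. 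No parameter restriction is needed here, only that $U_{K}$ and $\varphi$ share a common domain on which the operators act, which is precisely the setting of Lemma~\ref{lemma_gauge_transform_U}.

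For the second assertion I would specialize the first to $\varphi \equiv 1$. The key input is the fact, recorded in Section~\ref{sec_special_cases_eigenfunction}, that the constant function is an eigenfunction of a van Diejen operator exactly when its parameters obey the ellipticity condition \eqref{eq_ellipticity_condition_1}; for the operator $\cD_{x}(a' \lvert q, t)$ this takes the form $\prod_{0 \le s \le 7} a'_{s} = p^{2} q^{2} t^{2-2m}$ (consistent with \eqref{eq_ellipticity_condition_V}, which is its specialization at coupling $pqt^{-1}$ under $t \mapsto pqt^{-1}$), in which case the eigenvalue equals $\Lambda_{0}(a' \lvert q, t)$ of \eqref{eq_vD_constant_eigenvalues}. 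Applying this with $a' = a_{K(pq)}$, the remaining task is the bookkeeping check that the ellipticity condition on $a_{K(pq)}$ matches \eqref{eq_balancing_condition_U}. Using
\[
\prod_{0 \le s \le 7}(a_{K(pq)})_{s} = \prod_{s \in K}\bigl(p q\, a_{s}^{-1}\bigr)\prod_{s \notin K} a_{s} = (pq)^{\abs{K}}\,\frac{\prod_{s \notin K} a_{s}}{\prod_{s \in K} a_{s}},
\]
the equation $\prod_{s}(a_{K(pq)})_{s} = p^{2} q^{2} t^{2-2m}$ rearranges, after clearing denominators and multiplying by $t^{2m}$, into $\bigl(\prod_{s \notin K} a_{s}\bigr) p^{\abs{K}} q^{\abs{K}} t^{2m} = \bigl(\prod_{s \in K} a_{s}\bigr) p^{2} q^{2} t^{2}$, which is exactly \eqref{eq_balancing_condition_U}. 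Hence, under \eqref{eq_balancing_condition_U}, the constant function is an eigenfunction of $\cD_{x}(a_{K(pq)} \lvert q, t)$ with eigenvalue $\Lambda_{0}(a_{K(pq)} \lvert q, t)$, and the transformation statement then gives that $U_{K}(x; a \lvert q) = U_{K}(x; a \lvert q)\cdot 1$ is an eigenfunction of $\cD_{x}(a \lvert q, t)$ with the same eigenvalue.

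I expect no serious obstacle: the genuine analytic content — that an ellipticity/balancing condition forces the constant function to be an eigenfunction — is imported from Section~\ref{sec_special_cases_eigenfunction}, while everything else reduces to the conjugation in Lemma~\ref{lemma_gauge_transform_U} together with the elementary product manipulation above. The only point requiring care is keeping the $p$-, $q$-, and $t$-powers consistent when translating the ellipticity condition for $a_{K(pq)}$ into the stated form \eqref{eq_balancing_condition_U}, and recording the eigenvalue with the shifted parameter vector $a_{K(pq)}$ rather than $a$.
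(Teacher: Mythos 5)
Your proposal is correct and follows essentially the same route as the paper's proof: conjugation via Lemma~\ref{lemma_gauge_transform_U} gives the transformation statement, and specializing to $\varphi\equiv 1$ with the ellipticity condition \eqref{eq_ellipticity_condition_1} applied to $a_{K(pq)}$ — rearranged by clearing $\prod_{s\in K}a_{s}$ — yields exactly \eqref{eq_balancing_condition_U} and the eigenvalue $\Lambda_{0}(a_{K(pq)}\lvert q,t)$. Your parameter bookkeeping is accurate, so there is nothing to correct.
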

\noindent(Proof is Proposition~\ref{prop_gauge_transf_U} is given in Section~\ref{eq_symmetries_proof}.)
{\medskip}

A key ingredient for the integral transformations are the kernel functions, and the corresponding kernel function identities, found by Ruijsenaars \cite{Rui09a} and by Komori, Shiraishi, and one of the authors in \cite{KNS09}. Let us recall these identities and express them in a form more suitable for our purposes here:
\begin{lemma}[Theorem~2.3 (1) of \cite{KNS09}]
\label{lemma_Cauchy_KFI}
Under the balancing condition
\begin{equation}
a_{0} \cdots a_{7} t^{2(m-n)} = p^{2} q^{2} t^{2},
\label{eq_balancing_condition_KFI}
\end{equation}
the function
\begin{equation}
\label{eq_Cauchy_KF}
\Phi(x, y \lvert q,  t) = \prod_{1 \leq i \leq m} \prod_{1 \leq k \leq n} \Gamma(p^{\half} q^{\half} t^{-\half} x_{i}^{\pm} y_{k}^{\pm} ; p ,q)
\end{equation}
satisfies the functional identity
\begin{equation}
\label{eq_Cauchy_KFI}
\cD_{x}(a \lvert q, t ) \Phi(x , y \lvert q, t) = \cD_{y}( b \lvert q, t ) \Phi(x,y\lvert q, t)
\end{equation}
where $b=(b_0 , \ldots , b_7)\in(\C^{\ast})^{8}$ is given by $b_{s} = p^{\half} q^{\half} t^{\half} a_{s}^{-1}$ for all $s \in \{0,1,\ldots,7\}$.
\end{lemma}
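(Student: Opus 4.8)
The plan is to treat Lemma~\ref{lemma_Cauchy_KFI} as a recasting of Theorem~2.3(1) of \cite{KNS09} into the conventions of the present paper, so that the bulk of the argument is a careful verification rather than a new computation. First I would set up the dictionary between the difference operators and kernel function of \cite{KNS09} and the objects $\cD_{x}(a\lvert q,t)$ and $\Phi(x,y\lvert q,t)$ used here (this dictionary is the content of Appendix~\ref{app_KFI}), checking in particular that the shift parameter $c=p^{\half}q^{\half}t^{-\half}$ appearing in \eqref{eq_Cauchy_KF}, the dual parameters $b_{s}=p^{\half}q^{\half}t^{\half}a_{s}^{-1}$, and the balancing condition \eqref{eq_balancing_condition_KFI} all correspond to those of \cite{KNS09} after the relabeling of parameters. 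Given that identification, \eqref{eq_Cauchy_KFI} is immediate.

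For a self-contained derivation I would instead verify \eqref{eq_Cauchy_KFI} directly. Dividing both sides by $\Phi$, it suffices to prove the functional identity
\[
\frac{\cD_{x}(a\lvert q,t)\,\Phi}{\Phi} = \frac{\cD_{y}(b\lvert q,t)\,\Phi}{\Phi} .
\]
The key computational input is the shift relation $\Gamma(qz;p,q)=\theta(z;p)\,\Gamma(z;p,q)$ (Appendix~\ref{sec_theta_and_gamma_functions}): applying $T_{q,x_{i}}^{\pm1}$ to $\Phi$ only changes the factors $\Gamma(c\,x_{i}^{\pm}y_{k}^{\pm};p,q)$, so the ratios $T_{q,x_{i}}^{\pm1}\Phi/\Phi$ are finite products of theta functions in $x_{i}$ and $y_{k}$. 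Substituting these together with the explicit coefficients \eqref{eq_coefficients_positive_multiplicative}, the identity above becomes a relation among theta functions alone, with every elliptic Gamma function eliminated; the same reduction on the right-hand side produces, by construction of the dual parameters $b_{s}$, a symmetric expression in the roles of $x$ and $y$.

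It then remains to establish this theta-function identity. I would fix all variables except one, say $x_{1}$, and regard both sides as functions of $x_{1}$. The coefficient denominators $\theta(x_{1}^{\ve 2},qx_{1}^{\ve 2};p)$ and the pairwise factors $\theta(x_{1}^{\ve}x_{j}^{\pm};p)$ produce only apparent poles, and I would check that the residues at the zeros of these denominators cancel in pairs: the forward- and backward-shift contributions cancel at the self-interaction poles, and the $i$-th and $j$-th contributions cancel at the pairwise-interaction poles. The balancing condition \eqref{eq_balancing_condition_KFI} is precisely what forces the quasi-periodicity multipliers under $x_{1}\to p\,x_{1}$ on the two sides to agree, since $a_{0}\cdots a_{7}$ enters the multiplier through the external factors $\prod_{s}\theta(a_{s}x_{1}^{\ve};p)$. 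Once holomorphicity in $x_{1}\in\C^{\ast}$ and matching quasi-periodicity are established, the difference of the two sides is a genuinely elliptic, hence constant, function of $x_{1}$; evaluating at a convenient specialization shows the constant is zero, and symmetry in the remaining variables then completes the proof.

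The main obstacle is this last step: the bookkeeping of the many theta factors and the verification that all residues cancel and that \eqref{eq_balancing_condition_KFI} produces exact cancellation of the $x_{1}\to p\,x_{1}$ multipliers. This is the technical heart of the arguments in \cite{KNS09,Rui09a}, and in practice it is cleanest to reduce the residue analysis to a known Riemann-type theta addition identity rather than to carry it out term by term.
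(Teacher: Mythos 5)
Your first route is exactly the paper's own treatment: Lemma~\ref{lemma_Cauchy_KFI} is not proved from scratch but obtained by translating Theorem~2.3\,(1) of \cite{KNS09} into the multiplicative conventions of this paper, and that translation is the content of Appendix~\ref{app_KFI}. The one substantive caution is that this translation is \emph{not} ``a relabeling of parameters'' after which the identity ``is immediate''. In Appendix~\ref{app_KFI} the operator $\cE_{u}^{(\mu\lvert\delta,\kappa)}$ of \cite{KNS09} matches $\cD_{x}(a\lvert q,t)$ only after the extra factors $x_{i}^{-2\ve}$ in its coefficients are absorbed by the gauge function of Lemma~\ref{lemma_parameter_p_shift}; on the dual side, the parameters $\nu_{s}$ of \cite{KNS09} specialize to $p^{\half}b_{s}$ for $s\in\{0,1\}$ and $p^{-\half}b_{s}$ for $s\in\{2,\ldots,7\}$ (not to $b_{s}$), and the dual variables enter as $e(v_{k})=p^{\half}y_{k}$, so one needs the half-period-shift conjugation of Lemma~\ref{lemma_half-period_shift} together with an auxiliary function $F(y)$ solving a $q$-difference equation; finally, $F(y)\Phi_{BC}$ agrees with $\Phi(x,y\lvert q,t)$ only up to a quasi-constant, and the multiplicative constants left in front of the two operators cancel precisely because of the balancing condition \eqref{eq_balancing_condition_KFI}. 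A dictionary run as a pure relabeling does not close; these gauge conjugations, the $p^{\half}$-shift of variables, and the quasi-constant bookkeeping \emph{are} the paper's proof.

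Your second, self-contained route is a genuinely different one --- it is the method of the original proofs in \cite{KNS09,Rui09a}, which this paper deliberately avoids redoing --- and the Liouville-type strategy (divide by $\Phi$, reduce to theta functions, match poles and $p$-quasi-periodicity in $x_{1}$, evaluate the resulting constant) is the right outline. But as written it has a gap beyond mere bookkeeping: the residue cancellations you list (forward/backward shifts at the self-interaction poles, $i$-th against $j$-th at the pairwise poles) concern only the poles of the coefficients, i.e.\ the cancellations internal to $\cD_{x}(a\lvert q,t)\Phi/\Phi$. The poles that carry the actual content of the kernel identity are those of the shifted kernel ratios, at $x_{1}\in c^{\pm1}q^{\mp1}y_{k}^{\pm}p^{\Z}$ with $c=p^{\half}q^{\half}t^{-\half}$: these occur on \emph{both} sides, and there the residue of an $x$-shift term on the left must cancel against the residue of a $y$-shift term on the right. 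This cross-cancellation between the two operators is exactly where the specific dual parameters $b_{s}=p^{\half}q^{\half}t^{\half}a_{s}^{-1}$ are used, and it is absent from your sketch. Since you also defer this analysis (``reduce to a known Riemann-type theta addition identity''), the second route as it stands is a plan rather than a proof; if you want a complete argument without carrying it out, the honest move is the paper's one, namely citing \cite{KNS09} and doing the (nontrivial) conversion of conventions.
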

\noindent(The identity was proven in the additive notation in \cite{Rui09a} when $m=n$ and for general $m$ and $n$ in \cite{KNS09}. For the convenience of the reader, we present the relation between our conventions and those of \cite{Rui09a} and \cite{KNS09} in Appendix~\ref{app_KFI}.)

The function $\Phi(x,y\lvert q ,t )$ in \eqref{eq_Cauchy_KF} is referred to as a kernel function of \emph{Cauchy type} \cite{KNS09}.
{\smallskip}

The Cauchy type kernel function can be used to construct an integral transform that maps given solutions $\varphi(y;b\lvert q , t)$ of the van Diejen operator $\cD_{y}(b\lvert q ,t )$, that are holomorphic in some $n$-dimension annulus $\mathbb{A}_{\rho}^{n}$, where
\begin{equation}
\mathbb{A}^{n}_{\rho} = \{ y \in (\C^{\ast})^{n} \lvert \ \rho \leq \abs{y_{k}} \leq \rho^{-1} \ ( k\in\{1,2,\ldots,n\})\} \quad (\rho \in (0,1]),
\label{eq_annulus}
\end{equation}
for some $\rho\in(0,\abs{q}]$, to solutions of the operator $\cD_{x}(a\lvert q, t)$, if the parameters satisfy the balancing condition \eqref{eq_balancing_condition_KFI}, which follows as a direct consequence of the kernel function identity \eqref{eq_Cauchy_KFI} and that the van Diejen operator is formally self-adjoint with respect to the symmetric $\C$-bilinear form \eqref{eq_inner_product}. We refer to these transformations as the \emph{eigenfunction transforms of Cauchy-type}. Combining this with the results of Lemmas~\ref{lemma_gauge_transform_V} and \ref{lemma_gauge_transform_U} yields two of our main results:
\begin{theorem}
\label{thm_Cauchy_eigenfunction_transform_Type_II}
Let $K\subseteq \{0,1,\ldots,7\}$ and have even cardinality, the parameters $p,q,t\in\C^\ast$ and $b = (b_{0},\ldots, b_{7}) \in (\C^{\ast})^{8}$ satisfy
\begin{equation}
\begin{cases}\abs{p} < \abs{q t }, \ &\text{ if } K \neq \emptyset \\
\abs{p} < \min( \abs{ q  t } , \abs{ q^{-1} t^{n-m-1}})\ &\text{ if } K = \emptyset
\end{cases}, \quad \abs{q}< 1, \quad \abs{t} < 1, \quad \abs{b_{s}}< 1 \ (s\notin K).
\end{equation}
Let $\varphi(y;b_{K(pq)} \lvert q , t)$ $(y=(y_1,\ldots,y_n))$ be holomorphic in a domain that contains the $n$-dimensional annulus $\mathbb{A}_{\abs{q}}^{n}$ and suppose that $\varphi(y;b_{K(pq)} \lvert q , t)$ is an eigenfunction of the van Diejen operator $\cD_{y}(b_{K(pq)} \lvert q, t)$ with eigenvalue $\Lambda\in\C$, then the function
\begin{equation}
\psi_{K}(x ; a \lvert q , t) = \int_{\mathbb{T}^{n}} d\omega_{n}(y) w_{n}(y; b \lvert q , t) \Phi(x , y \lvert q , t) U_{K}(y ; b \lvert q ) \varphi(y ; b_{K(pq)} \lvert q , t)
\end{equation}
is an eigenfunction of the van Diejen operator $\cD_{x}(a \lvert q, t)$ $($where $a=p^{\half} q^{\half} t^{\half} b^{-1}$ $)$, with the same eigenvalue $\Lambda$ and $x$ in the domain 
\begin{multline}
\label{eq_theorem_domain}
\{ x \in (\C^{\ast})^{m} \lvert \ \abs{p^{\half} q^{-\half} t^{-\half}} < \abs{ x_{i} } < \abs{ p^{-\half} q^{\half} t^{\half}} \ (i \in\{1,2,\ldots,m\}), \\
x_{i}^{\ve} \notin a_{s} p^{\Z_{<0}} q^{{\Z_{\leq 0}}} \ ( s\notin K ; \ i\in\{1,2,\ldots,m\};\ \ve\in\{\pm\}), \\
x_{i}^{\ve} x_{j}^{\ve^\prime} \notin p^{\Z_{<0}} q^{\Z_{\leq 0}} t  \ (1 \leq i < j \leq m ; \ \ve,\ve^\prime \in\{\pm\}) \},
\end{multline} 
if the parameters satisfy the balancing condition \eqref{eq_balancing_condition_KFI}.
\end{theorem}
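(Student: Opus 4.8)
The plan is to run the computation entirely at the formal level first, stringing together the three inputs already available: the Cauchy-type kernel function identity \eqref{eq_Cauchy_KFI} of Lemma~\ref{lemma_Cauchy_KFI}, the gauge relation \eqref{eq_gauge_transform_U} of Lemma~\ref{lemma_gauge_transform_U}, and the (formal) self-adjointness of the van Diejen operator with respect to the $\C$-bilinear form \eqref{eq_inner_product}. Writing $\langle f, g \rangle_{n} = \int_{\mathbb{T}^{n}} d\omega_{n}(y)\, w_{n}(y; b \lvert q , t)\, f(y) g(y)$ for the $n$-variable bilinear form attached to the parameters $b$, the transform reads $\psi_{K}(x; a \lvert q , t) = \langle \Phi(x, \cdot \lvert q , t),\, U_{K}(\cdot; b \lvert q)\, \varphi(\cdot; b_{K(pq)} \lvert q, t) \rangle_{n}$, where $\Phi(x, \cdot \lvert q, t)$ is viewed as a function of $y$ with $x$ entering as a holomorphic parameter. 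Since $\cD_{x}(a \lvert q, t)$ acts only on the variables $x$, the whole argument consists of transferring this operator, through the kernel, onto the factor $U_{K}\varphi$, where the eigenvalue equation for $\varphi$ can finally be used.

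Concretely, I would proceed as follows. Because the sole $x$-dependence of the integrand sits in $\Phi$, and granting that $\cD_{x}(a \lvert q, t)$ may be applied under the $y$-integral, I obtain $\cD_{x}(a \lvert q, t)\psi_{K} = \langle \cD_{x}(a \lvert q, t)\Phi,\, U_{K}\varphi \rangle_{n}$. The kernel function identity \eqref{eq_Cauchy_KFI}, valid under the balancing condition \eqref{eq_balancing_condition_KFI} and with the parameter relation $a = p^{\half} q^{\half} t^{\half} b^{-1}$ matching Lemma~\ref{lemma_Cauchy_KFI} exactly, then replaces the action in $x$ by the action in $y$, giving $\cD_{x}(a \lvert q, t)\Phi = \cD_{y}(b \lvert q, t)\Phi$. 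Next, formal self-adjointness of $\cD_{y}(b \lvert q, t)$ with respect to $\langle \cdot, \cdot \rangle_{n}$ moves the operator off $\Phi$ onto the second factor, $\langle \cD_{y}(b \lvert q, t)\Phi,\, U_{K}\varphi \rangle_{n} = \langle \Phi,\, \cD_{y}(b \lvert q, t)(U_{K}\varphi) \rangle_{n}$. Finally, the gauge relation \eqref{eq_gauge_transform_U} yields $\cD_{y}(b \lvert q, t)\bigl(U_{K}(y; b \lvert q)\varphi(y)\bigr) = U_{K}(y; b \lvert q)\bigl(\cD_{y}(b_{K(pq)} \lvert q, t)\varphi(y)\bigr) = \Lambda\, U_{K}(y; b \lvert q)\varphi(y)$, using that $\varphi$ is an eigenfunction of $\cD_{y}(b_{K(pq)} \lvert q, t)$. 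Substituting back gives $\cD_{x}(a \lvert q, t)\psi_{K} = \Lambda \psi_{K}$, which is the assertion.

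The real content, and the main obstacle, is the analytic justification of the two contour manipulations concealed in this chain: passing $\cD_{x}(a \lvert q, t)$ (which contains the shifts $T_{q, x_{i}}^{\pm 1}$) under the $y$-integral, and invoking self-adjointness (established only formally in Section~\ref{sec_vD_symmetries_and_properties}, where it amounts to displacing the $y_{k}$-contours by $q^{\pm 1}$ and reabsorbing the shift through the functional equations of $w_{n}$). Each step is legitimate precisely when no pole of the integrand crosses $\mathbb{T}^{n}$ under the relevant shift, and the hypotheses are engineered to guarantee exactly this. The restriction $\abs{p^{\half} q^{-\half} t^{-\half}} < \abs{x_{i}} < \abs{p^{-\half} q^{\half} t^{\half}}$ is chosen so that both $q x_{i}$ and $q^{-1} x_{i}$ keep $\Phi(x, y \lvert q, t)$ holomorphic in $y$ on a neighborhood of $\mathbb{A}_{\abs{q}}^{n} \supseteq \mathbb{T}^{n}$; the excluded lattices in \eqref{eq_theorem_domain} keep the poles of the remaining factors off the contour after these shifts; and the assumption that $\varphi(\cdot; b_{K(pq)} \lvert q, t)$ is holomorphic on a domain containing $\mathbb{A}_{\abs{q}}^{n}$ supplies precisely the room needed to translate the $y_{k}$-contours by $q^{\pm 1}$ in the self-adjointness step without meeting a singularity. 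I would therefore prove these holomorphy and pole-avoidance statements carefully, deferring the systematic treatment to Section~\ref{sec_domain_of_holomorphy}, after which the formal chain becomes rigorous and the theorem follows.
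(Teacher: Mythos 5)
Your proposal is correct and follows essentially the same route as the paper's proof: expressing $\psi_{K}$ as $\langle \Phi\, U_{K}\varphi\rangle_{w_{n}}$, interchanging $\cD_{x}(a\lvert q,t)$ with the integral, applying the Cauchy kernel identity of Lemma~\ref{lemma_Cauchy_KFI} under the balancing condition, then formal self-adjointness (Lemma~\ref{lemma_vD_formal_self-adjoint}) and the gauge relation of Lemma~\ref{lemma_gauge_transform_U} to land the eigenvalue equation for $\varphi$, with the hypotheses on \eqref{eq_theorem_domain} doing exactly the pole-avoidance work you describe. The only piece of the paper's proof you omit is the closing observation that for $K=\emptyset$ the balancing condition $b_{0}\cdots b_{7}t^{2(n-m)}=p^{2}q^{2}t^{2}$ and the bounds $\abs{b_{s}}<1$ force $\abs{p}<\abs{q^{-1}t^{n-m-1}}$, which explains the extra hypothesis in that case but is a consistency remark rather than a step in the eigenvalue argument.
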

\noindent{}(Proof of Theorem~\ref{thm_Cauchy_eigenfunction_transform_Type_II} is given in Section~\ref{sec_proofs}.)
\begin{theorem}
\label{thm_Cauchy_eigenfunction_transform_Type_I}
Let $K\subseteq \{0,1,\ldots,7\}$ and have even cardinality, the parameters \newline $p,q \in\C^\ast$ and $b = (b_{0},\ldots, b_{7}) \in (\C^{\ast})^{8}$ satisfy
\begin{equation}
\begin{cases} \abs{p} < \abs{q t } \ &\text{ if } K \neq \emptyset \\
\abs{p} < \min(\abs{q  t } , \abs{ q^{-1} t^{n-m-1}}) \ &\text{ if } K = \emptyset
\end{cases} , \quad \abs{q}< 1, \quad \abs{b_{s}}< 1 \ (s\notin K).
\end{equation}
Let $\varphi(y;b_{K(pq)} \lvert q , p q t^{-1} )$ $(y=(y_1,\ldots,y_n))$ be holomorphic in a domain that contains $\mathbb{A}^{n}_{\abs{q}}$ and suppose that $\varphi(y;b_{K(pq)} \lvert q , p q t^{-1})$ is an eigenfunction of the van Diejen operator $\cD_{y}(b_{K(pq)} \lvert q, p q t^{-1} )$ with eigenvalue $\Lambda\in\C$, then the function
\begin{multline}
\tilde{\psi}_{K}(x ; a \lvert q , t) = \int_{\mathbb{T}^{n}} d\omega_{n}(y) w_{n}(y; b \lvert q , t) \Phi(x , y \lvert q , t) \\
\cdot U_{K}(y ; b \lvert q ) V(y \lvert q , t) \varphi(y ; b_{K(pq)} \lvert q , p q t^{-1})
\end{multline}
is an eigenfunction of the van Diejen operator $\cD_{x}(a \lvert q, t)$ $($where $a=p^{\half} q^{\half} t^{\half} b^{-1}$ $)$, with the same eigenvalue $\Lambda$ and $x$ in the domain \eqref{eq_theorem_domain}, if the parameters satisfy the balancing condition \eqref{eq_balancing_condition_KFI}.
\end{theorem}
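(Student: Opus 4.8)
The plan is to follow the strategy used for Theorem~\ref{thm_Cauchy_eigenfunction_transform_Type_II}, now inserting the extra gauge factor $V(y\lvert q,t)$ and exploiting that it converts the coupling $t$ into $pqt^{-1}$. Concretely, I would apply the van Diejen operator $\cD_{x}(a\lvert q,t)$ to $\tilde\psi_{K}(x;a\lvert q,t)$ and move it under the integral sign; since $\cD_{x}(a\lvert q,t)$ acts only on the variables $x$, it meets only the kernel function $\Phi(x,y\lvert q,t)$. The balancing condition~\eqref{eq_balancing_condition_KFI} is in force, so Lemma~\ref{lemma_Cauchy_KFI} lets me replace $\cD_{x}(a\lvert q,t)\Phi(x,y\lvert q,t)$ by $\cD_{y}(b\lvert q,t)\Phi(x,y\lvert q,t)$, turning the $x$-operator into the $y$-operator with parameters $b=p^{\half}q^{\half}t^{\half}a^{-1}$.

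Next I would invoke the formal self-adjointness of $\cD_{y}(b\lvert q,t)$ with respect to the $\C$-bilinear form~\eqref{eq_inner_product} with weight $w_{n}(y;b\lvert q,t)$ to transfer the operator from $\Phi(x,y\lvert q,t)$ onto the remaining factor $U_{K}(y;b\lvert q)V(y\lvert q,t)\varphi(y;b_{K(pq)}\lvert q,pqt^{-1})$. The two gauge identities then untangle this last expression: by Lemma~\ref{lemma_gauge_transform_U}, $\cD_{y}(b\lvert q,t)\circ U_{K}(y;b\lvert q)=U_{K}(y;b\lvert q)\circ\cD_{y}(b_{K(pq)}\lvert q,t)$, and by Lemma~\ref{lemma_gauge_transform_V}, $\cD_{y}(b_{K(pq)}\lvert q,t)\circ V(y\lvert q,t)=V(y\lvert q,t)\circ\cD_{y}(b_{K(pq)}\lvert q,pqt^{-1})$, so that altogether $\cD_{y}(b\lvert q,t)\circ U_{K}(y;b\lvert q)\circ V(y\lvert q,t)=U_{K}(y;b\lvert q)\circ V(y\lvert q,t)\circ\cD_{y}(b_{K(pq)}\lvert q,pqt^{-1})$. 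Because $\varphi(y;b_{K(pq)}\lvert q,pqt^{-1})$ is by hypothesis an eigenfunction of $\cD_{y}(b_{K(pq)}\lvert q,pqt^{-1})$ with eigenvalue $\Lambda$, the operator merely reproduces the factor $\Lambda\,U_{K}(y;b\lvert q)V(y\lvert q,t)\varphi(y;b_{K(pq)}\lvert q,pqt^{-1})$, and reassembling the integral gives $\cD_{x}(a\lvert q,t)\tilde\psi_{K}=\Lambda\tilde\psi_{K}$, as desired. (Equivalently, one could set $\tilde\varphi=V(y\lvert q,t)\varphi$, note by Proposition~\ref{prop_gauge_transf_V} that $\tilde\varphi$ is an eigenfunction of $\cD_{y}(b_{K(pq)}\lvert q,t)$, and invoke Theorem~\ref{thm_Cauchy_eigenfunction_transform_Type_II}; but this detour inherits the hypothesis $\abs{t}<1$, which I want to avoid.)

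The main obstacle is analytic rather than algebraic: every formal step above must be rigorously justified. I would first check that the integrand is holomorphic on the closed annulus $\mathbb{A}^{n}_{\abs{q}}$ of~\eqref{eq_annulus}, since the $q$-shifts $T_{q,y_{k}}^{\pm1}$ hidden in $\cD_{y}(b\lvert q,t)$ and the contour deformations underlying self-adjointness must not cross any poles; the hypotheses $\abs{q}<1$, $\abs{b_{s}}<1$ ($s\notin K$) and $\abs{p}<\abs{qt}$ are precisely what guarantee this, together with the assumed holomorphy of $\varphi$ on a neighbourhood of $\mathbb{A}^{n}_{\abs{q}}$. A useful simplification is that $w_{n}(y;b\lvert q,t)V(y\lvert q,t)$ collapses to the Type I weight $\prod_{1\le k\le n}\frac{\prod_{0\le s\le7}\Gamma(b_{s}y_{k}^{\pm};p,q)}{\Gamma(y_{k}^{\pm2};p,q)}\prod_{k<l}\frac{1}{\Gamma(y_{k}^{\pm}y_{l}^{\pm};p,q)}$, whose cross terms no longer depend on $t$; this is exactly why the condition $\abs{t}<1$ needed in Theorem~\ref{thm_Cauchy_eigenfunction_transform_Type_II} can be dropped here. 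Finally I would justify passing $\cD_{x}(a\lvert q,t)$ under the integral and identify the $x$-domain: $\Phi(x,y\lvert q,t)$ stays holomorphic for $\abs{p^{\half}q^{-\half}t^{-\half}}<\abs{x_{i}}<\abs{p^{-\half}q^{\half}t^{\half}}$, while the remaining exclusions in~\eqref{eq_theorem_domain} record the pole loci that the analytic continuation in $x$ must avoid. The detailed pole-tracking and contour-shift estimates are carried out in Section~\ref{sec_domain_of_holomorphy}.
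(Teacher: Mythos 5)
Your proposal is correct and follows essentially the same route as the paper's proof: pass $\cD_{x}(a\lvert q,t)$ under the integral, apply the Cauchy kernel identity of Lemma~\ref{lemma_Cauchy_KFI} under the balancing condition~\eqref{eq_balancing_condition_KFI}, use formal self-adjointness to move $\cD_{y}(b\lvert q,t)$ onto $U_{K}V\varphi$, then unwind via Lemmas~\ref{lemma_gauge_transform_U} and~\ref{lemma_gauge_transform_V} and invoke the eigenfunction property of $\varphi$. Your side observation that $w_{n}(y;b\lvert q,t)V(y\lvert q,t)$ collapses to the Type~I weight, which is why the hypothesis $\abs{t}<1$ of Theorem~\ref{thm_Cauchy_eigenfunction_transform_Type_II} can be dropped here, is consistent with (and makes explicit) the paper's brief remark that the integral is well-defined by the same arguments as before.
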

\noindent{}(Proof of Theorem~\ref{thm_Cauchy_eigenfunction_transform_Type_I} is given in Section~\ref{sec_proofs}.)
{\medskip}

Based on these results, it is possible to obtain several different types of transformations for the eigenfunctions of the van Diejen operator. We have decided to collect these transformations in Appendix~\ref{app_transforms} for the convenience of the reader, but it is worth pointing out that some of the eigenfunction transformations could possibly be used to construct simultaneous shift operators for the parameter space: The integral transformation in \eqref{eq_Cauchy_transf_eigenfunctions_4} for $I=\emptyset$ and $J=\{0,1,\ldots,7\}$ yields a transformation of the eigenfunctions with parameters $a=(a_0 , \ldots , a_7)$ to eigenfunctions with parameters $t^{-\half} a=(t^{-\half}a_{0},\ldots,t^{-\half}a_{7})$. This can also be verified by direct calculations as the function
\begin{equation}
\mathcal{K}(x,y;a\lvert q, t) = V(x \lvert q , t) \Phi(x , y \lvert q, p q t^{-1}  ) U_{\{0,1,\ldots,7\}}(y ; t^{\half} a \lvert q ) V(y \lvert q , t )
\end{equation}
satisfies the kernel function identity
\begin{equation}
\cD_{x}{(a \lvert q, t )} \mathcal{K}(x , y ;a \lvert q , t ) = \cD_{y}{(t^{\half} a \lvert q , t )} \mathcal{K}(x , y;a \lvert q , t )
\end{equation} 
if the parameters satisfy the condition 
\begin{equation}
a_{0} \cdots a_{7}  p^{2(m-n)} q^{2(m-n)} t^{{2(n-m)}}  = p^{4} q^{4} t^{-2}.
\end{equation}
(Note that there is a difference in the parameters for the kernel function $\mathcal{K}(x,y ;a \lvert q, t)$ and the integrand in \eqref{eq_Cauchy_transf_eigenfunctions_4} for $I=\emptyset$ and $J=\{0,1,\ldots,7\}$.) It should be possible to use the kernel function above, or equivalently Eq.~\eqref{eq_Cauchy_transf_eigenfunctions_4}, to construct integral operators that simultaneously shifts the parameters $a \in(\C^\ast)^{8}$ by $t^{\half k}$ with $k\in\Z$, by iterating the integral transform. However, each transformation yields an additional constraint on the parameters due to the shifts and these simultaneous shift operators cannot be obtained by simply applying the transformation several times. 

Finally, we wish to stress that the inclusion of $x$-dependent gauge functions $U_{I}(x;a\lvert q)$ for $I\neq \emptyset$ and $V(x\lvert q ,t)$ in \eqref{eq_Cauchy_transf_eigenfunctions_1}-\eqref{eq_Cauchy_transf_eigenfunctions_4} also changes the balancing condition, as can be seen above, and that Eqs. \eqref{eq_Cauchy_transf_eigenfunctions_1}-\eqref{eq_Cauchy_transf_eigenfunctions_4} require different balancing conditions than \eqref{eq_balancing_condition_KFI}; see Appendix~\ref{app_transforms} for details. %Another example is given by $H(x,y;a\lvert q, t)=U_{K}(x;a\lvert q) \Phi(x,y\lvert q, t)$, where $K\subseteq\{0,\ldots,7\}$ has even cardinality, which satisfies the kernel function identity
%\begin{equation}
%D_{x}(a\lvert q , t) H(x , y ;a \lvert q, t ) = D_{y}(b_{K(t)}\lvert q , t) H(x, y ;a \lvert q, t)
%\end{equation} 
%if the parameters satisfy the condition 
%\begin{equation}
%(\prod_{s\in K} p q a_{s}^{-1}) (\prod_{s\notin K} a_{s}) t^{2(m-n)} = p^{2} q^{2} t^{2}.
%\end{equation}
%The four integrals above allows us to construct $2^{16}$ transformations for the eigenfunctions of the van Diejen operator with seemingly $10$ distinct\footnote{Distinct in the sense that they cannot be obtained by the action of $\mathfrak{S}_{8}$ on the parameter space.} balancing conditions. 
{\medskip}

The results of \cite{KNS09} allow us to introduce a second type of integral transformation using the kernel function of \emph{dual-Cauchy type}. (Again, we are only presenting the results in a form suitable for our purposes.) 
\begin{lemma}[Theorem~2.3 (2) of \cite{KNS09}]
\label{lemma_dual-Cauchy_KFI}
Under the balancing condition
\begin{equation}
\label{eq_dual_balancing_condition}
a_{0} \cdots a_{7} q^{2n} t^{2m} = p^{2} q^{2} t^{2},
\end{equation}
the function 
\begin{equation}
{\tPhi}(x,y) = \prod_{1 \leq i \leq m} \prod_{1 \leq k \leq n} x_{i}^{-1} \theta( x_{i} y_{k}^{\pm} ; p)
\label{eq_dual_Cauchy_KF}
\end{equation}
satisfies the functional identity
\begin{equation}
\label{eq_dual_Cauchy_KFI}
t^{-m} \theta( t ; p ) \cD_{x}( a \lvert q , t ) {\tPhi}(x,y) = - q^{-n} \theta( q ;p) \cD_{y}(a \lvert t , q ) {\tPhi}(x,y).
\end{equation}
\end{lemma}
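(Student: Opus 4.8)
The plan is to divide the claimed identity by the kernel $\tPhi(x,y)$, turning the operator identity into a single scalar functional identity, and then to verify that identity by the standard Liouville argument: regard the difference of the two sides as a meromorphic function of one variable, show it is elliptic with no poles (hence constant), and fix the constant. The structure that makes this work is that $\cD_{x}(a\lvert q,t)$ shifts each $x_{i}$ by $q$ while $\cD_{y}(a\lvert t,q)$ shifts each $y_{k}$ by $t$ (the roles of shift parameter and coupling being interchanged), and that $\tPhi$ is a product over the pairs $(i,k)$, so all four shift ratios factorise cleanly.

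Concretely, I would first record the elementary ratios
\[
\frac{T_{q,x_{i}}^{\pm1}\tPhi}{\tPhi}=q^{\mp n}\prod_{1\le k\le n}\frac{\theta(q^{\pm1}x_{i}y_{k};p)\,\theta(q^{\pm1}x_{i}y_{k}^{-1};p)}{\theta(x_{i}y_{k};p)\,\theta(x_{i}y_{k}^{-1};p)},\qquad
\frac{T_{t,y_{k}}^{\pm1}\tPhi}{\tPhi}=\prod_{1\le i\le m}\frac{\theta(t^{\pm1}x_{i}y_{k};p)\,\theta(t^{\mp1}x_{i}y_{k}^{-1};p)}{\theta(x_{i}y_{k};p)\,\theta(x_{i}y_{k}^{-1};p)}.
\]
Substituting these, the identity is equivalent to $G(x,y)\equiv 0$, where $G=t^{-m}\theta(t;p)\,\bigl(\cD_{x}(a\lvert q,t)\tPhi\bigr)/\tPhi+q^{-n}\theta(q;p)\,\bigl(\cD_{y}(a\lvert t,q)\tPhi\bigr)/\tPhi$ and the two operator actions are expanded using their coefficients $A_{i}^{\pm}$, the $q\leftrightarrow t$ swapped coefficients $B_{k}^{\pm}$, and the zeroth-order terms.

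I would then fix $y$ and all $x_{j}$ with $j\neq 1$ and treat $G$ as a function of $z=x_{1}\in\C^{\ast}$. Two things must be checked. First, ellipticity: under $z\mapsto pz$ every theta factor transforms by $\theta(pw;p)=-w^{-1}\theta(w;p)$, and the resulting monomial multipliers cancel globally precisely because of the balancing condition $a_{0}\cdots a_{7}q^{2n}t^{2m}=p^{2}q^{2}t^{2}$ (this is exactly the role that condition plays), so $G$ descends to the torus $\C^{\ast}/p^{\Z}$. Second, holomorphy: the candidate poles of $G$ in $z$ sit at the zeros of $\theta(z^{\pm2},qz^{\pm2};p)$ coming from $A_{1}^{\pm}$, at $z=x_{j}^{\pm1}$ from the pairwise factors $\theta(z^{\ve}x_{j}^{\pm};p)$, and at $z=y_{k}^{\pm1}$ from the common denominators $\theta(zy_{k}^{\pm};p)$; at each locus the residues of the two or three terms sharing that pole must cancel. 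Once $G$ is holomorphic and elliptic in $z$ it is constant in $z$, and repeating this for every variable (using that both $\tPhi$ and the operators are $W_{m}\times W_{n}$-equivariant) shows $G$ is a global constant, which I would then evaluate at a degenerate point to obtain $0$.

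The main obstacle is the residue cancellation, and within it the cross-cancellation at the coincidence loci $z=y_{k}^{\pm1}$: there the residue of the $x$-shift ratio in the $A$-block must be balanced against the residue of the $t$-shift ratio in the $B$-block, and it is exactly here that the asymmetric prefactors $t^{-m}\theta(t;p)$ and $-q^{-n}\theta(q;p)$, together with the balancing condition, are forced. All of these residue identities are instances of the Weierstrass (Riemann) three-term theta relation
\[
\theta(uw,u/w,vz,v/z;p)-\theta(uz,u/z,vw,v/w;p)=\tfrac{v}{w}\,\theta(uv,u/v,wz,w/z;p),
\]
so an alternative and arguably cleaner route is to bypass Liouville and verify the reduced identity by repeated application of this relation, telescoping the factorised products to reduce the general $(m,n)$ case to the rank-one building block $m=n=1$. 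A secondary point to pin down is the explicit form of the zeroth-order terms $A^{0}$ and $B^{0}$ (defined in Section~\ref{sec_vD_symmetries_and_properties}), since they contribute to the $z$-independent part of $G$ and must match for the cancellation to close.
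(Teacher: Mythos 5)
Your route is genuinely different from the paper's: the paper never proves Lemma~\ref{lemma_dual-Cauchy_KFI} by direct computation, but imports it as Theorem~2.3\,(2) of \cite{KNS09} and, in Appendix~\ref{app_KFI}, supplies the dictionary (the specialization \eqref{eq_odd_function_specialization} of $[u]$, the parameter correspondence \eqref{eq_relations}, and the proportionality \eqref{eq_vd_op_second_step}) that turns the additive identity \eqref{eq_KNS_KFI_dual} into \eqref{eq_dual_Cauchy_KFI}. Your Liouville-type argument is therefore a self-contained alternative, and its skeleton is sound: the shift ratios you record are correct; the $A$-block ratio $(\cD_{x}(a\lvert q,t)\tPhi)/\tPhi$ is $p$-periodic in $z=x_{1}$ precisely when \eqref{eq_dual_balancing_condition} holds, while the $B$-block ratio is $p$-periodic in $z$ automatically; and the cross-cancellation at $z=y_{k}^{\pm1}$ is indeed what forces the prefactors. (For $m=n=1$ one can check directly that the residue bracket coming from $A_{1}^{\pm}$ is proportional to $q^{-1}\theta(q;p)$ and the one coming from $B_{1}^{\pm}$ to $-t^{-1}\theta(t;p)$, with a common cofactor, so the two blocks cancel exactly with the weights $t^{-m}\theta(t;p)$ and $q^{-n}\theta(q;p)$.)

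There is, however, a genuine gap in your treatment of the zeroth-order coefficients. You assert that $A^{0}$ and $B^{0}$ ``contribute to the $z$-independent part of $G$''. That is true for $B^{0}(y;a\lvert t,q)$, which involves no $x$-variable, but false for $A^{0}(x;a\lvert q,t)$: by Remark~\ref{remark_zeroth_order_coefficient} it is a nonconstant elliptic function of $z=x_{1}$ with simple poles at $z\in \pm p^{\half\Z}q^{\pm\half}$, which (up to $p$-shifts) are exactly the zeros of $\theta(qz^{\pm2};p)$ that you attribute solely to $A_{1}^{\pm}$. At $z=\pm q^{-\half}$, say, the only singular terms of $G$ are $A_{1}^{+}\cdot(T_{q,x_{1}}^{+}\tPhi/\tPhi)$ and $A^{0}$ --- the $B$-block is regular there --- so without $A^{0}$'s poles your $G$ is simply not holomorphic and the Liouville step collapses. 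The repair is the residue condition \eqref{eq_zeroth_order_residue_condition}: the value of $T_{q,x_{1}}^{-\ve}\tPhi/\tPhi$ at $z=\pm p^{\half\ell}q^{\half\ve}$ is $q^{\ve\ell n}$, and under \eqref{eq_dual_balancing_condition} this equals $\bigl(p^{2}q^{2}t^{2}/a_{0}\cdots a_{7}t^{2m}\bigr)^{\half\ve\ell}$, so the $A^{0}$ poles cancel against the $A_{1}^{\mp}$ terms; note this is a second, independent place where the balancing condition enters, beyond ellipticity. Relatedly, your closing step (``evaluate at a degenerate point to obtain $0$'') is not free: the residue data fix $A^{0}$ only up to an additive constant, so killing the constant requires the specific normalization \eqref{eq_vD_external_potential_1}--\eqref{eq_vD_external_potential_2} (the $L^{(m)}_{r}$ terms) of $A^{0}$ and the analogous normalization of $B^{0}$, together with an honest evaluation at a special point; with a different normalization the identity would only hold up to an additive constant, exactly as in the remark on modular pairs following Theorem~\ref{thm_dual-Cauchy_eigenfunction_transform_Type_I}.
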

By the same arguments that was presented above, the kernel function of dual-Cauchy type can also be used to construct integral transformations for the eigenfunctions of the van Diejen operator and we obtain the following results:

\begin{theorem}
\label{thm_dual-Cauchy_eigenfunction_transform_Type_II}
Let $K \subseteq \{0,1,\ldots,7\}$ have even cardinality and the model parameters $p,q,t \in \C^{\ast}$, and $a = (a_0 ,\ldots, a_7)\in(\C^{\ast})^{8}$ satisfy
\begin{equation}
\begin{cases}\abs{p} < 1 &\text{if } K\neq \emptyset\\
\abs{p} < \abs{q^{n-1} t^{m-1}} &\text{if } K = \emptyset
\end{cases} , \quad  \abs{q} <1 , \quad \abs{t} < 1, \quad \abs{a_{s}} < 1 \ (s\notin K).
\end{equation}
Let $\varphi( y ; a_{K(p t)} \lvert t , q)$ $(y=(y_1,\ldots,y_n))$ be holomorphic in a domain that contains the $n$-dimensional annulus $\mathbb{A}^{n}_{\abs{t}}$ and suppose that $\varphi( y ;  a_{K(p t)} \lvert t , q)$ is an eigenfunction of the van Diejen operator $\cD_{y}( a_{K(p t)} \lvert t , q)$ with eigenvalue $\Lambda \in \C$, then the function
\begin{equation}
\psi_{K}^{\vee}(x ; a \lvert q , t) = \int_{\mathbb{T}^{n}} d\omega_{n}(y) w_{n}(y ; a \lvert t , q ) \Phi^{\vee}(x ,y ) U_{K}(y ; a \lvert t ) \varphi( y ;  a_{K(p t)} \lvert t , q)
\end{equation}
is an eigenfunction of the van Diejen operator $\cD_{x}(a \lvert q , t)$, with eigenvalue 
\begin{equation}
\label{eq_dual-Cauchy_eigenvalue}
-( q^{-n}t^{m} \theta( q ;p) / \theta(t;p))\Lambda
\end{equation}
and $x\in(\C^{\ast})^{m}$, if the parameters satisfy the balancing condition \eqref{eq_dual_balancing_condition}.
\end{theorem}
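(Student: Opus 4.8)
The plan is to repeat the kernel-function-plus-self-adjointness scheme underlying the Cauchy-type results, now with the dual-Cauchy kernel $\Phi^{\vee}(x,y)$ of \eqref{eq_dual_Cauchy_KF}. The integrand of $\psi_{K}^{\vee}$ is the product of the weight $w_{n}(y;a\lvert t,q)$, the kernel $\Phi^{\vee}(x,y)$, the gauge factor $U_{K}(y;a\lvert t)$, and the eigenfunction $\varphi(y;a_{K(pt)}\lvert t,q)$. Since $\Phi^{\vee}(x,y)=\prod_{i,k}x_{i}^{-1}\theta(x_{i}y_{k}^{\pm};p)$ is a finite product of theta functions, it is entire in both $x$ and $y$; consequently $\cD_{x}(a\lvert q,t)$, which acts only on the $x$-variables and commutes with the integration over $y$, may be applied under the integral sign for every $x\in(\C^{\ast})^{m}$. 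This is why the admissible $x$-domain is unrestricted here, in contrast with Theorems~\ref{thm_Cauchy_eigenfunction_transform_Type_II}--\ref{thm_Cauchy_eigenfunction_transform_Type_I}.

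First I would apply $\cD_{x}(a\lvert q,t)$ inside the integral and invoke the dual-Cauchy kernel function identity of Lemma~\ref{lemma_dual-Cauchy_KFI}, which holds precisely under the balancing condition \eqref{eq_dual_balancing_condition}. This replaces $\cD_{x}(a\lvert q,t)\Phi^{\vee}(x,y)$ by $-q^{-n}t^{m}(\theta(q;p)/\theta(t;p))\,\cD_{y}(a\lvert t,q)\Phi^{\vee}(x,y)$; the scalar prefactor pulled out of the integral is exactly the eigenvalue multiplier \eqref{eq_dual-Cauchy_eigenvalue}.

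Next I would transfer the difference operator $\cD_{y}(a\lvert t,q)$ off the kernel and onto the factor $U_{K}(y;a\lvert t)\varphi(y;a_{K(pt)}\lvert t,q)$ by the formal self-adjointness of the van Diejen operator with respect to the bilinear form \eqref{eq_inner_product}; the operator $\cD_{y}(a\lvert t,q)$ (shift $t$, coupling $q$) is symmetric with respect to exactly the weight $w_{n}(y;a\lvert t,q)$ occurring in the integrand. The gauge relation of Lemma~\ref{lemma_gauge_transform_U}, read with the roles of $q$ and $t$ interchanged, then gives $\cD_{y}(a\lvert t,q)\circ U_{K}(y;a\lvert t)=U_{K}(y;a\lvert t)\circ\cD_{y}(a_{K(pt)}\lvert t,q)$, so that the inner operator acts on $\varphi$ as $\cD_{y}(a_{K(pt)}\lvert t,q)$. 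As $\varphi(y;a_{K(pt)}\lvert t,q)$ is by hypothesis an eigenfunction of $\cD_{y}(a_{K(pt)}\lvert t,q)$ with eigenvalue $\Lambda$, this produces the scalar $\Lambda$, and the remaining integral is again $\psi_{K}^{\vee}(x;a\lvert q,t)$, giving the asserted eigenvalue $-q^{-n}t^{m}(\theta(q;p)/\theta(t;p))\Lambda$.

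The main obstacle is the rigorous justification of the self-adjointness step together with the interchange of $\cD_{x}$ and the integral. Moving the $t$-shift operators $T_{t,y_{k}}^{\pm1}$ across the cycle $\mathbb{T}^{n}$ is legitimate only if each intermediate integrand is holomorphic in the annular region swept as the contour is deformed from $\abs{y_{k}}=1$ to $\abs{y_{k}}=\abs{t}^{\mp 1}$, so that no residues of the poles of $w_{n}$, of $U_{K}$, or of $\varphi$ are collected. This is precisely what the analytic hypotheses encode: holomorphy of $\varphi$ on a neighbourhood of $\mathbb{A}^{n}_{\abs{t}}$, together with $\abs{q},\abs{t}<1$ and $\abs{a_{s}}<1$ for $s\notin K$. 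The sharper bound $\abs{p}<\abs{q^{n-1}t^{m-1}}$ is needed only for $K=\emptyset$, since then $U_{K}$ is trivial and all eight external factors $\Gamma(a_{s}y_{k}^{\pm};p,t)$ of the weight are present, whereas for $K\neq\emptyset$ the factor $U_{K}$ cancels those with $s\in K$ and the weaker bound $\abs{p}<1$ suffices. Verifying that the swept region remains pole-free, which is the bulk of the technical work, is carried out in the analysis of Section~\ref{sec_domain_of_holomorphy}.
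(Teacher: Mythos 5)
Your proposal is correct and follows essentially the same route as the paper's proof: apply $\cD_{x}(a\lvert q,t)$ under the integral (legitimate because $\Phi^{\vee}$ is entire in $x$ and $y$), invoke the dual-Cauchy kernel identity of Lemma~\ref{lemma_dual-Cauchy_KFI} under the balancing condition \eqref{eq_dual_balancing_condition}, move $\cD_{y}(a\lvert t,q)$ off the kernel by formal self-adjointness (Lemma~\ref{lemma_vD_formal_self-adjoint}) with respect to the weight $w_{n}(y;a\lvert t,q)$, pass it through $U_{K}(y;a\lvert t)$ via Lemma~\ref{lemma_gauge_transform_U} with $q$ and $t$ interchanged, and use the eigenvalue equation for $\varphi$.

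One correction to a side remark: your explanation of the bound $\abs{p}<\abs{q^{n-1}t^{m-1}}$ for $K=\emptyset$ is not the right mechanism. It is not a pole-freeness or contour-deformation requirement tied to the presence of all eight factors $\Gamma(a_{s}y_{k}^{\pm};p,t)$; it is a consistency condition between the hypotheses. When $K=\emptyset$ all eight parameters satisfy $\abs{a_{s}}<1$, hence $\abs{a_{0}\cdots a_{7}}<1$, and the dual balancing condition $a_{0}\cdots a_{7}q^{2n}t^{2m}=p^{2}q^{2}t^{2}$ then forces $\abs{p^{2}q^{2}t^{2}}<\abs{q^{2n}t^{2m}}$, i.e. $\abs{p}<\abs{q^{n-1}t^{m-1}}$; without this bound the hypothesis set would be empty. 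When $K\neq\emptyset$ the moduli of $a_{s}$ for $s\in K$ are unconstrained, so the balancing condition can be met with no restriction on $p$ beyond $\abs{p}<1$. This does not affect the validity of your main argument, which is sound.
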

\noindent{}(Proof of Theorem~\ref{thm_dual-Cauchy_eigenfunction_transform_Type_II} is given in Section~\ref{sec_proofs}.)
\begin{theorem}
\label{thm_dual-Cauchy_eigenfunction_transform_Type_I}
Let $K \subseteq \{0,1,\ldots,7\}$ have even cardinality and the model parameters $p,q,t \in \C^{\ast}$, and $a = (a_0 ,\ldots, a_7)\in(\C^{\ast})^{8}$ satisfy
\begin{equation}
\begin{cases}\abs{p} < 1 &\text{if } K\neq \emptyset\\
\abs{p} < \abs{q^{n-1} t^{m-1}} &\text{if } K = \emptyset
\end{cases} , 
 \quad \abs{t} < 1, \quad \abs{a_{s}} < 1 \ (s\notin K).
\end{equation}
Let $\varphi( y ; a_{K(p t)} \lvert t , p q^{-1} t )$ $(y=(y_1,\ldots,y_n))$ be holomorphic in a domain that contains the $n$-dimensional annulus $\mathbb{A}^{n}_{\abs{t}}$ and suppose that $\varphi( y ;  a_{K(p t)} \lvert t ,  p q^{-1} t )$ is an eigenfunction of the van Diejen operator $\cD_{y}( a_{K(p t)} \lvert t ,  p  q^{-1} t )$ with eigenvalue $\Lambda \in \C$, then the function
\begin{multline}
\tilde{\psi}_{K}^{\vee}(x ; a \lvert q , t) = \int_{\mathbb{T}^{n}} d\omega_{n}(y) w_{n}(y ; a \lvert t , q ) \Phi^{\vee}(x ,y ) \\
\cdot U_{K}(y ; a \lvert t ) V(y \lvert t , q ) \varphi( y ;  a_{K(p t)} \lvert t ,  p q^{-1} t )
\end{multline}
is an eigenfunction of the van Diejen operator $\cD_{x}(a \lvert q , t)$, with eigenvalue in \eqref{eq_dual-Cauchy_eigenvalue}
%\begin{equation}
%-( t^{m} \theta( q ;p) / q^{n} \theta(t;p))\Lambda
%\end{equation}
and $x\in(\C^{\ast})^{m}$, if the parameters satisfy the balancing condition \eqref{eq_dual_balancing_condition}.
\end{theorem}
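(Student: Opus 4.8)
The plan is to prove Theorem~\ref{thm_dual-Cauchy_eigenfunction_transform_Type_I} by mirroring the argument for Theorem~\ref{thm_dual-Cauchy_eigenfunction_transform_Type_II}, the only genuinely new ingredient being the insertion of the extra gauge factor $V(y\lvert t,q)$, which accounts for the shifted coupling $pq^{-1}t$ appearing in the hypothesis on $\varphi$. Since the dual-Cauchy kernel $\tPhi(x,y)$ in \eqref{eq_dual_Cauchy_KF} is a product of multiplicative theta functions, it is holomorphic on $(\C^{\ast})^{m}$ as a function of $x$; this is precisely what lets $x$ range over all of $(\C^{\ast})^{m}$ and removes any $x$-side domain restriction. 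First I would apply $\cD_{x}(a\lvert q,t)$ to $\tilde{\psi}_{K}^{\vee}(x;a\lvert q,t)$ under the integral sign; since the only $x$-dependence of the integrand sits in $\tPhi(x,y)$, this produces $\int_{\mathbb{T}^{n}}d\omega_{n}(y)\,w_{n}(y;a\lvert t,q)\,[\cD_{x}(a\lvert q,t)\tPhi(x,y)]\,U_{K}(y;a\lvert t)V(y\lvert t,q)\varphi$.

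Next I would invoke the dual-Cauchy kernel function identity of Lemma~\ref{lemma_dual-Cauchy_KFI} (valid under the balancing condition \eqref{eq_dual_balancing_condition}) to replace $\cD_{x}(a\lvert q,t)\tPhi(x,y)$ by the scalar multiple $-q^{-n}t^{m}(\theta(q;p)/\theta(t;p))\,\cD_{y}(a\lvert t,q)\tPhi(x,y)$, thereby transferring the differentiation onto the $y$-variables. The formal self-adjointness of the van Diejen operator with respect to the $\C$-bilinear form \eqref{eq_inner_product}---here applied to $\cD_{y}(a\lvert t,q)$ with the weight $w_{n}(y;a\lvert t,q)$, which is exactly the weight present in the integrand---then lets me move $\cD_{y}(a\lvert t,q)$ off $\tPhi$ and onto the remaining factor $U_{K}(y;a\lvert t)V(y\lvert t,q)\varphi$.

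The algebraic heart of the computation is to evaluate $\cD_{y}(a\lvert t,q)\bigl(U_{K}(y;a\lvert t)V(y\lvert t,q)\varphi(y;a_{K(pt)}\lvert t,pq^{-1}t)\bigr)$ using the two gauge transformations read with the roles of the shift and coupling parameters interchanged. By Lemma~\ref{lemma_gauge_transform_U} (with shift $t$), conjugation by $U_{K}(y;a\lvert t)$ converts $\cD_{y}(a\lvert t,q)$ into $\cD_{y}(a_{K(pt)}\lvert t,q)$; by Lemma~\ref{lemma_gauge_transform_V} (with shift $t$ and coupling $q\mapsto ptq^{-1}=pq^{-1}t$), the subsequent conjugation by $V(y\lvert t,q)$ converts this into $\cD_{y}(a_{K(pt)}\lvert t,pq^{-1}t)$. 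Since $\varphi(y;a_{K(pt)}\lvert t,pq^{-1}t)$ is by hypothesis an eigenfunction of exactly this operator with eigenvalue $\Lambda$, I obtain $\cD_{y}(a\lvert t,q)(U_{K}V\varphi)=\Lambda\,U_{K}V\varphi$. Substituting back and collecting the scalar from the kernel identity yields $\cD_{x}(a\lvert q,t)\tilde{\psi}_{K}^{\vee}=-q^{-n}t^{m}(\theta(q;p)/\theta(t;p))\,\Lambda\,\tilde{\psi}_{K}^{\vee}$, which is the claimed eigenvalue \eqref{eq_dual-Cauchy_eigenvalue}.

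The main obstacle is analytic rather than algebraic: justifying the interchange of $\cD_{x}$ with the integral and, above all, making the formal self-adjointness rigorous. Because each shift operator $T_{t,y_{k}}$ occurring in $\cD_{y}(a\lvert t,q)$ translates the integration cycle $\mathbb{T}^{n}$ to the circles $\abs{y_{k}}=\abs{t}^{\pm1}$, moving the operator across the bilinear form amounts to deforming contours, and one must verify that no poles of $w_{n}(y;a\lvert t,q)$, of the gauge factors $U_{K}(y;a\lvert t)$ and $V(y\lvert t,q)$, or of $\varphi$ are crossed in the process. This is exactly what the parameter constraints $\abs{p}<1$ (or the refined bound $\abs{p}<\abs{q^{n-1}t^{m-1}}$ when $K=\emptyset$), $\abs{t}<1$, and $\abs{a_{s}}<1$ for $s\notin K$, together with the holomorphy of $\varphi$ on the annulus $\mathbb{A}^{n}_{\abs{t}}$ (with $\abs{t}$ precisely controlling the range of the $t$-shift), are designed to guarantee. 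Since the same pole-tracking bookkeeping already underlies the proof of Theorem~\ref{thm_dual-Cauchy_eigenfunction_transform_Type_II}, I would adapt that analysis verbatim, taking care to include the additional factor $V(y\lvert t,q)$ among the functions whose poles must be monitored during the contour deformations.
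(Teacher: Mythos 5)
Your proposal is correct and follows essentially the same route as the paper's proof: interchange $\cD_{x}(a\lvert q,t)$ with the integral (justified by the holomorphy of $\tPhi(x,y)$ in $x$), apply the dual-Cauchy kernel identity of Lemma~\ref{lemma_dual-Cauchy_KFI}, use formal self-adjointness to move $\cD_{y}(a\lvert t,q)$ onto $U_{K}V\varphi$, and then peel off the two gauge factors via Lemmas~\ref{lemma_gauge_transform_U} and \ref{lemma_gauge_transform_V} (read with $t$ as shift parameter) to reach $\cD_{y}(a_{K(pt)}\lvert t,pq^{-1}t)$ and the eigenvalue \eqref{eq_dual-Cauchy_eigenvalue}. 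The only minor quibble is that the refined bound $\abs{p}<\abs{q^{n-1}t^{m-1}}$ for $K=\emptyset$ is needed in the paper to avoid a contradiction between the balancing condition \eqref{eq_dual_balancing_condition} and $\abs{a_{s}}<1$, rather than for the contour-deformation bookkeeping you ascribe it to.
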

\noindent{}(Proof of Theorem~\ref{thm_dual-Cauchy_eigenfunction_transform_Type_I} is given in Section~\ref{sec_proofs}.)

Using the eigenfunction transformations with the dual-Cauchy kernel allows us to construct additional eigenfunction transformations for the eigenfunctions of the van Diejen operator. We also collect these transformations in Appendix~\ref{app_transforms}.%These transformations are also collected in Appendix~\ref{app_transforms}.

The kernel function $\tPhi(x,y)$ is clearly holomorphic on $(\C^\ast)^{m}\times(\C^{\ast})^{n}$ so the integral transforms yields holomorphic functions of $x\in(\C^{\ast})^{m}$. Furthermore, this kernel function has a known expansion in terms of the elliptic interpolation functions of type $BC_{m}$ \cite{NI16,NI19} which we believe can be used to evaluate the integrals \cite{NI19}.

\begin{remark} In the $n=1$ case, the van Diejen operators $\cD_{y}(a \lvert t , q )$ and $\cD_{y}(a \lvert t , q^\prime )$, for two bases $q$ and $q^\prime$, differ by an additive constant only (see also Remark \ref{remark_zeroth_order_coefficient}). Since the balancing condition \eqref{eq_dual_balancing_condition} becomes $a_{0}\cdots a_{7} t^{2m} = p^{2} t^{2}$ which does not depend on $q$, we obtain the following equation for the difference of two analytic difference operators from Lemma~\ref{lemma_dual-Cauchy_KFI}:
$$
\bigl(\frac{1}{\theta( q; p )} \cD_{x}(a \lvert q  , t) - \frac{1}{\theta(q^{\prime}  ; p )}\cD_{x}(a \lvert q^{\prime}  , t) \bigr) \prod_{i=1}^{m} x_{i}^{-1} \theta( x_{i} y^{\pm};p) = E_{0}  \prod_{i=1}^{m} x_{i}^{-1} \theta(x_{i} y^{\pm};p),
$$
where $E_0 = E_{0}(a \lvert q , q^\prime, t)$ is a (known) constant.
\end{remark}

\subsection{Previously known solutions}
\label{sec_previous_results}
As mentioned in the introduction, the van Diejen operator has known eigenfunctions in certain special cases  \cite{Spi03,Spi04,Kom04,Cha07,Spi07,Spi08,KNS09}. Here, we recall some classes of known eigenfunctions to which one can apply our transformations to obtain new eigenfunctions.

\subsubsection{Known eigenfunctions in the univariate case}
Exact eigenfunctions of Bethe Ansatz type (or Floquet-Block type) for the van Diejen operator has been constructed by Chalykh \cite{Cha07} in the univariate case. These eigenfunctions are holomorphic in $\C^{\ast}$, require that the parameters $a$ are restricted to a certain lattice (or rather, integer values of the couplings in the additive notation), and also involve the solutions to a transcendental system of equations for the Weierstrass elliptic functions.

Exact solutions to the eigenvalue equation for the van Diejen operator has been constructed in terms of the \emph{elliptic hypergeometric series} $_{12}V_{11}$ \cite{DJKMO88,FT97} for 5 generic parameters and one quantized parameter \cite{Spi03,Spi04,Kom04}. Exploring that the elliptic hypergeometric series can be expressed in terms of explicit integrals of the elliptic Gamma functions, exact solutions were also found in terms of the elliptic hypergeometric integral $I(b_{0},\ldots,b_{5},c x , c x^{-1};p,q)$ for $6$ generic parameters.

We also wish to mention results obtained by Ruijsenaars~\cite{Rui04,Rui09a,Rui15} that showed the symmetries in the parameter space and explored the Hilbert space aspects of the eigenfunctions of the van Diejen operator. In particular, Ruijsenaars proved the existence of exact eigenfunctions of the univariate van Diejen model and that these eigenfunctions form a complete orthogonal basis. Of particular interest is that Ruijsenaars found joint eigenfunctions of modular pairs of van Diejen operators. (Here, the modular pairs refer to two van Diejen operators with $p$ and $q$ interchanged, \emph{i.e.} $\cD_{x}(a\lvert p , q , t)$ and $\cD_{x}(a\lvert q , p , t)$.) 
\begin{remark}
\label{remark_p_q_symmetry}
It should be noted that the functions \eqref{eq_Selberg_eigenfunction_type_II} and \eqref{eq_Selberg_eigenfunction_type_I} are also joint eigenfunctions of modular pairs as they are invariant under $p \leftrightarrow q$. Furthermore, it is possible to generate more joint eigenfunctions by the eigenfunction transformations in Theorems~\ref{thm_Cauchy_eigenfunction_transform_Type_II} and \ref{thm_Cauchy_eigenfunction_transform_Type_I}.
%This $p \leftrightarrow q$ symmetry is also present in the elliptic hypergeometric integrals of Selberg type, and the functions \eqref{eq_Selberg_eigenfunction_type_II} and \eqref{eq_Selberg_eigenfunction_type_I} are also joint eigenfunctions of modular pairs by symmetry. Furthermore, it should be possible to generate more joint eigenfunctions by the eigenfunction transformations in Theorems~\ref{thm_Cauchy_eigenfunction_transform_Type_II} and \ref{thm_Cauchy_eigenfunction_transform_Type_I}.
\end{remark}

\subsubsection{Known eigenfunctions in the multivariate case}
\label{sec_special_cases_eigenfunction}
Let us proceed to the special cases where eigenfunctions of the many-variable van Diejen operator are known.

Under the \emph{ellipticity condition} 
\begin{equation}
\label{eq_ellipticity_condition_1}
a_0 \cdots a_7 t^{2 m} = p^{2} q^{2} t^{2},
\end{equation} 
it is known that the van Diejen operator can be expressed as \cite{Rui04,KNS09}
\begin{equation}
\cD_{x}{(a \lvert q, t )} =  \Lambda_{0}(a \lvert q , t ) +  \sum_{1 \leq i \leq m} \sum_{\ve = \pm}  A_{i}^{\ve}(x ; a \lvert q,t ) \Bigl( T_{q, x_{i}}^{\ve} - 1 \Bigr)
\label{eq_vD_operator_balanced}
\end{equation}
where $\Lambda_{0}(a \lvert q , t ) = A^{0}( x ; a \lvert q , t) +  \sum_{1 \leq i \leq m ; \ve = \pm} A_{i}^{\ve}(x; a \lvert q , t ) $, under \eqref{eq_ellipticity_condition_1}, is an elliptic function without poles with respect to the $x$-variables and thus a constant. Setting $n=0$ in the kernel function identities in Lemmas~\ref{lemma_Cauchy_KFI} and \ref{lemma_dual-Cauchy_KFI} yields that
\begin{equation}
\label{eq_vD_constant_eigenvalues}
\Lambda_{0}(a \lvert q , t )  = A^{0}(- ; p^{\half} q^{\half} t^{\half} a^{-1} \lvert q , t)
=  - \frac{t^{m} \theta( q ; p)}{\theta( t;p)}A^{0}(- ; a \lvert t , q).
\end{equation}
From the expression in \eqref{eq_vD_operator_balanced}, it is clear that any constant is an elementary eigenfunction of the van Diejen operator for parameters satisfying \eqref{eq_ellipticity_condition_1}, \emph{i.e.}
\begin{equation}
\cD_{x}{(a \lvert q , t )}\Bigl\lvert_{a_0 \cdots a_7  t^{2 m } = p^{2} q^{2}t^{2}}. 1 = 1. \Lambda_{0}(a \lvert q , t).
\label{eq_vD_constant_eigenvalue_eq}
\end{equation}

It is also known that the $W_{m}$-invariant monomials are exact eigenfunctions in the so-called \emph{free case} \cite{Rui09b,Rui15}: When the parameters $a$ are given by any permutation of
\begin{equation}
\label{eq_free_parameters}
 a_{\text{free}} = (1 , -1 ,p^{\half},-p^{\half}, q^{\half}, -q^{\half} , p^{\half} q^{\half}, -p^{\half} q^{\half}),
\end{equation}
and we set $t=1$ afterwards, the shift coefficients in \eqref{eq_coefficients_positive_multiplicative} are reduce to $1$ and the zeroth order coefficient $A^{0}(x ; a_{\text{free}}\lvert q ,t)$ is reduced to a constants, which is $0$ in our convention; see Section~\ref{sec_vD_symmetries_and_properties}. The van Diejen operator \eqref{eq_van_diejen_operator} reduces to (essentially) the sum of shift operator, \emph{i.e.} $\cD_{x}{(a_{\text{free}}\lvert q, t)}\lvert_{t=1} = \sum_{1 \leq i \leq m} ( \  T_{q,x_{i}} + T_{q,x_{i}}^{-1} \ )$, and have exact eigenfunctions given by products of (free) one-variable eigenfunctions. If we impose that the eigenfunctions are $W_{m}$-invariant, we obtain that
\begin{equation}
m_{\lambda}(x) = \sum_{\sigma \in W_{m}} \sigma \bigl(\prod_{1 \leq i \leq m} x_{i}^{\lambda_i}\bigr) \quad ( \lambda \in \C^{m}),
\end{equation}
where the sum is over distinct permutations, satisfies the eigenvalue equation
\begin{equation}
\cD_{x}{(a_{\text{free}}\lvert q, t)}\lvert_{t=1} m_{\lambda}(x) = m_{\lambda}(x) \  d^{\text{free}}_{\lambda}(q), \quad d^{\text{free}}_{\lambda}(q) = \sum_{ 1 \leq i \leq m} q^{\lambda_i} + q^{- \lambda_i}
\end{equation}
for any $\lambda = (\lambda_1,\ldots,\lambda_m) \in \C^{m}$. Restricting $\lambda$ to partitions would yield $W_{m}$-invariant Laurent polynomials in the $x$-variables.

In a recent paper, van Diejen and G{\"o}rbe \cite{vDG21} considered the eigenvalue equation for a discrete variant of the van Diejen operator restricted to a finite-dimensional space of lattice functions and explored the Hilbert space aspects of this model as well as exact eigenfunctions in special cases that have an interpretation of Pieri-type formulas for the eigenfunctions of the van Diejen model. 

\subsubsection{On $p$-shifts in the model parameters.}
\label{sec_shift_properties}

Another interesting property of the van Diejen operator is its transformation under $p$-shifts of the model parameters. It is straightforward to check that the van Diejen operator, in our conventions, only changes by a multiplicative factor when simultaneously shifting two parameters (say) $a_{0}$ and $a_{1}$ to $p a_{0}$ and $p^{-1}a_{1}$, more specifically
\begin{equation}
\cD_{x}{((p a_{0} , p^{-1} a_{1} , a_{2},\ldots,a_{7})\lvert q , t)} = (a_{1}/(p a_{0})) \cD_{x}{(a \lvert q , t )},
\label{eq_parameter_shifts_opposite}
\end{equation}
or for any other pairs $a_{r}$ and $a_{s}$ ($r,s\in\{0,1,\ldots,7\}$, $r \neq s$) by permutation symmetry. Intrigued by this transformation property, we investigate other possible transformations of the van Diejen operator under $p$-shifts of the model parameters. In particular, we consider the van Diejen operators $\cD_{x}{( p^{\half} a \lvert q , t )}$ and $\cD_{x}{( p^{\epsilon_{K}} a \lvert q, t )},$ where $\epsilon_{K} = \sum_{s \in K} \epsilon_{s}$ for any $K\subseteq \{0,1,\ldots,7\}$ and $\epsilon_{s}$ ($s \in \{0,1,\ldots,7\}$) are the canonical basis in $\C^{8}$, and show that they can be expressed as a transformation of the van Diejen operator: \newline
Suppose $f(x) = f(x ; a \lvert q , t)$ satisfies the $q$-difference equations 
\begin{equation}\vspace{-5pt}
T_{q,x_{i}} f(x) = p q^{3} t^{-(m-1)} (a_{0} \cdots a_{7})^{-\half} x_{i}^{4} \ f(x) \quad (i \in\{1,2,\ldots,m\}),
\label{eq_f_q-difference_equation}
\end{equation}
then the van Diejen operator satisfies
\begin{equation}
\label{eq_parameter_shift_half-period_shift_x}
\cD_{p^{\half} x}{(a \lvert q , t )} =p^{-1} q^{-2} (a_{0} \cdots a_{7})^{\half} f(x)^{-1} \circ \cD_{x}{( p^{\half} a \lvert q , t )} \circ f(x).
\end{equation}
\newline
Let $K\subseteq \{0,1,\ldots,7\}$ with $\abs{K} = 2$ and suppose $g(x) = g(x \lvert q)$ satisfies
\begin{equation}
T_{q,x_{i}} g(x) = q^{-1} x_{i}^{-2} g(x) \quad (i\in\{1,2,\ldots,m\}),
\label{eq_g_q-difference_equation}
\end{equation}
then the van Diejen operator satisfies
\begin{equation}
D_{x}{ ( p^{\epsilon_{K}} a \lvert q , t)} = \frac{q}{\prod_{s \in K} a_{s}} g(x)^{-1} \circ D_{x}{(a \lvert q , t)} \circ g(x) .
\label{eq_parameter_shift_epsilon}
\end{equation}

There are two interesting observations here: \newline
(1) From Eqs. \eqref{eq_parameter_shifts_opposite}, \eqref{eq_parameter_shift_half-period_shift_x}, and \eqref{eq_parameter_shift_epsilon}, it is clear that the parameters of the van Diejen model (defined by the principal operator) have symmetries under shifts by $p^{P}$ where $P = Q(E_{8})$ is the $E_{8}$ root lattice; see Section~\ref{sec_E8_symmetry}. It follows from straightforward checks that these $p$-shifts are generated by the simple roots $\alpha_{s}$ ($s\in\{0,1,\ldots,7\}$) \eqref{eq_simple_roots} related to the root system $\Delta(E_{8})$. We need therefore only consider $a \in ( \C^{\ast} )^{8} / p^{P}$ and use the symmetries in \eqref{eq_parameter_shifts_opposite}, \eqref{eq_parameter_shift_half-period_shift_x}, and \eqref{eq_parameter_shift_epsilon} for other values. \newline
(2) Combining \eqref{eq_vD_constant_eigenvalue_eq} and \eqref{eq_parameter_shift_half-period_shift_x} yields that the function $f(x)$, satisfying \eqref{eq_f_q-difference_equation}, is an eigenfunction of the van Diejen operator $\cD_{x}{( p^{\half} a \lvert q , t )}$ if the model parameters satisfy the ellipticity condition \eqref{eq_ellipticity_condition_1}. Similarly, it also follows that the function $g(x)$, satisfying \eqref{eq_g_q-difference_equation}, is an eigenfunction of the van Diejen operator if the model parameters satisfy $
a_{0}\cdots a_{7} t^{2m} = q^{2} t^{2}$. These functions could also be used as solutions $\varphi(y)$ for our transformations. 

\begin{remark}
\label{remark_quasi-constants}
It is straightforward to construct (meromorphic) solutions to the $q$-difference equations above by using the multiplicative theta function with base $q$; see also Lemmas~\ref{lemma_parameter_p_shift} and \ref{lemma_half-period_shift}. However, it is clear that the defining $q$-difference equations, \emph{i.e.}  \eqref{eq_f_q-difference_equation} and \eqref{eq_g_q-difference_equation}, do not have unique solutions. As such, the functions $f(x)$ and $g(x)$ are only defined up to multiplication by any $q$-periodic (meromorphic) function. Throughout the paper, we refer to these $q$-periodic functions as \emph{quasi-constants}. Note that the eigenfunctions of the van Diejen operator are also only defined up to multiplication by quasi-constant. A thorough investigation of the quasi-constants is beyond the scope of this paper and we will always choose the quasi-constant that is most convenient for our purposes as we proceed.
(Note that one way of determining the quasi-constants for the eigenfunctions of the van Diejen operator is to require that they are simultaneous eigenfunctions of a modular pair.)
\end{remark}

\section{Properties and symmetries of the van Diejen operator}
\label{sec_vD_symmetries_and_properties}

In the previous Section we introduced the eigenfunction transformations that can be used to generate new eigenfunctions of the van Diejen operator starting from previously known eigenfunctions or elementary eigenfunctions presented in Section~\ref{sec_special_cases_eigenfunction}. The gauge symmetries in Lemmas~\ref{lemma_gauge_transform_V} and \ref{lemma_gauge_transform_U} follow from the symmetries of the coefficients, as do the properties under $p$-shifts, while the integral transformations can be seen as a consequence of the kernel function identities and the (formal) self-adjointness of the van Diejen operator with respect to the symmetric $\C$-bilinear form. These properties are explained in this section.

Firstly, let us start by giving the definition of the van Diejen operator used in this paper. We recall that the van Diejen operator is given by an analytic difference operator of the form
\begin{equation}
\cD_{x}{(a \lvert q,t)} =  A^{0}(x ; a \lvert q , t) +  \sum_{1 \leq i \leq m} \sum_{\ve=\pm} A_{i}^{\ve}(x ; a \lvert q,t ) T_{q , x_{i}}^{\ve},
\end{equation}
for shift coefficients $A_{i}^{\pm}(x ; a \lvert q , t)$ ($i\in\{1,2,\ldots,m\}$) given in \eqref{eq_coefficients_positive_multiplicative}
, \emph{i.e.}
\begin{equation}
A_{i}^{\ve}(x ; a \lvert q,t ) = \frac{\prod_{0 \leq s \leq 7} \theta(a_s x_{i}^{\ve} ; p ) }{ \theta(x_{i}^{\ve 2} , q x_{i}^{\ve 2} ; p )} \prod_{j \neq i } \frac{ \theta( t x_{i}^{\ve} x_{j}^{\pm};p) }{\theta(x_{i}^{\ve} x_{j}^{\pm} ; p)} \quad (i\in\{1,2,\ldots,m\}, \ \ve \in \{\pm\}),
\end{equation}
and zeroth order coefficient given as
\begin{equation}
\label{eq_vD_external_potential_1}
A^{0}(x;a\lvert q ,t) = \frac{1}{2} \sum_{0 \leq r \leq 3} A^{0}_{r}(x;a\lvert q, t)
\end{equation} 
where
\begin{equation}
\label{eq_vD_external_potential_2}
A^{0}_{r}(x ; a \lvert q,t) =L_r^{(m)}(a \lvert q, t) \frac{\prod_{0 \leq s \leq 7} \theta(c_r q^{-\half} a_s ; p)}{\theta(t, q^{-1} t ; p)} \prod_{1 \leq j \leq m} \frac{\theta( c_r q^{-\half} t x_{j}^{\pm};p)}{\theta(c_r q^{-\half} x_{j}^{\pm};p)}
\end{equation}
with $L_0^{(m)} = L_1^{(m)} = 1$, $L_2^{(m)}(a \lvert q , t) = p^2 L_{3}^{(m)}(a \lvert q, t)^{-1} =  p^{-1} q^{-2} t^{m} ( a_0 \cdots a_7)^{\half}$, $c_0 = - c_1 = 1,$ and $c_2 = -c_3^{-1} = p^{\half}.$
\begin{remark}
\label{remark_zeroth_order_coefficient}
We remark that the zeroth order coefficient $A^{0}(x ; a \lvert q , t )$ is determined, up to an additive constant, by specifying the residues at 
\begin{equation}
 x_{i} \in \ve p^{\half \Z} q^{-\half} \text {\ \ \ and\ \ \   } x_{i} \in \ve p^{\half \Z} q^{\frac{1}{2}} \quad (i\in\{1,2,\ldots,m\}, \ \ve\in\{\pm\}).
\end{equation}
More specifically, the zeroth order coefficient is characterized by the following conditions: 
\newline(1) $A^{0}(x ; a \lvert q , t)$ is an elliptic function of each $x_{i} \in \C^{\ast}$ ($i\in\{1,2,\ldots,m\}$) with simple poles at $x_{i}\in \ve p^{\half \Z} q^{\half \ve'}$ for all $i\in\{1,2,\ldots,m\}$ and $\ve,\ve'\in\{\pm\}$,
\newline(2) $A^{0}(x ; a \lvert q , t )$ is $W_{m}$-invariant in the $x$-variables, and
\newline(3) $A^{0}(x ; a \lvert q , t )$ satisfies the residue conditions \begin{multline}
	\text{Res}\bigl( A^{0}(x ; a \lvert q , t ) \frac{d x_{i}}{x_{i}} \lvert x_i = \varpm p^{\half \ell} q^{\half \ve }\bigr) \\ 
	= -  \Bigl( \frac{p^{2}q^{2} t^{2} }{a_{0} \cdots a_{7} t^{2m} }\Bigr)^{\half \ve \ell } \text{Res}\bigl( A_{i}^{-\ve}(x ; a \lvert q , t ) \frac{d x_{i}}{x_{i}} \lvert x_{i} = \varpm p^{\half\ell} q^{\half\ve}\bigr)
	\label{eq_zeroth_order_residue_condition}
	\end{multline}
	for all $ i \in \{ 1 ,2,\ldots, m\}$, $\ve \in \{\pm\}$, and $\ell \in \Z$. (Here and below, we write `$\varpm$' to indicate that this relation holds for both signs.)

In the $m=1$ case, the coefficients $A_{1}^{\pm}(x ;a \lvert q , t)$ do not actually depend on the parameter $t$. The residue conditions then show that there is no essential dependence on $t$ in the zeroth order coefficient. Namely, the difference $A^{0}(x;a\lvert q , t) - A^{0}(x;a \lvert q, t^\prime)$ is a constant with respect to $x$ for different $t, t^\prime$.
This constant can be computed by straightforward calculations using the three-term relation (c.f. Eq.~\eqref{eq_Weierstrass_relation}).
\end{remark}
\subsection{On the symmetries of the zeroth order coefficient}
Using that the zeroth order coefficient is characterized by the conditions in Remark~\ref{remark_zeroth_order_coefficient}, it follows that if any other function ${B}^{0}(x ; a \lvert q , t)$ satisfies the same conditions, and does not have any other poles, then there exists a constant $C\in\C$ such that $B^{0}(x ; a \lvert q , t ) = A^{0}(x ; a \lvert q , t ) + C.$ Using this as our (informal) definition makes it straightforward to find the symmetries of the zeroth order coefficient:
\begin{lemma}
	\label{lemma_zeroth_order_symmetries}
	The zeroth order coefficient $A^{0}(x ; a \lvert q , t)$ satisfies
	\begin{subequations}
	\begin{eqnarray}
	A^{0}( x ; p^{\epsilon_{K}} a \lvert q , t) &=& q^{\half \abs{K} } (\prod_{s\in K} a_{s}^{-1}) A^{0}( x ; a \lvert q , t), \label{eq_zeroth_order_symmetries_1}\\
	A^{0}(p^{\half} x ; a \lvert q , t) &=& p^{-1} q^{-2} (a_{0} \cdots a_{7})^{\half} A^{0}(x ; p^{\half} a \lvert q , t),\label{eq_zeroth_order_symmetries_2} \\
		A^{0}( x ; a_{K(pq)} \lvert q , t) &=& A^{0}( x ; a \lvert q , t),\label{eq_zeroth_order_symmetries_3} \\
			A^{0}(x ; a \lvert q , p q t^{-1} ) &=& A^{0}( x ; a \lvert q , t), \label{eq_zeroth_order_symmetries_4}
	\end{eqnarray}
	\end{subequations}
	for any $K\subseteq \{0,1,\ldots, 7\}$ with even cardinality.
	\end{lemma}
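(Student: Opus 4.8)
The plan is to prove all four identities \eqref{eq_zeroth_order_symmetries_1}--\eqref{eq_zeroth_order_symmetries_4} uniformly through the characterization of $A^{0}(x;a\lvert q,t)$ recorded in Remark~\ref{remark_zeroth_order_coefficient}: a $W_{m}$-invariant function that is elliptic in each $x_{i}$ with at worst simple poles on the lattice $x_{i}\in\ve p^{\half\Z}q^{\half\ve'}$ is determined, up to a single additive constant, by the residue prescription \eqref{eq_zeroth_order_residue_condition}. For each identity I would set up the candidate equality, writing $B^{0}$ for the right-hand side including its multiplicative prefactor, and verify the two cheap structural hypotheses first. The $W_{m}$-invariance is inherited from that of $A^{0}$ in every case, and the pole lattice $\ve p^{\half\Z}q^{\half\ve'}$ is visibly stable under all four operations: the shift $a\mapsto p^{\epsilon_{K}}a$, the reflection $a\mapsto a_{K(pq)}$, and the substitution $t\mapsto pqt^{-1}$ leave it untouched since it is independent of $a$ and $t$, while the dilation $x\mapsto p^{\half}x$ merely reindexes the half-integer powers of $p$.

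With invariance and pole locations matched, the characterization reduces each identity to checking that $B^{0}$ obeys the \emph{same} residue prescription \eqref{eq_zeroth_order_residue_condition} as $A^{0}(x;a\lvert q,t)$. Since the right-hand side of \eqref{eq_zeroth_order_residue_condition} is built from the shift coefficients $A_{i}^{-\ve}$ in \eqref{eq_coefficients_positive_multiplicative}, this amounts to recording how each operation transforms the ingredients of $A_{i}^{-\ve}$ — the external theta product $\prod_{0\le s\le7}\theta(a_{s}x_{i}^{-\ve};p)$ for the parameter operations, and the pairwise product $\prod_{j\neq i}\theta(tx_{i}^{-\ve}x_{j}^{\pm};p)/\theta(x_{i}^{-\ve}x_{j}^{\pm};p)$ for $t\mapsto pqt^{-1}$ — and then evaluating on the pole lattice $x_{i}=\varpm p^{\half\ell}q^{\half\ve}$. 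The only tools needed are the elementary theta relations from Appendix~\ref{sec_theta_and_gamma_functions}: quasi-periodicity $\theta(p^{k}x;p)=(-1)^{k}p^{-\binom{k}{2}}x^{-k}\theta(x;p)$, inversion $\theta(x^{-1};p)=-x^{-1}\theta(x;p)$, and reflection $\theta(px^{-1};p)=\theta(x;p)$. The decisive point is that, at the pole $x_{i}=\varpm p^{\half\ell}q^{\half\ve}$, the factors of $p,q$ and of $\prod_{s\in K}a_{s}$ produced by transforming these ingredients cancel \emph{exactly} against the change in the $\ell$-dependent balancing prefactor $(p^{2}q^{2}t^{2}/(a_{0}\cdots a_{7}t^{2m}))^{\half\ve\ell}$; for \eqref{eq_zeroth_order_symmetries_3}, for instance, the reflection sends $\theta(pqa_{s}^{-1}x_{i}^{-\ve};p)$ to $\theta(q^{-1}a_{s}x_{i}^{\ve};p)$, and on the pole lattice the resulting power $p^{\ve\ell}$ is absorbed by quasi-periodicity while the $a$- and $q$-powers match the prefactor change because $\abs{K}$ is even. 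Carrying this out establishes each identity \emph{up to an additive constant} $C=C(a,q,t)$.

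The remaining and genuinely delicate step is to show that this constant vanishes, since the residue characterization alone cannot supply it. For \eqref{eq_zeroth_order_symmetries_1} and \eqref{eq_zeroth_order_symmetries_2} I would bypass the issue by applying the same theta relations directly to the explicit finite sum $A^{0}=\half\sum_{0\le r\le3}A^{0}_{r}$ in \eqref{eq_vD_external_potential_1}--\eqref{eq_vD_external_potential_2}: the substitution $a_{s}\mapsto pa_{s}$ ($s\in K$) multiplies each parameter-theta product $\prod_{s}\theta(c_{r}q^{-\half}a_{s};p)$ — whose argument carries no $x_{i}$ — by the common scalar $q^{\half\abs{K}}\prod_{s\in K}a_{s}^{-1}$ once the rescaling of $L_{r}^{(m)}$ is included, so the \emph{whole} expression scales and no additive constant can appear; \eqref{eq_zeroth_order_symmetries_2} follows similarly, its prefactor $p^{-1}q^{-2}(a_{0}\cdots a_{7})^{\half}$ originating from the normalization $L_{2}^{(m)}$. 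For the invariance identities \eqref{eq_zeroth_order_symmetries_3} and \eqref{eq_zeroth_order_symmetries_4} the substitution instead reshuffles the four terms $A^{0}_{r}$ among themselves — the values $c_{0}=-c_{1}=1$, $c_{2}=-c_{3}^{-1}=p^{\half}$ and the relation $L_{2}^{(m)}=p^{2}(L_{3}^{(m)})^{-1}$ being engineered precisely for this — so again a finite sum is manipulated with no room for an extra constant; alternatively one may pin $C=0$ by degenerating to $p\to0$, where $A^{0}$ becomes elementary and the identity is manifest, and then invoking analyticity in $p$.

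I expect the main obstacle to lie in the invariance statements \eqref{eq_zeroth_order_symmetries_3} and \eqref{eq_zeroth_order_symmetries_4}. For \eqref{eq_zeroth_order_symmetries_4} the residue check under $t\mapsto pqt^{-1}$ is the most intricate: one must verify that $\theta(t,q^{-1}t;p)$ is invariant (which follows from $\theta(px^{-1};p)=\theta(x;p)$ together with inversion) and that the pairwise factors, once evaluated on the pole lattice in $x_{i}$, conspire with the altered balancing prefactor. And in both \eqref{eq_zeroth_order_symmetries_3} and \eqref{eq_zeroth_order_symmetries_4} the direct route to pinning the constant requires correctly identifying the permutation of the terms $A^{0}_{r}$, so that keeping exact track of signs and half-integer powers of $p$ and $q$ throughout is the principal source of difficulty.
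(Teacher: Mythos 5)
Your first step is exactly the paper's strategy: use the characterization in Remark~\ref{remark_zeroth_order_coefficient} (ellipticity, $W_{m}$-invariance, stability of the pole lattice, and the residue prescription \eqref{eq_zeroth_order_residue_condition}, checked via the transformation rules of the shift coefficients $A_{i}^{\ve}$) to establish each identity up to an additive constant. The genuine gap is in your second step, where you have the structure of the four cases backwards. Under the half-period shift $x\mapsto p^{\half}x$ in \eqref{eq_zeroth_order_symmetries_2}, the terms $A^{0}_{r}$ do \emph{not} scale individually: the shift moves the arguments $c_{r}q^{-\half}x_{j}^{\pm}$ to $c_{r}q^{-\half}p^{\pm\half}x_{j}^{\pm}$, and after applying quasi-periodicity and reflection these match the $x$-dependent factors of a \emph{different} term; e.g. $A^{0}_{0}(p^{\half}x;a\lvert q,t)$ pairs with $A^{0}_{2}(x;p^{\half}a\lvert q,t)$, not with $A^{0}_{0}(x;p^{\half}a\lvert q,t)$. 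The permutation structure (with $0\leftrightarrow 2$, $1\leftrightarrow 3$) that you attribute to \eqref{eq_zeroth_order_symmetries_3} and \eqref{eq_zeroth_order_symmetries_4} is in fact the mechanism behind \eqref{eq_zeroth_order_symmetries_2}, and this is precisely why the values $c_{0}=-c_{1}=1$, $c_{2}=-c_{3}^{-1}=p^{\half}$ are ``engineered''. So your claim that \eqref{eq_zeroth_order_symmetries_2} ``follows similarly'' to \eqref{eq_zeroth_order_symmetries_1} by whole-expression scaling is false, and this is the one case where the paper needs a dedicated argument to pin the constant: it specializes $x_{j}=\alpha t^{m-j}$ so that the pairwise products telescope, lets $\alpha\to p^{-\half}$, and verifies that the paired differences $A^{0}_{r}(p^{\half}x;a\lvert q,t)-p^{-1}q^{-2}(a_{0}\cdots a_{7})^{\half}A^{0}_{\sigma(r)}(x;p^{\half}a\lvert q,t)$ all vanish at that point, which together with the residue characterization forces $C=0$. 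Conversely, \eqref{eq_zeroth_order_symmetries_3} and \eqref{eq_zeroth_order_symmetries_4} involve no reshuffling at all: each $A^{0}_{r}$ is invariant term by term (for \eqref{eq_zeroth_order_symmetries_4} the denominator $\theta(t,q^{-1}t;p)$ and the pairwise factors are separately invariant under $t\mapsto pqt^{-1}$ by reflection; for \eqref{eq_zeroth_order_symmetries_3} the factors $\theta(c_{r}q^{-\half}pqa_{s}^{-1};p)$ return to $\theta(c_{r}q^{-\half}a_{s};p)$ up to scalars that cancel against the change in $L_{r}^{(m)}$ because $\abs{K}$ is even). This part of your misassignment is benign — attempting the computation would simply reveal the identity permutation — but it means your stated reason for ``no room for an extra constant'' in those cases is not the actual one.

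Your fallback of pinning $C$ by degenerating $p\to 0$ is also unsound. The putative constant is a function $C(a,q,t,p)$ which may depend on $p$, so its vanishing at $p=0$ would not force it to vanish for $0<\abs{p}<1$. Worse, in \eqref{eq_zeroth_order_symmetries_3} and \eqref{eq_zeroth_order_symmetries_4} the transformed parameters $pqa_{s}^{-1}$ and $pqt^{-1}$ tend to $0\notin\C^{\ast}$ in that limit, and in \eqref{eq_zeroth_order_symmetries_2} the terms $A^{0}_{2}$, $A^{0}_{3}$ carry $c_{2}=-c_{3}^{-1}=p^{\half}$ and the $p$-dependent normalizations $L^{(m)}_{2}$, $L^{(m)}_{3}$, so the limit is not well defined term by term. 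The repair is to swap your assignments: prove \eqref{eq_zeroth_order_symmetries_1}, \eqref{eq_zeroth_order_symmetries_3}, \eqref{eq_zeroth_order_symmetries_4} by direct termwise computation on the finite sum (no constant can appear), and prove \eqref{eq_zeroth_order_symmetries_2} either by establishing the exact term-permutation identities with all prefactors, or, as the paper does, via the residue characterization plus the special-point evaluation $x_{j}\to p^{-\half}t^{m-j}$.
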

	\begin{proof}
	It is clear that all of these are elliptic functions and it is straightforward to check that the l.h.s. of \eqref{eq_zeroth_order_symmetries_1}-\eqref{eq_zeroth_order_symmetries_4} are invariant under permutations and that only \eqref{eq_zeroth_order_symmetries_2} is not obviously invariant under inversions. Using that the zeroth order coefficients are elliptic functions in the $x$-variables yields that
	\begin{multline}
	A^{0}((p^{\half} x_1, p^{\half} x_2, \ldots ,p^{\half} x_{m}) ; a \lvert q , t ) = A^{0}( (p^{-\half} x_1^{-1},\ p^{ \half} x_2 , \ldots ,p^{\half} x_{m}) ; a \lvert q , t ) \\
	= A^{0}(  ( p p^{-\half} x_1^{-1}, p^{\half} x_2 , \ldots ,p^{\half} x_{m}) ; a \lvert q , t  ) = A^{0}( (p^{\half} x_1^{-1}, p^{\half} x_{2} , \ldots ,p^{\half} x_{m}) ; a \lvert q , t  ),
	\end{multline}
showing that all of these are $W_{m}$-invariant in the $x$-variables by permutation invariance. It is straightforward to check that they only have poles at $x_{i} = \varpm p^{ \half \ell } q^{ - \half \ve }$ ($i\in \{1,2,\ldots,m\}$, $\ve \in \{\pm\}$, and $\ell \in \Z$), since the half-period shift in \eqref{eq_zeroth_order_symmetries_2} only changes $\ell \to \ell+1$. It only remains to check the residue conditions: It follows from straightforward calculations that
\begin{subequations}
\begin{eqnarray}
A^{\ve}_{i}(x ; p^{\epsilon_K} a \lvert q , t ) &=& (\prod_{s\in K} -a_{s}^{-1} x_{i}^{-\ve}) A^{\ve}_{i}(x ; a \lvert q , t), \label{eq_coefficient_symmetries_1}\\
A^{\ve}_{i}( p^{\half} x  ; a \lvert q , t)  &=& \bigl(p^{-2} a_{0} \cdots a_{7} \bigr)^{\half (1-\ve)} ( q t^{1-m} x_{i}^{4})^{\ve} A^{\ve}_{i}(x ; p^{\half} a \lvert q , t ), \label{eq_coefficient_symmetries_2}\\
A^{\ve}_{i}( x ; a_{K(pq)} \lvert q , t) &=& (\prod_{s\in K} \theta( q^{-1} a_{s} x_{i}^{-\ve};p) / \theta( a_{s} x_{i}^{\ve} ; p ) ) A^{\ve}_{i}(x; a \lvert q , t), \label{eq_coefficient_symmetries_3}\\
A^{\ve}_{i}( x ; a \lvert q, pqt^{-1}) &=& ( \prod_{j \neq i}^{m} \theta( q^{-1} t x_{i}^{-\ve} x_{j}^{\pm};p) / \theta( t x_{i}^{\ve} x_{j}^{\pm};p) ) A^{\ve}_{i}(x; a \lvert q, t) \label{eq_coefficient_symmetries_4}
\end{eqnarray}
\end{subequations}
for all $i \in\{1,2,\ldots,m\}$, $K\subseteq\{0,1,\ldots,7\}$, and $\ve \in\{\pm\}$. Using \eqref{eq_coefficient_symmetries_1}, \eqref{eq_coefficient_symmetries_3}, and \eqref{eq_coefficient_symmetries_4}, it is straightforward to check that
\begin{multline}
\text{Res}( A^{0}(x ; p^{\epsilon_{K}} a \lvert q , t) \frac{d x_{i} }{ x_{i} }  \lvert x_{i} = \varpm p^{\half \ell } q^{- \half\ve }) \\
= (\prod_{s\in K} - a_{s} q^{\half} ) \text{Res}(A^{(0)}(x ; a \lvert q , t) \frac{d x_{i} }{x_{i} } \lvert x_{i} = \varpm p^{\half\ell } q^{- \half\ve } ),
\end{multline}
\begin{multline}
\text{Res}( A^{0}(x ; a_{K(p q)} \lvert q , t) \frac{d x_{i} }{ x_{i} }  \lvert x_{i} = \varpm p^{\half \ell } q^{-\half\ve }) \\
= \text{Res}( A^{0}(x ; a \lvert q , t) \frac{d x_{i} }{ x_{i} }  \lvert x_{i} = \varpm p^{\half \ell } q^{-\half \ve }),
\end{multline}
and
\begin{multline}
\text{Res}( A^{0}(x ; a \lvert q , p q t^{-1}) \frac{d x_{i}}{x_{i}} \lvert x_{i} = \varpm p^{\half\ell } q^{- \half \ve })  \\
= \text{Res}( A^{0}(x ; a \lvert q , t ) \frac{d x_{i}}{x_{i}} \lvert x_{i} = \varpm p^{\half\ell } q^{- \half\ve }),
\end{multline}
for all $i\in\{1,2,\ldots,m\}$, $\ve \in \{\pm\}$, and $\ell \in \Z$, by using 
\begin{equation}
\frac{\theta( \varpm p^{- \half \ve \ell } q^{-\half} a_{s} ; p)}{\theta( \varpm p^{+\half \ve \ell } q^{-\half} a_{s};p)} = (-1)^{\ell} ( p q a_{s}^{-2})^{- \half \ve \ell } \quad (s\in\{0,1,\ldots,7\}, \ \ve \in \{\pm\}, \ \ell\in\Z)
\end{equation}
and
\begin{equation}
\frac{\theta( \varpm p^{-\half \ve \ell } q^{-\half} t x_{j}^{\pm} ; p)}{\theta( \varpm p^{+\half \ve \ell } q^{-\half} t x_{j}^{\pm} ; p)}  = (p q t^{-2})^{\ve 2 \ell}  \quad (j\in\{1,2,\ldots,m\}, \ \ve \in \{\pm\}, \ \ell\in\Z),
\end{equation}
which reduces the prefactors to $1$. 
For \eqref{eq_zeroth_order_symmetries_2}, we need to take into account the half-period shift when considering the residue conditions which yields that
\begin{multline}
\text{Res}(A^{0}(p^{\half} x ; a \lvert q , t ) \frac{dx_{i}}{x_{i} }\lvert x_{i} = \varpm p^{\half \ell} q^{-\half \ve}) \\
= - \Bigl( \frac{p^{2} q^{2} t^{2}}{a_{0} \cdots a_{7} t^{2m}} \Bigr)^{-\half \ve (\ell+1)} \text{Res}( A^{\ve}_{i}( p^{\half} x; a \lvert q , t) \frac{d x_{i}}{x_{i}} \lvert x_{i} = \varpm p^{\half \ell} q^{-\half \ve }).
\end{multline}
Using \eqref{eq_zeroth_order_residue_condition} and \eqref{eq_coefficient_symmetries_2} yields that 
\begin{multline}
\text{Res}(A^{0}(p^{\half} x ; a \lvert q , t ) \frac{dx_{i}}{x_{i} }\lvert x_{i} = \varpm p^{\half \ell} q^{-\half \ve}) \\
= C^{(1)} \text{Res}(A^{0}(x ; p^{\half}  a \lvert q , t ) \frac{dx_{i}}{x_{i} }\lvert x_{i} = \varpm p^{\half \ell} q^{-\half \ve}) 
\end{multline}
where
\begin{eqnarray}
C^{(1)} &=& \Bigl( \frac{p^{2} q^{2} t^{2}}{a_{0} \cdots a_{7} t^{2m}}\Bigr)^{-\half \ve (\ell + 1 )} (p^{-2} a_{0}\cdots a_{7})^{\half (1-\ve)} \\
&&\cdot (q t^{1-m} (p^{\half \ell} q^{-\half\ve})^{4})^{\ve} \Bigl( \frac{p^{2} q^{2} t^{2}}{p^{4} a_{0}\cdots a_{7} t^{2m}} \Bigr)^{\half\ve \ell } \nonumber \\
&=& p^{-1} q^{-2} (a_{0}\cdots a_{7})^{\half}. \nonumber
\end{eqnarray}

Next, we determine the additive constants. For Eqs. \eqref{eq_zeroth_order_symmetries_1}, \eqref{eq_zeroth_order_symmetries_3}, and \eqref{eq_zeroth_order_symmetries_4} it is straightforward to check that 
\begin{align}
A^{0}_{r}(x ; p^{\epsilon_{K}} a \lvert q ,t ) &= q^{\half \abs{K}} (\prod_{s\in K} a_{s}^{-1} ) A^0_{r}(x ; a \lvert q , t),\\
A^{0}_{r}(x ; a_{K(pq)} \lvert q, t ) &= A^{0}_{r}(x ; a \lvert q , t), \\
A^{0}_{r}(x ; a \lvert q , p q t^{-1}) &= A_{r}^{0}(x ; a \lvert q, t)
\end{align}
for all $r\in\{0,1, 2, 3\}$ which gives that the additive constant $C$ is 0 in all these cases. For \eqref{eq_zeroth_order_symmetries_2}, we set $x_{j} = \alpha t^{m-j}$ for all $j\in \{1,2,\ldots,m\}$ which yields that
\begin{equation}
\prod_{1 \leq j \leq m} \frac{\theta( c_{r} q^{-\half} t x_{j}^{\pm} ; p) }{\theta( c_{r} q^{-\half} x_{j}^{\pm} ; p )} = \frac{\theta( c_{r} q^{-\half} t^{m} \alpha , c_{r} q^{-\half} t \alpha^{-1} ;p)}{ \theta( c_{r} q^{-\half} \alpha , c_{r} q^{-\half} t^{-m+1} \alpha^{-1} ; p )}
\end{equation}
for any $\alpha \in \C^{\ast}$. By specializing $\alpha \to p^{-\half}$, we obtain that
\begin{equation}
A^{0}_{0}(p^{\half} x ; a\lvert q ,t) \to \frac{\prod_{0 \leq s \leq 7} \theta( q^{-\half} a_{s};p) }{\theta( t ,  q^{-1} t ; p)}\frac{\theta( q^{-\half} t , q^{-\half} t^{m} ;p )}{\theta( q^{-\half} , q^{-\half} t^{-m+1} ;p )}
\end{equation}
and 
\begin{multline}
p^{-1} q^{-2} (a_{0} \cdots a_{7})^{\half} A_{2}^{0}(x ; p^{\half}a \lvert q, t) 
 \to \frac{\prod_{ 0 \leq s \leq 7} \theta( q^{-\half} a_{s};p) }{\theta( t ,  q^{-1} t ; p)}\frac{\theta( q^{-\half} t , q^{-\half} t^{m} ;p )}{\theta( q^{-\half} , q^{-\half} t^{-m+1} ;p )}
\end{multline}
by using the $p$-difference equation satisfied by the multiplicative theta functions. We have thus obtained that $A^{0}_{0}(p^{\half} x ; a\lvert q ,t) - p^{-1} q^{-2} (a_{0} \cdots a_{7})^{\half} A_{2}^{0}(x ; p^{\half}a \lvert q, t) \to 0
$ as $x_{j}\to p^{-\half} t^{m-j}$ for all $j\in\{1,2,\ldots,m\}$ (and congruent points by $p$-shifts as $A_{r}^{0}(x;a\lvert q ,t)$ ($r\in\{0,1,2,3\}$) are elliptic functions). The same specialization also yields that
\begin{align}
A_{1}^{0}(p^{\half} x ; a\lvert q, t) - p^{-1} q^{-2} (a_{0} \cdots a_{7})^{\half} A_{3}^{0}(x ; p^{\half}a \lvert q, t) \to 0, \\
A_{2}^{0}(p^{\half} x ; a\lvert q, t) - p^{-1} q^{-2} (a_{0} \cdots a_{7})^{\half} A_{0}^{0}(x ; p^{\half}a \lvert q, t) \to 0, \\
A_{3}^{0}(p^{\half} x ; a\lvert q, t) - p^{-1} q^{-2} (a_{0} \cdots a_{7})^{\half} A_{1}^{0}(x ; p^{\half}a \lvert q, t) \to 0,
\end{align}
by straightforward computations, and we obtain the desired result in \eqref{eq_zeroth_order_symmetries_2}.
\end{proof}

\subsection{Symmetries of the van Diejen operator.}
\label{eq_symmetries_proof}

In the introduction, we presented how the van Diejen operator changes under gauge transformations as well as under shifts of the model parameters by factors of $p$ and $p^{\half}$. We are now in a position to prove these statements.

\subsubsection{Proofs of gauge symmetries.}
Let us start by giving the proofs of Lemmas~\ref{lemma_gauge_transform_V} and \ref{lemma_gauge_transform_U}, as well as Propositions~\ref{prop_gauge_transf_V} and \ref{prop_gauge_transf_U}.
\begin{proof}[Proof of Lemma~\ref{lemma_gauge_transform_V}]
The gauge function $V(x\lvert q , t)$ satisfies the $q$-difference equations
\begin{multline}
V(x\lvert q, t)^{-1} T_{q,x_{i}}^{\ve} V(x\lvert q, t) = \prod_{ j > i} \frac{\theta(q^{-1} t x_{i}^{-\ve} x_{j}^{\pm};p) }{\theta( t x_{i}^{\ve} x_{j}^{\pm};p)} \prod_{ j < i} \frac{\theta( q^{-1} t x_{j}^{\pm} x_{i}^{-\ve} ; p) }{\theta( t x_{j}^{\pm} x_{i}^{\ve} ; p )} \\
= \prod_{j \neq i } \frac{\theta(q^{-1} t x_{i}^{-\ve} x_{j}^{\pm};p)}{\theta( t x_{i}^{\ve} x_{j}^{\pm} ; p )} = \prod_{j \neq i } \frac{\theta(p q t^{-1} x_{i}^{\ve} x_{j}^{\pm};p)}{\theta( t x_{i}^{\ve} x_{j}^{\pm} ; p )}
\end{multline}
 for all $i\in\{1 ,2,\ldots,  m \}$ and $\ve\in\{\pm\}$, where the last equality is obtained by using the reflection property of the multiplicative theta function.
Recalling the coefficients $A^{\pm}_{i}( x ; a \lvert q , t)$ in \eqref{eq_coefficients_positive_multiplicative}, we find that $
A^{\ve}_{i}( x ; a \lvert q , t) V(x\lvert q, t)^{-1} T_{q,x_{i}}^{\ve} V(x\lvert q, t) %\\
%= \frac{\prod_{0 \leq s \leq 7} \theta( a_{s} x_{i}^{\ve} ; p )}{\theta( x_{i}^{ \ve 2 } , q x_{i}^{\ve 2} ; p )} \prod_{j\neq i} \frac{\theta(t x_{i}^{\ve} x_{j}^{\pm} ; p )}{\theta( x_{i} x_{j}^{\pm} ; p )}\frac{\theta(p q t^{-1} x_{i}^{\ve} x_{j}^{\pm};p)}{\theta( t x_{i}^{\ve} x_{j}^{\pm} ; p )} \\
= A_{i}^{\ve}( x ; a \lvert q , p q t^{-1}),
$ 
which yields that 
\begin{equation}
A^{\ve}_{i}( x ; a \lvert q , t) V(x\lvert q, t)^{-1} \circ T_{q,x_{i}}^{\ve} \circ V(x\lvert q, t) = A^{\ve}_{i}( x ; a \lvert q , p q t^{-1} ) T_{q,x_{i}}^{\ve} 
\end{equation}
for all $i\in\{1,2,\ldots,m\}$ and $ \ve\in\{\pm\}$. Using the symmetries of the zeroth order coefficient in \eqref{eq_zeroth_order_symmetries_4} yields that
\begin{multline}
V(x\lvert q, t)^{-1} \circ \cD_{x}(a \lvert q , t) \circ V(x\lvert q, t) = A^{0}(x ; a \lvert  q , t) \\
+ \sum_{ 1 \leq i \leq m} \sum_{\ve=\pm} A_{i}^{\ve}(x; a \lvert q, t ) V(x\lvert q, t)^{-1} \circ T_{q,x_{i}}^{\ve} \circ V(x\lvert q, t) \\
= A^{0}(x ; a \lvert q , p q t^{-1} ) + \sum_{ 1 \leq i \leq m} \sum_{\ve=\pm} A_{i}^{\ve}(x ; a \lvert q , p q t^{-1}) T_{q,x_{i}}^{\ve} = \cD_{x}(a \lvert q , p q t^{-1} )
\end{multline}
which is the desired result.
\end{proof}

\begin{proof}[Proof of Lemma~\ref{lemma_gauge_transform_U}]
The gauge function $U_{K}(x; a \lvert q )$ satisfies the $q$-difference equation
\begin{equation}
U_{K}(x;a\lvert q )^{-1} T_{q,x_{i}}^{\ve} U_{K}(x; a \lvert q ) = \prod_{s\in K} \frac{\theta( q^{-1} a_{s} x_{i}^{-\ve} ; p )}{\theta( a_{s} x_{i}^{\ve} ; p) } = \prod_{s\in K} \frac{\theta( p q a_{s}^{-1} x_{i}^{\ve} ; p )}{\theta( a_{s} x_{i}^{\ve} ; p) } 
\end{equation}
for all $i \in \{1,2,\ldots, m\}$ and $\ve\in\{\pm\}$. Recalling the coefficients $A^{\pm}_{i}( x ; a \lvert q , t)$ in \eqref{eq_coefficients_positive_multiplicative}, it becomes clear that 
\begin{multline}
U_{K}(x ; a \lvert q )^{-1} \circ A^{\ve}_{i}(x ; a \lvert q , t) T_{q, x_{i}}^{\ve} \circ U_{K}(x; a \lvert q ) \\ 
= A^{\ve}_{i}(x ; a \lvert q , t)  \bigl( U_{K}(x ; a \lvert q )^{-1} T_{q, x_{i}}^{\ve} U_{K}(x; a \lvert q )  \bigr) T_{q,x_{i}}^{\ve}\\
= A^{\ve}_{i}( x ; a_{K(p q )} \lvert q, t ) T_{q,x_{i}}^{\ve} ,
\end{multline}
with $a_{K(pq)}$ as in Lemma~\ref{lemma_gauge_transform_U}. Combining this relation with the symmetries of the zeroth order coefficient in Eq. \eqref{eq_zeroth_order_symmetries_3} yields the desired result.
\end{proof}
\begin{proof}[Proof of Proposition~\ref{prop_gauge_transf_V}]
We consider the action of the van Diejen operator \newline $\cD_{x}(a \lvert q , t)$ on the function $\psi(x)$ \eqref{eq_prop_eigenfunction_V} and obtain that
\begin{multline*}
\cD_{x}(a \lvert q , t ) \psi(x) = \cD_{x}(a \lvert q , t ) V(x \lvert q,t ) \varphi(x) \\
= V(x \lvert q , t ) \bigl( V(x\lvert q, t )^{-1} \circ \cD_{x}(a \lvert q, t ) \circ V(x\lvert q, t) \bigr) \varphi(x)
\end{multline*}
by straightforward calculations. Using the result of Lemma~\ref{lemma_gauge_transform_V} simplifies the equation into the form 
\begin{equation}
\cD_{x}(a \lvert q , t ) \psi(x) = V(x \lvert q , t ) \cD_{x}(a \lvert q, p q t^{-1} )  \varphi(x) 
\end{equation}
which shows that $\psi(x)$ is an eigenfunction of the van Diejen operator $\cD_{x}(a \lvert q , t)$ if $\varphi(x)$ is an eigenfunction of the van Diejen operator $\cD_{x}(a \lvert q , p q t^{-1})$ and concludes the first part of the proof. Setting $\varphi(x)=1$ in the equation above yields that $\psi(x) = V(x \lvert q , t)$ and we have the $\cD_{x}(a \lvert  q , t) V(x \lvert q , t) = V(x \lvert q , t) \cD_{x}(a \lvert q, p q t^{-1}).1.$ Replacing $t \to p q t^{-1}$ in the ellipticity condition \eqref{eq_ellipticity_condition_1} yields that $1$ is an eigenfunction of $\cD_{x}(a \lvert q , p q t^{-1})$ if the parameters satisfy the condition in \eqref{eq_ellipticity_condition_V} and the eigenvalue is obtained from Eq. \eqref{eq_vD_constant_eigenvalues} by replacing $t \to p q t^{-1}$, \emph{i.e.} $\Lambda_{0}(a \lvert q , p q t^{-1})$. 
\end{proof}
\begin{proof}[Proof of Proposition~\ref{prop_gauge_transf_U}]
We consider the action of the operator $\cD_{x}(a \lvert q, t)$ on $\psi(x)$ \eqref{eq_prop_eigenfunction_U} and obtain that
\begin{multline*}
\cD_{x}(a \lvert q , t) \psi(x) =  \cD_{x}( a \lvert q , t) U_{K}(x ; a\lvert q)\varphi(x) \\
 = U_{K}(x ;a \lvert q ) \bigl( U_{K}(x;a\lvert q )^{-1} \circ \cD_{x}(a\lvert q, t) \circ U_{K}(x ; a \lvert q ) \bigr) \varphi(x) 
\end{multline*}
by straightforward calculations. Using Lemma~\ref{lemma_gauge_transform_U} yields that
\begin{equation}
\cD_{x}(a \lvert q, t) \psi(x)  = U_{K}(x ;a \lvert q ) \cD_{x}(a_{K(pq)} \lvert q , t) \varphi(x) 
\end{equation}
which shows that $\psi(x)$ is an eigenfunction of the van Diejen operator $\cD_{x}(a\lvert q , t)$ if $\varphi(x)$ is an eigenfunction of the van Diejen operator $\cD_{x}(a_{K(pq)} \lvert q , t)$ and concludes the first part of the proof. Setting $\varphi(x) = 1$ in the equation yields that $\psi(x) = U_{K}(x ; a \lvert q )$ and we have that 
\begin{equation}
\cD_{x}(a \lvert q ,t ) U_{K}(x;a \lvert q)  = U_{K}(x ;a \lvert q ) \cD_{x}(a_{K(pq)} \lvert q , t) .1 .
\end{equation}
The ellipticity condition for the operator $\cD_{x}(a_{K(pq)} \lvert q , t)$ is given by
\begin{equation}
\label{eq_ellipticity_condition_gauge}
(\prod_{s\in K} p q a_{s}^{-1} )( \prod_{s\notin K} a_{s} ) t^{2m} = p^{2} q^{2} t^{2}
\end{equation}
and we obtain the condition \eqref{eq_balancing_condition_U} by multiplying both sides of \eqref{eq_ellipticity_condition_gauge} by $\prod_{s\in K} a_{s}$. Thus, we have that $\cD_{x}(a_{K(pq)} \lvert q , t).1 = 1. \Lambda_{0}(a_{K(pq)} \lvert q , t) $ if the model parameters satisfy \eqref{eq_balancing_condition_U} and we obtain the desired result.
\end{proof}

\subsubsection{$p$-shifts in the model parameters.}
\begin{lemma}
\label{lemma_parameter_p_shift}
Let $K\subseteq \{0 ,1, \ldots , 7\}$ with even cardinality and $g(x\lvert q)$ satisfy the $q$-difference equations
\begin{equation}
T_{q,x_{i}} g(x\lvert q ) = (q x_{i}^{2})^{\half \abs{K}} g(x\lvert q) \quad (i\in\{1,2,\ldots,m\}).  
\end{equation}
Then
\begin{equation}
g(x \lvert q ) = [\mbox{\emph{quasi-const.}}] \times \prod_{1 \leq i \leq m} \theta( c x_{i}^{\pm};q )^{\half \abs{K}},
\end{equation}
for any $c\in\C^{\ast}$, and the van Diejen operator satisfies
\begin{equation}
q^{-\half \abs{K}} (\prod_{s\in K} a_{s} ) \cD_{x}(p^{\epsilon_{K}} a \lvert q , t) = g(x\lvert q)^{-1} \circ \cD_{x}(a \lvert q , t) \circ g(x \lvert q ).
\end{equation}
Furthermore, if the parameters satisfy
\begin{equation}
\label{eq_ellipticity_condition_2}
a_{0} \cdots a_{7} p^{\abs{K}} t^{2m} = p^{2} q^{2} t^{2}
\end{equation}
then the function $g(x \lvert q )$ is an eigenfunction of the van Diejen operator $\cD_{x}(a \lvert q , t)$ with eigenvalue
\begin{equation}
(q p t^{-1})^{- \half \abs{K}} \Lambda_{0}(a \lvert q , t)
\end{equation}
\end{lemma}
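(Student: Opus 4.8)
The plan is to prove the three assertions in turn, reducing everything to the coefficient symmetries already established in Lemma~\ref{lemma_zeroth_order_symmetries}.

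\emph{Shape of $g$.} First I would exhibit one explicit solution and then argue uniqueness up to a quasi-constant. Setting $g_{0}(x)=\prod_{1\leq i\leq m}\theta(cx_{i}^{\pm};q)^{\half\abs{K}}$ and applying the quasi-periodicity $\theta(qz;q)=-z^{-1}\theta(z;q)$ of the multiplicative theta function separately to the factors $\theta(cx_{i};q)$ and $\theta(cx_{i}^{-1};q)$, one finds that $\theta(cx_{i}^{\pm};q)$ is multiplied by an explicit monomial in $q$ and $x_{i}$ under $T_{q,x_{i}}$; raising to the integer power $\half\abs{K}$ then shows that $g_{0}$ solves the required $q$-difference equations. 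Since the same equation holds with $c$ replaced by any other $c'\in\C^{\ast}$, the quotient of the two corresponding products is annihilated by every $T_{q,x_{i}}-1$, hence $q$-periodic, which explains the freedom in $c$. Finally, for any solution $g$ the ratio $g/g_{0}$ is likewise invariant under every $T_{q,x_{i}}$, so it is a quasi-constant, giving the asserted form.

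\emph{The conjugation identity.} Next I would conjugate $\cD_{x}(a\lvert q,t)$ by the multiplication operator $g$. As $g$ is a multiplier, $g^{-1}\circ A^{0}\circ g=A^{0}$, while from the explicit theta product of the first part the quasi-periodicity gives $g^{-1}T_{q,x_{i}}^{\ve}g=q^{-\half\abs{K}}x_{i}^{-\ve\abs{K}}$, a monomial in $q$ and $x_{i}$. Multiplying the shift coefficient $A_{i}^{\ve}(x;a\lvert q,t)$ by this monomial and invoking \eqref{eq_coefficient_symmetries_1}, which relates $A_{i}^{\ve}(x;a\lvert q,t)$ to $A_{i}^{\ve}(x;p^{\epsilon_{K}}a\lvert q,t)$ up to the factor $\prod_{s\in K}(-a_{s}^{-1}x_{i}^{-\ve})=(\prod_{s\in K}a_{s}^{-1})x_{i}^{-\ve\abs{K}}$, the powers of $x_{i}$ cancel exactly because $\abs{K}$ is even (so $(-1)^{\abs{K}}=1$ and the two $x_{i}^{\pm\ve\abs{K}}$ contributions are reciprocal). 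What remains on every shift term is the uniform scalar $q^{-\half\abs{K}}\prod_{s\in K}a_{s}$; combined with the matching scalar for $A^{0}$ coming from \eqref{eq_zeroth_order_symmetries_1}, this assembles into the operator identity $g^{-1}\circ\cD_{x}(a\lvert q,t)\circ g=q^{-\half\abs{K}}(\prod_{s\in K}a_{s})\,\cD_{x}(p^{\epsilon_{K}}a\lvert q,t)$. The point that must be checked carefully here is precisely this cancellation of the $x_{i}$-monomials, for which the evenness of $\abs{K}$ is essential.

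\emph{The eigenfunction statement.} Finally, rewriting the balancing condition \eqref{eq_ellipticity_condition_2} as $\prod_{s}(p^{\epsilon_{K}}a)_{s}\,t^{2m}=p^{2}q^{2}t^{2}$ shows that the shifted parameters $p^{\epsilon_{K}}a$ satisfy the ellipticity condition \eqref{eq_ellipticity_condition_1}; hence by \eqref{eq_vD_constant_eigenvalue_eq} the constant $1$ is an eigenfunction of $\cD_{x}(p^{\epsilon_{K}}a\lvert q,t)$ with eigenvalue $\Lambda_{0}(p^{\epsilon_{K}}a\lvert q,t)$. Applying the conjugation identity to the constant function then gives $\cD_{x}(a\lvert q,t)\,g=q^{-\half\abs{K}}(\prod_{s\in K}a_{s})\,\Lambda_{0}(p^{\epsilon_{K}}a\lvert q,t)\,g$, so $g$ is an eigenfunction; it remains only to match this prefactor to the claimed eigenvalue. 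I would do this from the definition \eqref{eq_vD_constant_eigenvalues}, $\Lambda_{0}(a\lvert q,t)=A^{0}(-;p^{\half}q^{\half}t^{\half}a^{-1}\lvert q,t)$: replacing $a$ by $p^{\epsilon_{K}}a$ turns the argument into $p^{-\epsilon_{K}}(p^{\half}q^{\half}t^{\half}a^{-1})$, and the even-$\abs{K}$ case of \eqref{eq_zeroth_order_symmetries_1} then yields $\Lambda_{0}(p^{\epsilon_{K}}a\lvert q,t)=p^{-\half\abs{K}}t^{\half\abs{K}}(\prod_{s\in K}a_{s}^{-1})\Lambda_{0}(a\lvert q,t)$. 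Substituting this collapses the prefactor to $(qpt^{-1})^{-\half\abs{K}}\Lambda_{0}(a\lvert q,t)$, as required. I expect the bookkeeping of the $p,q,t$-exponents in this last step to be the main obstacle, since it is where the several scaling relations for $A^{0}$ and the definition of $\Lambda_{0}$ must be combined consistently.
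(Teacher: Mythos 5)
Your proposal is correct and follows essentially the same route as the paper's proof: exhibit the theta-product solution and argue uniqueness up to a quasi-constant, conjugate the shift terms using \eqref{eq_coefficient_symmetries_1} (with the evenness of $\abs{K}$ killing both the signs and the $x_{i}$-powers) together with \eqref{eq_zeroth_order_symmetries_1} for the zeroth-order term, and then obtain the eigenvalue by combining the shifted ellipticity condition, the constant eigenfunction of $\cD_{x}(p^{\epsilon_{K}}a\lvert q,t)$, and the scaling $\Lambda_{0}(p^{\epsilon_{K}}a\lvert q,t)=p^{-\half\abs{K}}t^{\half\abs{K}}(\prod_{s\in K}a_{s}^{-1})\Lambda_{0}(a\lvert q,t)$. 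One point worth recording: the multiplier you use, $g^{-1}T_{q,x_{i}}^{\ve}g=q^{-\half\abs{K}}x_{i}^{-\ve\abs{K}}$, is the one actually satisfied by $\prod_{i}\theta(cx_{i}^{\pm};q)^{\half\abs{K}}$ and the one required for the conjugation identity — equivalently $T_{q,x_{i}}g=(qx_{i}^{2})^{-\half\abs{K}}g$ — so the exponent $+\half\abs{K}$ in the lemma's displayed difference equation is a sign typo (compare \eqref{eq_g_q-difference_equation}), which the paper's own proof also silently corrects; your treatment is therefore consistent with the paper's.
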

\begin{proof}
Let $g(x) = g(x\lvert q )$ to simplify notation. It follows from straightforward calculations that $\prod_{1 \leq i \leq m} \theta( c x_{i}^{\pm};q )^{\half \abs{K}}$ satisfies the $q$-difference equation. Comparing the $q$-difference equation satisfied by $g(x)$ with Eq.~\eqref{eq_coefficient_symmetries_1} yields that
\begin{multline}
q^{- \half \abs{K} } (\prod_{s\in K}a_{s})  A_{i}^{\ve}(x ; p^{\epsilon_{K}} a \lvert q , t) T_{q,x_{i}}^{\ve} 
= A_{i}^{\ve}(x;a \lvert q, t) g(x)^{-1}   ({T_{q,x_{i}}^{\ve} g(x)}) \ \! T_{q,x_{i}}^{\ve}   \\
= A_{i}^{\ve}(x;a \lvert q, t) g(x)^{-1} \circ  {T_{q,x_{i}}^{\ve} \circ g(x)} 
\end{multline}
for all $i\in\{1,2,\ldots,m\}$ and $\ve\in\{\pm\}$. Using this and \eqref{eq_zeroth_order_symmetries_2} yields
\begin{equation}
q^{- \half \abs{K}} (\prod_{s\in K} a_{s})\cD_{x}(p^{\epsilon_{K}} a \lvert q , t) = g(x)^{-1} \circ \cD_{x}(a \lvert q , t) \circ g(x).
\end{equation}
which concludes the first part of the proof.

If the parameters satisfy the condition in \eqref{eq_ellipticity_condition_2}, then any constant is an eigenfunction of the operator $\cD_{x}(p^{\epsilon_{K}} a \lvert q , t)$, \emph{i.e.}
\begin{equation}
\cD_{x}( p^{\epsilon_K} a \lvert q , t)\Bigl.\Bigr|_{a_{0}\cdots a_{7} p^{\abs{K}} t^{2m} = p^{2} q^{2} t^{2}}  . 1 = 1 . \Lambda_{0}(p^{\epsilon_{K}} a \lvert q , t )
\end{equation}
and we have that 
\begin{equation}
g(x)^{-1} \circ \cD_{x}(a \lvert q , t) \circ g(x) . 1 = 1. q^{- \half \abs{K}} (\prod_{s\in K} a_{s}) \Lambda_{0}(p^{\epsilon_{K}} a \lvert q , t)
\label{eq_g_eigenvalue_eq}
\end{equation}
if the parameters satisfy \eqref{eq_ellipticity_condition_2}. Furthermore, we have that 
\begin{equation}
\cD_{x}(a \lvert q , t) \circ g(x) . 1 = A^{0}(x ; a \lvert q , t) g(x) +  \sum_{ 1 \leq i \leq m}\sum_{\ve=\pm} A^{\ve}_{i}(x ; a \lvert q , t) T_{q,x_{i}}^{\ve} g(x)
\end{equation}
which yields that $\cD_{x}(a \lvert q , t) \circ g(x) . 1$ is the same as the operator $\cD_{x}(a\lvert q , t )$ acting on $g(x)$. Multiplying both sides of \eqref{eq_g_eigenvalue_eq} with $g(x)$ yields the eigenvalue equation 
\begin{equation}
\cD_{x}(a \lvert q , t)  g(x)  = q^{- \half\abs{K}} (\prod_{s\in K} a_{s}) \Lambda_{0}(p^{\epsilon_{K}} a \lvert q , t) \ g(x),
\end{equation}
if the parameters satisfy \eqref{eq_ellipticity_condition_2}. Using \eqref{eq_vD_constant_eigenvalues} and \eqref{eq_vD_external_potential_1}, we find that
\begin{equation}
\Lambda_{0}(p^{\epsilon_{K}} a \lvert q , t) = p^{-\half \abs{K}} t^{\half \abs{K}} (\prod_{s\in K} a_{s}^{-1}  ) \Lambda_{0}(a \lvert q , t)
\end{equation}
which simplifies the eigenvalues to
\begin{equation}
q^{-\half \abs{K}} (\prod_{s\in K} a_{s} ) \Lambda_{0}(p^{\epsilon_{K}} a \lvert q , t) = (q p t^{-1})^{-\half \abs{K}} \Lambda_{0}(a \lvert q , t)
\end{equation}
and concludes the proof.
\end{proof}
\begin{lemma}
\label{lemma_half-period_shift}

Let $f(x; a \lvert q , t)$ satisfy the $q$-difference equations
\begin{equation}
T_{q,x_{i}} f(x;a \lvert q, t) = p q^{3} t^{-(m-1)}( a_{0} \cdots a_{7})^{-\half} x_{i}^{4} f(x ; a \lvert q , t) \quad (i\in\{1,2,\ldots,m\}).
\end{equation}
Then 
\begin{equation}
f(x;a\lvert q , t) = [\mbox{\emph{quasi-const.}}] \times \prod_{1 \leq i \leq m} \theta( d^{\half} x_{i}^{2} ; q)^{-1},
\end{equation}
where $d= p q^2 t^{1-m} (a_{0} \cdots a_{7})^{-\half}$ and the principal van Diejen operator satisfies 
\begin{equation}
p q^{2} (a_{0} \cdots a_{7})^{-\half} \cD_{p^{\half} x}(a \lvert q , t ) = f(x;a\lvert q, t )^{-1} \circ \cD_{x}(p^{\half} a \lvert q , t) \circ f(x;a\lvert q , t).
\end{equation}
Furthermore, if the parameters satisfy the ellipticity condition in Eq.~\eqref{eq_ellipticity_condition_1}, then $f(x;a \lvert q , t) $ is an eigenfunction of the van Diejen operator $\cD_{x}( p^{\half} a \lvert q , t )$ with eigenvalue 
\begin{equation}
p q^{2} (a_{0} \cdots a_{7})^{-\half} \Lambda_{0}(a \lvert q , t).
\end{equation}
\end{lemma}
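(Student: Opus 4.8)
The plan is to follow the same strategy as in the proof of Lemma~\ref{lemma_parameter_p_shift}, replacing the symmetry relations used there by the half-period coefficient symmetry \eqref{eq_coefficient_symmetries_2} and the zeroth-order symmetry \eqref{eq_zeroth_order_symmetries_2}. First I would verify that the explicit function $\prod_{1\leq i\leq m}\theta(d^{\half}x_{i}^{2};q)^{-1}$ solves the stated $q$-difference equation. Writing $z=d^{\half}x_{i}^{2}$ and using the quasi-periodicity $\theta(q^{2}z;q)=q^{-1}z^{-2}\theta(z;q)$ of the theta function with base $q$ (itself a double application of $\theta(qz;q)=-z^{-1}\theta(z;q)$), one gets $T_{q,x_{i}}\theta(d^{\half}x_{i}^{2};q)^{-1}=qd\,x_{i}^{4}\,\theta(d^{\half}x_{i}^{2};q)^{-1}$, and since $qd=pq^{3}t^{-(m-1)}(a_{0}\cdots a_{7})^{-\half}$ this is exactly the required eigenrelation. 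The appearance of the quasi-constant is then just the usual ambiguity in solving a $q$-difference equation (\emph{cf.} Remark~\ref{remark_quasi-constants}).

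For the conjugation identity, I would compute $f(x)^{-1}\circ\cD_{x}(p^{\half}a\lvert q,t)\circ f(x)$ term by term. The zeroth-order term is unchanged by conjugation, and \eqref{eq_zeroth_order_symmetries_2} rewrites it as $pq^{2}(a_{0}\cdots a_{7})^{-\half}A^{0}(p^{\half}x;a\lvert q,t)$. Each shift term contributes the scalar $f(x)^{-1}(T_{q,x_{i}}^{\ve}f(x))$, which the $q$-difference equation evaluates to $pq^{3}t^{-(m-1)}(a_{0}\cdots a_{7})^{-\half}x_{i}^{4}$ for $\ve=+$ and, after applying $T_{q,x_{i}}^{-1}$ to the eigenrelation, to $p^{-1}q\,t^{m-1}(a_{0}\cdots a_{7})^{\half}x_{i}^{-4}$ for $\ve=-$. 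The crucial step is to combine this scalar with the factor $(p^{-2}a_{0}\cdots a_{7})^{\half(1-\ve)}(qt^{1-m}x_{i}^{4})^{\ve}$ relating $A_{i}^{\ve}(p^{\half}x;a\lvert q,t)$ to $A_{i}^{\ve}(x;p^{\half}a\lvert q,t)$ in \eqref{eq_coefficient_symmetries_2}, and to verify that for both $\ve=\pm$ the product collapses to the single $x$-independent constant $pq^{2}(a_{0}\cdots a_{7})^{-\half}$. Assembling the zeroth-order and shift terms then gives $f(x)^{-1}\circ\cD_{x}(p^{\half}a\lvert q,t)\circ f(x)=pq^{2}(a_{0}\cdots a_{7})^{-\half}\cD_{p^{\half}x}(a\lvert q,t)$, as claimed.

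For the final eigenfunction statement I would evaluate the conjugation identity on the constant function $1$. Under the ellipticity condition \eqref{eq_ellipticity_condition_1} the sum $A^{0}(y;a\lvert q,t)+\sum_{i,\ve}A_{i}^{\ve}(y;a\lvert q,t)$ equals the constant $\Lambda_{0}(a\lvert q,t)$ for all $y$ (this is \eqref{eq_vD_constant_eigenvalue_eq}), and in particular at $y=p^{\half}x$, so $\cD_{p^{\half}x}(a\lvert q,t).1=\Lambda_{0}(a\lvert q,t)$. Feeding this into the conjugation identity gives $f(x)^{-1}(\cD_{x}(p^{\half}a\lvert q,t)f(x))=pq^{2}(a_{0}\cdots a_{7})^{-\half}\Lambda_{0}(a\lvert q,t)$, which rearranges to the desired eigenvalue equation for $f$, with eigenvalue $pq^{2}(a_{0}\cdots a_{7})^{-\half}\Lambda_{0}(a\lvert q,t)$.

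The computations are routine given the already-established symmetries in Lemma~\ref{lemma_zeroth_order_symmetries}, so the only genuinely delicate point is the bookkeeping in the $\ve=-$ term: there the factor $(p^{-2}a_{0}\cdots a_{7})^{\half(1-\ve)}$ is nontrivial and the inverse shift $T_{q,x_{i}}^{-1}f$ picks up an extra power of $q^{4}$ relative to a naive inversion of the eigenrelation. I expect the step most prone to sign and exponent errors to be checking that all powers of $x_{i}$, $t$, $q$, $p$, and $a_{0}\cdots a_{7}$ recombine into the \emph{same} constant $pq^{2}(a_{0}\cdots a_{7})^{-\half}$ for both signs — in particular the cancellation of the $x_{i}^{\pm4}$ factors, which is exactly what certifies that $f$ is an honest gauge factor rather than producing spurious $x$-dependence.
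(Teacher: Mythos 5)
Your proposal is correct and follows essentially the same route as the paper's proof: establishing the conjugation identity by combining the $q$-difference relations for $f$ (including the inverse shift, with its extra $q^{4}$) with the coefficient symmetry \eqref{eq_coefficient_symmetries_2} and the zeroth-order symmetry \eqref{eq_zeroth_order_symmetries_2}, and then evaluating on the constant eigenfunction under the ellipticity condition \eqref{eq_ellipticity_condition_1}. The only difference is presentational — you verify the theta-function formula and the collapse to $pq^{2}(a_{0}\cdots a_{7})^{-\half}$ for both signs $\ve=\pm$ more explicitly than the paper does, which is harmless.
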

\begin{proof}
Let $f(x) = f(x;a\lvert q ,t )$ to simplify notation. We note that $f(x)$ also satisfies the $q$-difference equations
\begin{equation}
T_{q,x_{i}}^{-1} f(x) = \frac{ \theta( d^{\half} x_{i}^{2} ; q)}{ \theta(  q^{-2} d^{\half} x_{i}^{2} ; q)} f(x) = p^{-1} q t^{m-1} (a_{0}\cdots a_{7})^{\half} x_{i}^{-4} f(x) 
\end{equation}
for all $i \in \{1,2,\ldots,m\}$. From the $q$-difference equation for $f(x)$ and Eq. \eqref{eq_coefficient_symmetries_2} it follows that 
\begin{equation}
\label{eq_f_transform}
A^{\ve}_{i}(p^{\half} x ; a \lvert q , t) T_{q,x_{i}}^{\ve} = p^{-1}q^{-2} (a_{0} \cdots a_{7})^{\half} f(x)^{-1} \circ A^{\ve}_{i}(x ; p^{\half} a \lvert q , t) T_{q,x_{i}}^{\ve} \circ f(x).
\end{equation}
Using \eqref{eq_f_transform} and \eqref{eq_zeroth_order_symmetries_2} yields that
\begin{multline}
f(x)^{-1} \circ \cD_{x}(p^{\half} a \lvert q , t) \circ f(x) = 
\sum_{ 1 \leq i \leq m} \sum_{\ve = \pm} f(x)^{-1} \circ A^{\ve}_{i}(x ; p^{\half} a \lvert q , t) T_{q,x_{i}}^{\ve} \circ f(x)\\
+ A^{0}(x ; p^{\half} a \lvert q , t )  = p q^{2} (a_{0} \cdots a_{7} )^{-\half} \bigl( A^{0}(p^{\half} x ; a \lvert q , t ) \bigr. \\
+ \bigl. \sum_{ 1 \leq i \leq m} \sum_{\ve = \pm} A^{\ve}_{i}(p^{\half} x ; a \lvert q , t ) T_{q,x_{i}}^{\ve}\bigr)
= p q^{2} (a_{0} \cdots a_{7} )^{-\half} \cD_{p^{\half} x }(a \lvert q , t)
\end{multline}
and concludes the first part of Lemma~\ref{lemma_half-period_shift}.

If the parameters satisfy the ellipticity condition \eqref{eq_ellipticity_condition_1}, then
\begin{equation}
\cD_{p^{\half} x}(a \lvert q , t)\Bigl.\Bigr|_{a_{0}\cdots a_{7} t^{2m} = p^{2} q^{2} t^{2}} . 1 = 1. \Lambda_{0}(a \lvert q, t),
\end{equation}
since the half-period shift does not affect the constant eigenfunction, and we obtain that 
\begin{multline}
f(x)^{-1} \circ \cD_{x}(p^{\half} a \lvert q ,t )\circ f(x)\Bigl.\Bigr|_{a_{0}\cdots a_{7} t^{2m} = p^{2} q^{2} t^{2}}.1\\
 = p q^{2} (a_{0} \cdots a_{7})^{-\half} \cD_{p^{\half} x }(a \lvert q , t)\Bigl.\Bigr|_{a_{0}\cdots a_{7} t^{2m} = p^{2} q^{2} t^{2}} . 1 \\
= 1. p q^{2} (a_{0} \cdots a_{7})^{-\half} \Lambda_{0}(a \lvert q , t).
\end{multline}
It then follows that $\cD_{x}(p^{\half} a \lvert q , t) f(x) = pq^{2}(a_{0}\cdots a_{7})^{-\half} \Lambda_{0}(a \lvert q, t ) \ \! f(x),$
if the parameters satisfy \eqref{eq_ellipticity_condition_1}, since the conjugation by $f(x)$ reduces to an eigenvalue equation when acting on the constant eigenfunction. This concludes the second part of Lemma~\ref{lemma_half-period_shift}.
\end{proof}

\begin{remark}
Note that $f(x; a \lvert q ,t)$ is not a $W_{m}$-invariant function since shifts in the variables break $W_{m}$-invariance in general, but if the parameters satisfy the ellipticity condition \eqref{eq_ellipticity_condition_1}  then
\begin{equation}
f(x;a \lvert q , t)\Bigl.\Bigr|_{a_{0}\cdots a_{7} t^{2m} = p^{2} q^{2} t^{2}} = \prod_{ 1 \leq i \leq m} \theta( q^{\half} x_{i}^{2} ; q)^{-1} = \prod_{ 1 \leq i \leq m} \theta( q^{\half} x_{i}^{-2} ; q)^{-1},
\end{equation}
and we have that the van Diejen operator $\cD_{p^{\half} x}(a \lvert q , t)$ is $W_{m}$-invariant as well.
\end{remark}

\subsection{Symmetric $\C$-bilinear form.}
\label{sec_scalar_product}

As we discussed, the van Diejen operator is formally self-adjoint with regard to the symmetric $\C$-bilinear form in \eqref{eq_inner_product} with weight function is given in \eqref{eq_weight_function}. In Sections~\ref{sec_intro} and \ref{sec_main_results}, we set the cycle of the integrals to the $n$-dimensional torus for simplicity by restricting the values of the model parameters. In this Section, we present the suitable integration cycles for the multivariate elliptic hypergeometric integrals, by considering the poles of the weight function, and the possible analytic continuation of the model parameters. (At this point, we find it convenient to change variables and parameters from $(x,a)$ to $(y,b)\in(\C^{\ast})^{n}\times(\C^{\ast})^{8}$ as it will make the results of the upcoming Sections more clear.) From the definition of the elliptic Gamma function (\emph{cf.} \eqref{eq_elliptic_Gamma_function}), we see that the weight function $w_{n}(y;b\lvert q , t)$ has poles at
\begin{align}
y_{k}^{\ve} \in b_{s} p^{\Z_ {\geq 0}} q^{\Z_{\geq 0}} & \quad (k \in \{1,2,\ldots,n \}; \ s\in\{0,1,\ldots,7\}; \ \ve\in\{\pm\}),\\
y_{k}^{\ve} y_{l}^{\ve^\prime} \in p^{\Z_{\geq 0}} q^{\Z_{\geq 0}} t & \quad ( 1 \leq k < l \leq n ; \ \ve,\ve^\prime \in \{\pm\}) \label{eq_pair_poles}
\end{align}
 from the elliptic Gamma functions in the numerator. The denominators in the weight function does not give rise to other poles as they are holomorphic in $(\C^{\ast})^{n}$: Using the reflection property in \eqref{eq_reflection_property} and the difference equations \eqref{eq_Gamma_difference_eq}, we find that 
\begin{equation}
\frac{ 1}{ \Gamma( y_{k}^{\pm 2} ; p , q )} = \frac{\Gamma( p q y_{k}^{2} ; p , q )}{\Gamma( y_{k}^{2} ; p , q)} = \theta( p y_{k}^{2} ; p) \theta( y_{k}^{2} ; q) \quad (k\in\{1,2,\ldots,n\}) ,
\end{equation}
for all $k\in\{1,2,\ldots,n \}$, and 
\begin{equation}
\frac{1}{\Gamma(y_{k}^{\pm} y_{l}^{\pm} ; p , q)} = \frac{\Gamma( p q y_{k} y_{l}^{\pm} ; p , q)}{\Gamma( y_{k} y_{l}^{\pm} ; p , q)} = \theta( p y_{k} y_{l}^{\pm};p) \theta(y_{k} y_{l}^{\pm};q) \quad (1 \leq k < l \leq n) ,
\end{equation}
for all $1 \leq k < l \leq n$.
Let us start by considering the $n=1$ case where the poles that accumulate towards zero, and those toward infinity, can be separated into
\begin{align}
S_{0} = S_{0}(b \lvert q) =& \bigl\{ b_{s} p^{\mu} q^{\nu} \lvert s\in\{0,1,\ldots , 7 \}; \ \mu , \nu \in\Z_{\geq0}\bigr\}, \\
S_{\infty} = S_{\infty}(b \lvert q ) =& \bigl\{ b_{s}^{-1} p^{-\mu} q^{-\mu} \lvert  s\in\{0,1,\ldots,7\}; \ \mu, \nu \in\Z_{\geq 0}\bigr\}.
\end{align}
(Recall that the van Diejen model does not depend on the parameter $t$ in the one-variable case.)
It is then possible to find a positively oriented, closed curve that separates the sets $S_{0}$ and $S_{\infty}$ if $S_{0} \cap S_{\infty} = \emptyset$. This is equivalent to the condition
\begin{equation}
b_{r} b_{s} \notin p^{\Z_{\leq 0}} q^{\Z_{\leq 0}}  \quad ( r , s \in \{0,1,\ldots,7\})
\label{eq_seperation_condition}
\end{equation}
for the model parameters. Define the circle \cite{NI17}
\begin{equation}
\mathcal{C}_{\rho}(0) = \{ u \in \C^{\ast} \lvert \ \abs{u} = \rho\}, \quad \rho\in [R , R^{-1}]
\end{equation}
for some $R\in (0 , 1]$ such that $\mathcal{C}_{\rho}(0)$ does not intersect $S_{0} \cup S_{\infty}$. The cycle is then defined by
\begin{equation}
\mathcal{C}_{1} = \mathcal{C}_{\rho}(0)  + \sum_{c \in S_{0} : \ \abs{c} > \rho} \mathcal{C}_{\tilde\epsilon}(c) - \sum_{c\in S_{\infty} : \ \abs{c} < \rho} \mathcal{C}_{\tilde\epsilon}(c),
\end{equation}
where $\mathcal{C}_{\tilde\epsilon}(c)$ denotes a sufficiently small circle around the point $c$. Since the weight function $w_{1}(y;b \lvert q ,t )$ is holomorphic around the cycle $\mathcal{C}_{1}$, we have that the integral
\begin{equation}
\int_{\mathcal{C}_{1}} \frac{d y_{1}}{2\pi \imag y_{1}}  \ w_{1}(y; b \lvert q , t)
\end{equation}
is well-defined and does not depend on the choice of $\rho\in[ R , R^{-1}]$. When considering the general $n \in \Z_{\geq 1}$ case, we can take $\abs{t} < R^{2}$ to obtain that $w_{n}(y ; b \lvert q ,t)$ is holomorphic around a neighbourhood of the $n$-cycle \cite{NI17,NI19}
\begin{equation}
\mathcal{C}_{n} =  \underbrace{\mathcal{C}_{1} \times \cdots \times \mathcal{C}_{1}}_{n \text{ times}}.
\label{eq_m_cycle}
\end{equation}
We will often write $\mathcal{C}_{n}(b \lvert q , t)$ for the $n$-cycles to indicate the positions of the poles.
If the parameters satisfy $\abs{b_{s}}< 1$ for all $s\in \{0,1,\ldots,7\}$ and $\abs{t}<1$, then $w_{n}(y ; b \lvert q , t)$ is holomorphic in a neighbourhood of the $n$-dimensional torus $\mathbb{T}^{n} = \mathbb{T}^{n}_{1}$. The integration cycles $\mathcal{C}_{n}$ and $\mathbb{T}^{n}$ are then homologous in the domain of holomorphy of the integrand by Cauchy's theorem. In the following, it is assumed that the model parameters satisfy these restrictions. The analytic continuation with respect to the parameters are given in Section~\ref{sec_domain_of_holomorphy}. 
\begin{remark}
\label{remark_EHI_more_parameters}
We note that the statements above also hold if we increase the number of parameters in the elliptic hypergeometric integral from 8 to any non-negative integer; see also Section~\ref{sec_domain_of_holomorphy}. Thus, we can use the results above when the gauge function $U_{K}$ and the Cauchy-type kernel function are included in the integrals by replacing the 8 parameters above with $2m- \abs{K}+8$ parameters.
\end{remark}

{\medskip}

We are now in a position to show that the van Diejen operator is formally self-adjoint (or symmetric) with respect to the symmetric $\C$-bilinear form in \eqref{eq_inner_product}. In the following, we revert back to using $(x,a)$ and we can assume that the model parameters satisfy $\abs{a_{s}}<1$ and $\abs{t}<1$ in order to simplify the integration cycle to the $m$-torus $\mathbb{T}^{m}$ without loss of generality. Before stating the results on the self-adjointness of the van Diejen operator, we need some preliminaries. Let us introduce the multi-index notation $T_{q,x}^{\mu} = T_{q,x_1}^{\mu_{1}} \cdots T_{q,x_{m}}^{\mu_{m}}$, for any $x \in(\C^{\ast})^{m}$ and $\mu \in\C^{m}$, which allows us to express the van Diejen operator as
\begin{equation}
\cD_{x}(a\lvert q , t) = \sum_{\mu\in\Z^{m}} F_{\mu}(x ; a \lvert q, t ) T_{q,x}^{\mu}
\end{equation}
where the sum is finite. Then the formal adjoint with respect to $d \omega_{m}(x)$, denoted by $\cD_{x}(a \lvert q , t )^{\ast}$, is given by
\begin{equation}
\cD_{x}( a \lvert q , t)^{\ast} = \sum_{\mu\in\Z^{m}} T_{q,x}^{-\mu} F_{\mu}(x ; a \lvert q , t).
\end{equation}
The van Diejen operator is then formally self-adjoint (or symmetric) with respect to the weight function $w_{m}(x; a \lvert q, t)$ in \eqref{eq_weight_function} in the following sense:
\begin{lemma}
\begin{equation}
\label{eq_vD_formal_self-adjoint}
\cD_{x}(a \lvert q , t)^{\ast} = w_{m}(x ; a \lvert q , t) \circ \cD_{x}(a \lvert q , t) \circ w_{m}(x ; a \lvert q , t)^{-1}.
\end{equation}
\label{lemma_vD_formal_self-adjoint}
\end{lemma}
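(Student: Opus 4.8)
The plan is to expand both operators in the basis of $q$-shifts and compare coefficients. Writing the van Diejen operator as $\cD_{x}(a\lvert q,t) = A^{0} + \sum_{i}\bigl(A_{i}^{+}T_{q,x_{i}} + A_{i}^{-}T_{q,x_{i}}^{-1}\bigr)$, the defining formula $\cD_{x}(a\lvert q,t)^{\ast} = \sum_{\mu}T_{q,x}^{-\mu}F_{\mu}$, put into coefficient-times-shift form, carries $T_{q,x_{i}}$ with coefficient $T_{q,x_{i}}A_{i}^{-}$ and $T_{q,x_{i}}^{-1}$ with coefficient $T_{q,x_{i}}^{-1}A_{i}^{+}$; conjugation by $w_{m}$ instead carries $T_{q,x_{i}}$ with coefficient $w_{m}A_{i}^{+}/(T_{q,x_{i}}w_{m})$ and $T_{q,x_{i}}^{-1}$ with coefficient $w_{m}A_{i}^{-}/(T_{q,x_{i}}^{-1}w_{m})$. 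The zeroth-order terms agree trivially, since $A^{0}$ is a multiplication operator and is untouched by either operation. Matching the coefficients of $T_{q,x_{i}}$ therefore reduces the entire statement \eqref{eq_vD_formal_self-adjoint} to the single functional identity
\[ T_{q,x_{i}}\bigl(w_{m}(x)\,A_{i}^{-}(x;a\lvert q,t)\bigr) = w_{m}(x)\,A_{i}^{+}(x;a\lvert q,t)\qquad(i\in\{1,\ldots,m\}), \]
and matching $T_{q,x_{i}}^{-1}$ yields the same identity after applying $T_{q,x_{i}}^{-1}$, so the two sign choices are equivalent.

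To prove this identity I would factor $w_{m}$ and $A_{i}^{\pm}$ into the part depending on $x_{i}$ and the part independent of it; the latter is unaffected by $T_{q,x_{i}}$ and cancels from both sides. The $x_{i}$-dependent part of $w_{m}$ splits as an external factor $\prod_{s}\Gamma(a_{s}x_{i}^{\pm};p,q)/\Gamma(x_{i}^{\pm2};p,q)$ times a pairwise factor $\prod_{j\neq i}\Gamma(tx_{i}^{\pm}x_{j}^{\pm};p,q)/\Gamma(x_{i}^{\pm}x_{j}^{\pm};p,q)$, and $A_{i}^{\pm}$ from \eqref{eq_coefficients_positive_multiplicative} splits correspondingly into an external and a pairwise theta factor. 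Because $T_{q,x_{i}}$ acts multiplicatively, it is enough to verify the identity separately on the external factor and on each pairwise factor. Each verification is a short telescoping computation using only the difference equation $\Gamma(qz;p,q)=\theta(z;p)\Gamma(z;p,q)$ (Appendix~\ref{sec_theta_and_gamma_functions}): shifting $x_{i}\to qx_{i}$ turns $\Gamma(a_{s}x_{i};p,q)$ into $\theta(a_{s}x_{i};p)\Gamma(a_{s}x_{i};p,q)$, producing precisely the extra theta factor that $A_{i}^{+}$ carries, while the combinations $\theta(q^{-1}z;p)\Gamma(q^{-1}z;p,q)=\Gamma(z;p,q)$ reabsorb the theta factors of $A_{i}^{-}$ into the shifted Gamma functions; the denominators $\Gamma(x_{i}^{\pm2};p,q)$ and $\Gamma(x_{i}^{\pm}x_{j}^{\pm};p,q)$ transform in the same manner. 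Collecting all factors reproduces $w_{m}A_{i}^{+}$ exactly.

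The only genuine work lies in the pairwise factor, where each $j\neq i$ contributes the four monomials $x_{i}^{\pm}x_{j}^{\pm}$ and one must track which Gamma functions acquire a theta factor and which absorb one; this is the step where a misplaced shift or sign is easiest to make, so I would organise it as one uniform application of $\Gamma(qz;p,q)=\theta(z;p)\Gamma(z;p,q)$ rather than case-by-case. Otherwise no hard step arises: once the identity $T_{q,x_{i}}(w_{m}A_{i}^{-})=w_{m}A_{i}^{+}$ is established, the formal self-adjointness \eqref{eq_vD_formal_self-adjoint} follows immediately by comparison of coefficients. As a consistency check, note that $w_{m}$ is $W_{m}$-invariant (in particular invariant under $x_{i}\to x_{i}^{-1}$) and that $A_{i}^{-}(x)=A_{i}^{+}(x)\big|_{x_{i}\to x_{i}^{-1}}$, so the identity is equivalent to the invariance of $w_{m}A_{i}^{+}$ under the affine reflection $x_{i}\mapsto q^{-1}x_{i}^{-1}$, which is exactly the balance built into the shift relation for $\Gamma$.
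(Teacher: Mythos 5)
Your proof is correct, and it takes a genuinely different route from the paper's. You reduce the operator identity, by comparing coefficients in the shift basis, to the single functional equation $T_{q,x_{i}}\bigl(w_{m}\,A_{i}^{-}\bigr)=w_{m}\,A_{i}^{+}$, and then verify that equation factor by factor via $\Gamma(qz;p,q)=\theta(z;p)\Gamma(z;p,q)$; this reduction is sound because the shifts $T_{q,x}^{\mu}$ are independent over meromorphic coefficients, the zeroth-order term is inert under both the adjoint and the conjugation, and the two sign choices of the coefficient comparison are equivalent, exactly as you note. The paper instead factorizes the weight as $w_{m}(x)=w_{m}^{+}(x)\,w_{m}^{+}(x^{-1})$, observes that each $A_{i}^{\pm}$ is a logarithmic difference quotient $T_{q,x_{i}}^{\pm1}\sigma w_{m}^{+}/\sigma w_{m}^{+}$ of a Weyl translate of the half weight, rewrites $\cD_{x}(a\lvert q,t)$ as a $W_{m}$-symmetrization of $\frac{T_{q,x}^{\epsilon_{1}}w_{m}^{+}}{w_{m}^{+}}\,T_{q,x}^{\epsilon_{1}}$, and conjugates using the $W_{m}$-invariance of $w_{m}$. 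The paper's factorization is more structural: it exhibits the mechanism behind self-adjointness without ever manipulating the pairwise theta factors explicitly, at the cost of the symmetrization bookkeeping (stabilizer orders and the choice of $\sigma$). Your argument is more elementary and self-contained; moreover, the identity you isolate is precisely the $q$-difference relation $T_{q,x_{m}}w_{m}/w_{m}=A_{m}^{+}/(T_{q,x_{m}}A_{m}^{-})$ that the paper later invokes ``by straightforward calculations'' in its proof of self-adjointness with respect to the bilinear form, so your telescoping computation also supplies the verification of that step. Your closing consistency check --- that the identity is equivalent to invariance of $w_{m}A_{i}^{+}$ under $x_{i}\mapsto q^{-1}x_{i}^{-1}$, given the $W_{m}$-invariance of $w_{m}$ and $A_{i}^{-}(x)=A_{i}^{+}(x)\big|_{x_{i}\to x_{i}^{-1}}$ --- is also correct.
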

\begin{proof}
The weight function $w_{m}(x;a\lvert q , t)$ can be decomposed as 
\begin{equation}
w_{m}(x ; a \lvert q , t) = w_{m}^{+}( x ; a \lvert q ,t ) \ w_{m}^{+}(x^{-1} ; a \lvert q , t), 
\end{equation}
with 
\begin{equation}
w_{m}^{+}(x ; a \lvert q , t) = \prod_{1 \leq i \leq m} \frac{\prod_{ 0 \leq s\leq 7} \Gamma( a_{s} x_{i} ; p , q)}{\Gamma( x_{i}^{2} ; p , q)} \prod_{1 \leq i < j \leq m} \frac{\Gamma( t x_{i}x_{j}^{\pm} ; p , q)}{\Gamma( x_{i} x_{j}^{\pm} ; p , q)}
\end{equation}
satisfying the $q$-difference equation
\begin{equation}
\frac{T_{q,x_{1}} w_{m}^{+}(x ; a \lvert q , t )}{w_{m}^{+}(x;a \lvert q ,t )} = \frac{\prod_{0 \leq s \leq 7} \theta( a_{s} x_{1} ; p)}{\theta(x_{1}^{2} , q x_{1}^{2} ; p)} \prod_{2 \leq j \leq m} \frac{\theta( t x_{1} x_{j}^{\pm};p) }{\theta( x_{1} x_{j}^{\pm} ; p )}.
\end{equation}
Recalling the coefficients $A_{i}^{\pm}(x; a \lvert q , t)$ in \eqref{eq_coefficients_positive_multiplicative}, it is clear that there exists a $\sigma \in W_{m}$ such that
\begin{equation}
A_{i}^{\pm}(x; a \lvert q , t) = \frac{T_{q,x_{i}}^{\pm1} \sigma w_{m}^{+}(x ; a \lvert q, t) }{ \sigma w_{m}^{+}(x ; a \lvert q, t)} \quad ( i\in\{1,2,\ldots,m\})
\end{equation}
which allows us to express the van Diejen operator as
\begin{equation}
\cD_{x}(a \lvert q, t) = A_{0}(x; a \lvert q, t) + \frac{1}{2^{n-1} (n-1)! }\sum_{\sigma \in W_{m} } \sigma\Bigl(  \frac{(T_{q,x}^{\epsilon_{1}} w_{m}^{+}(x ; a \lvert q , t) )}{w_{m}^{+}(x ; a \lvert q , t)} T_{q,x}^{\epsilon_{1}} \Bigr) ,
\end{equation} 
where $\epsilon_{i}$ with $i\in\{1,2,\ldots,m\}$ are the canonical basis in $\C^{m}$ and $2^{n-1} (n-1)!$ is the order of the stabilizer sub-group. We then obtain the desired relation by
\begin{multline}
w_{m}(x)^{-1} \circ \cD_{x}(a \lvert q, t )^{\ast} \circ w_{m}(x) = A_{0}(x; a \lvert q ,t ) \\
+ \frac{1}{2^{n-1} (n-1)! }\sum_{ \sigma \in W_{m} } \sigma \Bigl( w_{m}^{+}(x)^{-1} w_{m}^{+}(x^{-1})^{-1} \circ T_{q,x}^{-\epsilon_{1}} \circ ( T_{q,x}^{\epsilon_{1} } w_{m}^{+}(x)) w_{m}^{+}(x^{-1})\Bigr) \\
= \cD_{x}(a \lvert q , t), \quad \quad \quad\quad\quad \quad \quad \quad\quad\quad \quad \quad \quad\quad\quad
\end{multline}
where we used that the weight function $w_{m}(x;a \lvert q ,t )$ is $W_{m}$-invariant and simplify the notation by writing $w_{m}^{+}(x) = w_{m}^{+}(x ;a \lvert q , t)$. 
\end{proof}
From the result of Lemma~\ref{lemma_vD_formal_self-adjoint}, we find that 
\begin{equation}
w_{m}(x;a \lvert q, t) \cD_{x}(a\lvert q ,t ) f(x) = \cD_{x}(a\lvert q, t )^{\ast}( w_{m}(x; a \lvert q, t) f(x))
\end{equation}
which makes it straightforward to check that the van Diejen operator is formally self-adjoint.
\begin{lemma}
The van Diejen operator $\cD_{x}(a\lvert q , t)$ is formally self-adjoint with respect to the symmetric $\C$-bilinear form in \eqref{eq_inner_product}, {i.e.}
\begin{equation}
\langle \cD_{x}(a\lvert q , t) f , g \rangle = \langle  f , \cD_{x}(a\lvert q , t) g \rangle .
\end{equation}
\end{lemma}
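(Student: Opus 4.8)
The plan is to reduce the statement entirely to the conjugation identity of Lemma~\ref{lemma_vD_formal_self-adjoint} together with the defining property of the formal adjoint with respect to the measure $d\omega_{m}(x)$. First I would make precise what $\cD_{x}(a\lvert q,t)^{\ast}$ means at the level of the pairing: since the logarithmic measure $d\omega_{m}(x)=\frac{1}{(2\pi\imag)^{m}}\frac{dx_1\cdots dx_m}{x_1\cdots x_m}$ is invariant under the multiplicative shifts $x_i\mapsto qx_i$, each operator $T_{q,x_i}$ is formally transposed to $T_{q,x_i}^{-1}$ under $\int_{\mathbb{T}^{m}}d\omega_m$, the coefficient being moved across the shift. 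This is exactly the prescription $\cD_{x}(a\lvert q,t)^{\ast}=\sum_{\mu\in\Z^{m}}T_{q,x}^{-\mu}F_{\mu}(x;a\lvert q,t)$ recorded above, and it furnishes the formal integration-by-parts identity
\[
\int_{\mathbb{T}^{m}} d\omega_m(x)\,\bigl(\cD_{x}(a\lvert q,t)f\bigr)\,h
=\int_{\mathbb{T}^{m}} d\omega_m(x)\,f\,\bigl(\cD_{x}(a\lvert q,t)^{\ast}h\bigr)
\]
for suitable $f$ and $h$.

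Next I would apply this identity with the specific choice $h(x)=w_{m}(x;a\lvert q,t)\,g(x)$. Recalling that $\langle f,g\rangle=\int_{\mathbb{T}^{m}}d\omega_m(x)\,w_m(x;a\lvert q,t)\,f(x)g(x)$, the left-hand side is precisely $\langle \cD_{x}(a\lvert q,t)f,\,g\rangle$, while the right-hand side becomes $\int_{\mathbb{T}^{m}}d\omega_m(x)\,f(x)\,\cD_{x}(a\lvert q,t)^{\ast}\bigl(w_m(x;a\lvert q,t)g(x)\bigr)$. At this point I invoke the consequence of Lemma~\ref{lemma_vD_formal_self-adjoint} displayed just above the statement, namely $\cD_{x}(a\lvert q,t)^{\ast}(w_m\,g)=w_m\,\cD_{x}(a\lvert q,t)g$. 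Substituting and pulling the weight back out of the product gives
\[
\langle \cD_{x}(a\lvert q,t)f,\,g\rangle
=\int_{\mathbb{T}^{m}} d\omega_m(x)\,w_m(x;a\lvert q,t)\,f(x)\,\bigl(\cD_{x}(a\lvert q,t)g(x)\bigr)
=\langle f,\,\cD_{x}(a\lvert q,t)g\rangle,
\]
which is the asserted symmetry.

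The essential analytic content is thus already packaged in Lemma~\ref{lemma_vD_formal_self-adjoint}, and the only genuinely delicate point is the first step: the legitimacy of transposing the $q$-shifts under the integral. Because $\abs{q}<1$, the substitution $x_i\mapsto qx_i$ does not preserve the torus $\mathbb{T}^{m}$ but rescales it, so a fully rigorous justification would require deforming the scaled cycle back to $\mathbb{T}^{m}$ and verifying that no poles of the integrand are crossed in the relevant annuli, i.e.\ that the residue contributions cancel. Since the claim is only that $\cD_{x}(a\lvert q,t)$ is \emph{formally} self-adjoint, I would treat this transposition at the formal level—taking the integration-by-parts identity as the definition of $\cD_{x}(a\lvert q,t)^{\ast}$ relative to $d\omega_m$—and defer the analytic justification (admissible cycles, absence of poles) to the contour analysis developed in Section~\ref{sec_scalar_product}. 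With that convention the three displayed equalities complete the argument, and I expect the write-up to be short.
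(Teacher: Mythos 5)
Your algebraic skeleton is the same one the paper gestures at (the paper itself remarks, right before the lemma, that Lemma~\ref{lemma_vD_formal_self-adjoint} ``makes it straightforward to check'' self-adjointness), but the step you explicitly defer --- transposing $T_{q,x_i}$ under the integral --- is not a removable formality: it is the entire content of the paper's proof. The bilinear form \eqref{eq_inner_product} is a concrete contour integral, and both sides of $\langle \cD_{x}f,g\rangle=\langle f,\cD_{x}g\rangle$ are numbers; ``formally self-adjoint'' here means only that no Hilbert-space/domain issues are addressed, not that a contour shift may be asserted without proof. The paper's proof consists precisely of this justification: it first reduces WLOG to $\abs{p}<\abs{q}$ (holomorphic dependence on $p$), then for the term $\oint w_m\, g\, A^{+}_{m}T_{q,x_m}f$ it computes the product $w_{m}(x;a\lvert q,t)A^{+}_{m}(x;a\lvert q,t)$ explicitly --- the elliptic Gamma factors partially cancel, leaving $\Gamma(qa_sx_m,a_sx_m^{-1};p,q)$, $\Gamma(qtx_mx_j^{\pm},tx_m^{-1}x_j^{\pm};p,q)$, etc.\ --- and lists the poles of the resulting integrand, checking that under $\abs{a_s}<1$, $\abs{t}<1$, $\abs{p}<\abs{q}$ they all lie outside the closed annulus $1\le\abs{x_m}\le\abs{q}^{-1}$. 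Only then does Cauchy's theorem allow deforming the cycle to $\abs{x_m}=\abs{q}^{-1}$, after which the $q$-difference equation of the weight and the rescaling $x_m\to q^{-1}x_m$ yield $\langle A^{+}_{m}T_{q,x_m}f,g\rangle=\langle f,A^{-}_{m}T^{-1}_{q,x_m}g\rangle$, and $W_m$-symmetry finishes the argument. Without this pole analysis your second displayed equality is unproven, and it is the only nontrivial one.

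Your proposed deferral is also circular: the ``contour analysis developed in Section~\ref{sec_scalar_product}'' is the section containing this very lemma, and the part of it preceding the lemma only locates the poles of the weight function $w_n$ itself (in order to define the cycles $\mathcal{C}_n$); it says nothing about the poles of the shifted products $w_m A^{\pm}_m$, which sit at different points (e.g.\ $x_m\in a_s^{-1}p^{\Z_{\le 0}}q^{\Z_{\le 1}}$, $tx_mx_j^{\pm}\in p^{\Z_{\le 0}}q^{\Z_{\le 1}}$) and are exactly what must be shown to avoid the annulus swept by the deformation. Two smaller points: when the deformation is legitimate, no poles are crossed at all, so there are no ``residue contributions to cancel'' --- if poles were crossed, the identity would genuinely acquire residue corrections and could fail; and the hypotheses the paper imposes ($f,g$ being $W_m$-invariant and holomorphic in a domain containing $\mathbb{A}^{m}_{\abs{q}}$) are not decoration --- the invariance is what reduces the claim to the single term $A^{+}_{m}T_{q,x_m}$ and supplies the companion identity for $A^{-}_{m}T^{-1}_{q,x_m}$ by inversion, and the holomorphy on the annulus is needed for the shifted integrand to make sense. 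A complete write-up must include these restrictions and the pole check; as it stands, your proposal records the easy algebra and omits the proof.
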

\begin{proof}
Let $f(x)$ and $g(x)$ be $W_{m}$-invariant functions that are holomorphic in a domain that contains the $m$-dimensional annulus $\mathbb{A}_{\abs{q}}^{n}$. It is clear that zeroth order coefficient $A^{0}(x;a\lvert q ,t)$ is symmetric with respect to the symmetric $\C$-bilinear form, and that we need only consider the shift terms in the van Diejen operator.
Recall that we can set the model parameters to have absolute value less than $1$ and note that the scalar product depends holomorphically on $p$ ($\abs{p}<1$) so that we can assume $\abs{p} < \abs{q}$ without loss of generality.
Let us start by considering the term involving  $T_{q, x_{m}}$, \emph{i.e.}
\begin{equation}
\oint_{\abs{x_{m}} = 1} \frac{d x_{m}}{2\pi \imag x_{m}} w_{m}(x; a \lvert q , t) g(x) A^{+}_{m}(x; a \lvert q , t) T_{q,x_{m}} f(x).
\end{equation}
The product of the coefficient $A^{+}_{m}(x;a\lvert q , t)$ and the weight function $w_{m}(x;a \lvert q, t )$ equals
\begin{equation}
w_{m-1}(x ; a \lvert q , t) \frac{\prod_{0 \leq s \leq 7} \Gamma( q a_{s} x_{m} , a_{s} x_{m}^{-1} ; p , q)}{\Gamma( q^{2} x^{2}_{m} , x_{m}^{-2} ; p , q )} \prod_{ 1 \leq j \leq m-1} \frac{\Gamma( q t x_{m} x_{j}^{\pm}, t x_{m}^{-1} x_{j}^{\pm} ; p , q )}{\Gamma( q x_{m} x_{j}^{\pm}, x_{m}^{-1} x_{j}^{\pm}  ; p , q) },
\end{equation}
where $w_{m-1}(x ; a \lvert q , t)$ is the weight function in the $(m-1)$-variables $(x_{1},\ldots, x_{m-1})$. 
(Here, all the other $x_{i}$ variables ($i \in \{1,2,\ldots,m-1\}$) are fixed to generic values on the $(m-1)$-torus.) 
The integrand has poles at 
\begin{align}
& x_{m} \in a_{s}^{-1} p^{\Z_{\leq 0}} q^{\Z_{\leq 1}}, \quad x_{m} \in a_{s} p^{\Z_{\geq 0}} q^{\Z_{\geq 0}} \ (s\in\{0,1,\ldots,7\}), \quad 
x_{m}^{2} \in p^{\Z_{\neq 0}} q^{-1},\\
& t x_{m} x_{j}^{\pm} \in p^{\Z_{\leq 0}} q^{\Z_{\leq 1}} , \quad  t x_{m}^{-1} x_{j}^{\pm} \in p^{\Z_{\leq 0} } q^{\Z_{\leq 0}} \ (j\in\{1,2,\ldots,m-1\}).
\end{align} 
All these poles lie outside the annulus $1 \leq \abs{x_{m}} \leq \abs{q^{-1}}$ under the assumptions $\abs{p} < \abs{q}$, $\abs{a_{s}} <1$, and $\abs{t} < 1$. We can therefore continuously deform the contour to $\abs{x_{m}} = \abs{q^{-1}}$ by Cauchy's theorem. Secondly, we see that the weight function $w_{m}(x;a\lvert q, t)$ satisfies the $q$-difference equation
\begin{equation}
\frac{T_{q,x_{m}}w_{m}(x;a\lvert q, t )}{w_{m}(x;a \lvert q ,t )} = \frac{A^{+}_{m}(x;a \lvert q, t)}{T_{q,x_{m}} A_{m}^{-}(x;a \lvert q , t) }
\end{equation}
by straightforward calculations, which yields that 
\begin{equation}
w_{m}(x; a \lvert q , t) A^{+}_{m}(x; a \lvert q , t) T_{q,x_{m}} f(x) = T_{q,x_{m}}w_{m}(x; a \lvert q , t) A^{-}_{m}(x; a \lvert q , t) f(x) .
\end{equation}
Combining these results allows us to express the integral as
\begin{multline}
\oint_{\abs{x_{m}} = 1} \frac{d x_{m}}{2\pi \imag x_{m}} \  w_{m}(x; a \lvert q , t) \  g(x) \ A^{+}_{m}(x; a \lvert q , t) T_{q,x_{m}} f(x) \\ 
= \oint_{\abs{x_{m}} = \abs{q^{-1}} } \frac{d x_{m}}{2\pi \imag x_{m}} \  g(x) \ T_{q,x_{m}} w_{m}(x; a \lvert q , t)A^{-}_{m}(x; a \lvert q , t)  f(x) \\
= \oint_{\abs{x_{m}} = 1} \frac{d x_{m}}{2\pi \imag x_{m}} \ w_{m}(x; a \lvert q , t) \ f(x)  \ A^{-}_{m}(x; a \lvert q , t) T_{q,x_{m}}^{-1} g(x),
\end{multline}
where we rescaled $x_{m} \to q^{-1} x_{m}$ in the last step. Thus, we have shown that 
\begin{equation}
\langle A_{m}^{+}(x; a \lvert q , t) T_{q,x_{m}} f , g \rangle = \langle f ,A_{m}^{-}(x; a \lvert q , t) T_{q,x_{m}}^{-1}  g \rangle.
\end{equation}
The identity also holds if we exchange $A^{+}_{m}(x;a\lvert q,t )T_{q,x_{m}}$ and $A^{-}_{m}(x;a\lvert q, t)T_{q,x_{m}}^{-1}$ by inversion symmetry and we obtain that the operator
\begin{equation}
A^{+}_{m}(x;a \lvert q, t ) T_{q,x_{m}} + A^{-}_{m}(x;a\lvert q, t) T_{q,x_{m}}^{-1}
\end{equation}
is symmetric with regard to this $\C$-bilinear form. This also holds for any permutation of the $x$-variables and we obtain the desired results.
\end{proof}

\section{Proofs of main results}
\label{sec_proofs}
In this Section we give the proofs for Theorems~\ref{thm_Selberg_type_II}, \ref{thm_Selberg_type_I}, \ref{thm_Cauchy_eigenfunction_transform_Type_II}, \ref{thm_Cauchy_eigenfunction_transform_Type_I}, \ref{thm_dual-Cauchy_eigenfunction_transform_Type_II}, and \ref{thm_dual-Cauchy_eigenfunction_transform_Type_I}. In the following, the parameters are restricted such that the integration cycles are given the $n$-dimensional torus $\mathbb{T}^{n}$ in our main theorems and the analytic continuation of these integral transformations are given in Section~\ref{sec_domain_of_holomorphy}.
\begin{proof}[Proof of Theorem~\ref{thm_Cauchy_eigenfunction_transform_Type_II}]
It is convenient to introduce some further notation. 
Define the expectation value with respect to the weight function $w_{n} = w_{n}(y ; b \lvert q , t)$ by
\begin{equation}
\label{eq_expectation_value}
\langle f \rangle_{w_{n}} = \int_{\mathbb{T}^{n}} d\omega_{n}(y) w_{n}(y;b \lvert q, t ) f(y),
\end{equation}
for any function $f(y)$ such that $w_{n}(y;b\lvert q, t) f(y)$ is holomorphic for $y$ in $\mathbb{T}^n$. The symmetric $\C$-bilinear form for this weight function $w_{n}(y ;b \lvert q,  t)$ can then be expressed as
\begin{equation}
\langle f , g \rangle = \langle f g \rangle_{w_{n}},
\end{equation}
and the integral transform can be expressed as $\langle \Phi \ U_{K} \ \varphi \rangle_{w_{n}} $. The poles of the integrand are separable, in the sense of Section~\ref{sec_scalar_product}, as the $x$-variables satisfy $p q a_{s}^{-1} x_{i}^{\ve}\notin p^{\Z_{\leq 0}}q^{\Z_{\leq 0}}$ and $p q t^{-1} x_{i}^{\ve} x_{j}^{\ve^\prime} \notin p^{\Z_{\leq 0}}q^{\Z_{\leq 0}}$ for all $s\notin K$, $\ve,\ve^{\prime}\in\{\pm\}$, and distinct $i,j\in\{1,2,\ldots,m\}$. (Recall that $a = p^{\half} q^{\half} t^{\half} b^{-1}$.)  Thus, the existence of a cycle $\mathcal{C}_{n}$ where the integrand is holomorphic is ensured; see also Remark~\ref{remark_EHI_more_parameters}. For $x$-variables in the domain \eqref{eq_theorem_domain} it is clear that $\abs{c x^{\ve}}<1$, where $c=p^{\half} q^{\half}t^{-\half}$, for both $\ve\in\{\pm\}$ and the cycle $\mathcal{C}_{n}$ is homologous to $\mathbb{T}^{n}$. The condition $\abs{p} < \abs{q t }$ ensures that this $m$-dimensional annuls \eqref{eq_theorem_domain} is non-empty.

Let us proceed by showing that $T_{q,x_i}^{\ve}  \langle \Phi \  U_{K}  \varphi \rangle_{w_{n}} =  \langle T_{q,x_{i}}^{\ve} \Phi \ U_{K} \varphi \rangle_{w_{n}}$ for all $i \in \{1,2,\ldots,m\}$ and $\ve\in\{\pm\}$: It is straightforward to check that $T_{q,x_{i}}^{\ve}\Phi(x,y\lvert q , t)$, with $x$ in \eqref{eq_theorem_domain}, is holomorphic for $y$ in $\mathbb{T}^{n}$ since $\abs{ p^{\half} q^{\half} t^{-\half} q^{\ve} x_{i}^{\ve^\prime} y_{k}^{\ve^{\prime \prime}} } < 1$ for all $i\in\{1,2,\ldots,m\}$, $k\in\{1,2,\ldots,n\}$, and $\ve,\ve^\prime, \ve^{\prime \prime} \in\{\pm\}$.  The stricter conditions that $x_{i}^{\ve}\notin a_{s} p^{\Z_{<0}} q^{\Z_{\leq 0}}$ ($s\notin K$, $i\in\{1,2,\ldots,m\}$, and $\ve\in \{\pm\}$) and $x_{i}^{\ve} x_{j}^{\ve^\prime} \notin p^{\Z_{<0}} q^{\Z_{\leq 0}} t$ ( $1 \leq i < j \leq m$ and $\ve , \ve^\prime \in\{\pm\})$ ensures that the poles are separable, in the sense of Section~\ref{sec_scalar_product}, even after $q$-shifts. Therefore, we can interchange the shift operator and integrations without encountering any poles and obtain that $\cD_{x}(a\lvert q , t) \langle \Phi \  U_{K} \varphi \rangle_{w_{n}} = \langle ( \cD_{x}(a\lvert q , t)\Phi) \  U_{K} \varphi \rangle_{w_{n}}.$ The kernel function identity in Lemma~\ref{lemma_Cauchy_KFI} then yields that
\begin{equation}
\cD_{x}(a \lvert q , t)  \langle \Phi \  U_{K} \varphi \rangle_{w_{n}} = \langle (\cD_{y}(b\lvert q, t)\Phi) \  U_{K} \varphi \rangle_{w_{n}} 
\end{equation}
if the parameters satisfy 
\begin{equation}
a_{0}\cdots a_{7} t^{2(m-n)} = p^{2} q^{2} t^{2} \quad \Leftrightarrow \quad  b_{0} \cdots b_{7} t^{2(n-m)} = p^{2} q^{2} t^{2}.
\end{equation}
Using that the van Diejen operator is (formally) self-adjoint, \emph{i.e.}  the operator satisfies $
\langle ( \cD_{y}(b \lvert q , t) \Phi) \  U_{K}  \varphi \rangle_{w_{n}} = \langle \Phi \ \cD_{y}(b \lvert q , t)   U_{K}  \varphi \rangle_{w_{n}}, $ the result in Lemma~\ref{lemma_gauge_transform_U}, and that $\varphi(y)$ is an eigenfunction of the van Diejen operator $\cD_{y}(b_{K(pq)}\lvert q, t)$ yields
\begin{equation}
\langle (\cD_{y}(b\lvert q, t)\Phi) \  U_{K} \varphi \rangle_{w_{n}} = \langle \Phi \  U_{K} \cD_{y}(b_{K(pq)}\lvert q, t)\varphi \rangle_{w_{n}} = \Lambda \langle \Phi \  U_{K} \varphi \rangle_{w_{n}}.
\end{equation}
Combining these results and setting $\langle \Phi \  U_{K} \varphi \rangle_{w_{n}} = \psi_{K}(x ; a\lvert q ,t )$ yields the desired results, \emph{i.e.}
\begin{equation}
\cD_{x}(a \lvert q , t) \psi_{K}(x ; a \lvert q, t ) = \Lambda \psi_{K}(x ; a \lvert q, t)
\end{equation}
if the parameters satisfy \eqref{eq_balancing_condition_KFI}.

For $\abs{K} > 0$, we do not get any contradiction between the balancing condition $b_{0} \cdots b_{7} t^{2(n-m)} = p^{2} q^{2} t^{2}$ and that $\abs{b_s} < 1$ for all $s\notin K$, since they can be satisfied by parameters whose index is in $K$. In the case where $K=\emptyset$, we have that $\abs{ b_{0} \cdots b_{7}} < 1$ and obtain the requirement that 
\begin{equation}
\abs{p^{2} q^{2} t^{2}} = \abs{ b_0 \cdots b_7 t^{2(n-m)}} < \abs{ t^{2(n-m)}}
\end{equation}
in order for these conditions to be satisfied at the same time. Therefore, we need that $\abs{p} < \abs{q^{-1} t^{n-m-1}}$, which is automatically satisfied if $\abs{p} < \abs{q t } < \abs{ q^{-1} t^{n-m-1}}$.
\end{proof}

\begin{proof}[Proof of Theorem~\ref{thm_Cauchy_eigenfunction_transform_Type_I}]
The proof of Theorem~\ref{thm_Cauchy_eigenfunction_transform_Type_I} is essentially the same as the proof of Theorem~\ref{thm_Cauchy_eigenfunction_transform_Type_II}. As such, we will not repeat all the arguments and focus on the differences.

Using the same notation $\langle f \rangle_{w_n}$ for the expectation value in \eqref{eq_expectation_value}, we have that the integral transform can be expressed as $\langle \Phi U_{K} V \varphi \rangle$. The integral is well-defined by the same arguments as before, and we have that  
\begin{multline}
\cD_{x}(a \lvert q ,t )\langle \Phi \ U_{K} V \varphi \rangle_{w_n} = \langle (\cD_{x}(a\lvert q, t) \Phi) \ U_{K} V \varphi \rangle_{w_n} \\
= \langle ( \cD_{y}(b \lvert q , t)\Phi) \ U_{K} V \varphi \rangle_{w_n} = \langle \Phi \ \cD_{y}( b \lvert q , t) U_{K} V \varphi \rangle_{w_n}
\end{multline}
if the parameters satisfy the balancing condition \eqref{eq_balancing_condition_KFI}. Here, the interchange of integration and the shift operators are justified by the same argument as in the proof of Theorem~\ref{thm_Cauchy_eigenfunction_transform_Type_II}. Using the results of Lemmas~\ref{lemma_gauge_transform_V} and \ref{lemma_gauge_transform_U} yields that
\begin{multline}
\langle \Phi \ \cD_{y}(b \lvert q , t) U_{K} V \varphi \rangle_{w_n} = \langle \Phi \ U_{K} \cD_{y}(b_{K(pq)} \lvert q , t) V \varphi \rangle_{w_n} \\
= \langle \Phi \ U_{K} V \cD_{y}( b_{K(pq)} \lvert q , p q t^{-1}) \varphi \rangle_{w_n} = \Lambda \langle \Phi \ U_{K} V \varphi \rangle_{w_n},
\end{multline}
where we used that $\varphi$ is an eigenfunction of $\cD_{y}( b_{K(pq)} \lvert q , p q t^{-1})$ in the last line.
Setting $\langle \Phi \ U_{K} V \varphi \rangle_{w_n} = \tilde{\psi}_{K}(x ; a \lvert q , t)$ yields the desired result, \emph{i.e.}
\begin{equation}
\cD_{x}(a \lvert q , t) \tilde{\psi}_{K}(x ; a \lvert q, t) = \Lambda \tilde{\psi}_{K}(x ; a \lvert q , t),
\end{equation}
under the balancing condition \eqref{eq_balancing_condition_KFI}.
\end{proof}

Using these results, it is straightforward to prove Theorem~\ref{thm_Selberg_type_II} and \ref{thm_Selberg_type_I}:
\begin{proof}[Proof of Theorems~\ref{thm_Selberg_type_II} and \ref{thm_Selberg_type_I}]
Setting $\varphi = 1$ in Theorem~\ref{thm_Cauchy_eigenfunction_transform_Type_II}, resp. Theorem~\ref{thm_Cauchy_eigenfunction_transform_Type_I}, yields that \eqref{eq_Selberg_eigenfunction_type_II}, resp. \eqref{eq_Selberg_eigenfunction_type_I}, is an eigenfunction of the van Diejen operator if the parameters satisfy the balancing condition \eqref{eq_balancing_condition_KFI} and that $1$ is an eigenfunction of the operator $\cD_x(b_{K(pq)}\lvert q , t)$, resp. $\cD_x(b_{K(pq)}\lvert q , p q t^{-1})$. From \eqref{eq_vD_constant_eigenvalue_eq}, it follows that $1$ is an exact eigenfunction if the parameters satisfy
\begin{equation}
(\prod_{0 \leq s \leq 7} (b_{K(pq)})_{s}) t^{2n} = p^{2} q^{2} t^{2}, \quad \text{resp.} \quad (\prod_{0\leq s \leq 7} (b_{K(pq)})_{s}) (p q t^{-1})^{2n} = p^{2} q^{2} (p q t^{-1})^{2}.
\end{equation}

Combining this condition and the balancing condition \eqref{eq_balancing_condition_KFI} yields the two parameter restrictions.
\end{proof}
{\smallskip}

\begin{proof}[Proof of Theorem~\ref{thm_dual-Cauchy_eigenfunction_transform_Type_II}]
We introduce the notation
\begin{equation}
\langle f \rangle^{\vee}_{w_{n}} = \int_{\mathbb{T}^{n}} d\omega_{n}(y) w_{n}(y ; a \lvert t , q) f(y),
\end{equation}
for suitable functions $f$, to simplify notation. The eigenfunction transform with the dual-Cauchy kernel function can then be written as $\langle \Phi^{\vee} U_{K} \varphi \rangle^{\vee}_{w_{n}}.$ The poles of the integrand arises only from the weight function $w_{n}(y;a\lvert t , q)$ and it follows that the integrand is holomorphic and that the cycle is homologous to $\mathbb{T}^{n}$ from the discussions in Section~\ref{sec_scalar_product}, albeit with $q$ and $t$ interchanged. 

Using the dual kernel function identity in Lemma~\ref{lemma_dual-Cauchy_KFI} yields that
\begin{equation}
\cD_{x}(a \lvert q , t) \langle \Phi^{\vee} U_{K} \varphi \rangle^{\vee}_{w_{n}} = - \frac{t^{m}\theta(q ;p)}{q^{n} \theta(t ; p)} \langle ( \cD_{y}(a \lvert t , q) \Phi^{\vee}) U_{K} \varphi \rangle^{\vee}_{w_{n}}
\end{equation}
if the parameters satisfy $a_{0} \cdots a_{7} q^{2n} t^{2m} = p^{2} q^{2} t^{2}$. 
(Here, the interchange of integration and the $q$-shift operator is straightforward since the integrand does not have any poles with respect to the $x$-variables.)
From Lemmas~\ref{lemma_gauge_transform_U} and \ref{lemma_vD_formal_self-adjoint}, we have that
\begin{equation}
\langle (\cD_{y}(a \lvert t , q) \Phi^{\vee}) U_{K} \varphi \rangle^{\vee}_{w_{n}} = \langle \Phi^{\vee} \cD_{y}(a \lvert t , q)U_{K} \varphi \rangle^{\vee}_{w_{n}} = \langle \Phi^{\vee} U_{K} \cD_{y}(a_{K(pt)} \lvert t , q) \varphi \rangle^{\vee}_{w_{n}}
\end{equation}
by straightforward calculations.
Using that $\varphi$ is an eigenfunction of the van Diejen operator $\cD_{y}(a_{K(pt)} \lvert t , q)$ and that $\psi_{K}^{\vee}(x;a\lvert q , t) = \langle \Phi^{\vee} U_{K} \varphi \rangle^{\vee}_{w_{n}}$, yields the desired result, \emph{i.e.}
\begin{multline}
\cD_{x}(a\lvert q , t) \psi_{K}^{\vee}(x ; a \lvert q , t) = - \frac{t^{m}\theta(q ;p)}{q^{n} \theta(t ; p)} \langle \Phi^{\vee} U_{K} \cD_{y}(a_{K(pt)} \lvert t , q) \varphi \rangle^{\vee}_{w_{n}}  \\
= - \frac{t^{m}\theta(q ;p)}{q^{n} \theta(t ; p)} \Lambda \psi_{K}^{\vee}(x;a \lvert q, t ).
\end{multline}

In the case where $K \neq \emptyset$, we again do not have any contradiction between the dual balancing condition $a_{0} \cdots a_{7} q^{2n} t^{2m} = p^{2} q^{2} t^{2}$ and that $\abs{a_s} <1$ for all $s\notin K$. When $K=\emptyset$, we have that $\abs{a_0 \cdots a_{7}} <1$ and require that
\begin{equation}
\abs{p^{2} q^{2} t^{2}} = \abs{ a_{0} \cdots a_{7} q^{2n} t^{2m}} < \abs{ q^{2n} t^{2m}},
\end{equation}
which yields the condition $\abs{p} < \abs{q^{n-1} t^{m-1}}$.
\end{proof}

\begin{proof}[Proof of Theorem~\ref{thm_dual-Cauchy_eigenfunction_transform_Type_I}]
The proof of Theorem~\ref{thm_dual-Cauchy_eigenfunction_transform_Type_I} is essentially the same as the proof of Theorem~\ref{thm_dual-Cauchy_eigenfunction_transform_Type_II}.
Writing the integral transform as $\langle \Phi^\vee U_{K} V \varphi \rangle^{\vee}_{w_{n}}$ yields
\begin{multline}
\langle (\cD_{y}(a \lvert t , q)\Phi^\vee) U_{K} V \varphi \rangle^{\vee}_{w_{n}} = \langle \Phi^\vee \cD_{y}(a \lvert t , q) U_{K} V \varphi \rangle^{\vee}_{w_{n}} \\
= \langle \Phi^\vee U_{K} \cD_{y}(a_{K(pt )} \lvert t , q ) V \varphi \rangle^{\vee}_{w_{n}} = \langle \Phi^\vee U_{K} V \cD_{y}(a_{K(pt)} \lvert t , p t q^{-1} )\varphi \rangle^{\vee}_{w_{n}}.
\end{multline}
If $\varphi$ is an eigenfunction of the van Diejen operator $\cD_{y}(a_{K(pt)} \lvert t , p t q^{-1} )$ with eigenvalue $\Lambda$, we find that
\begin{equation}
\cD_{x}(a\lvert q, t) \langle \Phi^\vee U_{K} V \varphi \rangle^{\vee}_{w_{n}} = -\frac{t^{m} \theta( q ;p) }{q^{n} \theta(t ;p)} \Lambda \langle \Phi^\vee U_{K} V \varphi \rangle^{\vee}_{w_{n}}.
\end{equation}
Setting $\tilde{\psi}_{K}^{\vee}(x;a\lvert q , t) = \langle \Phi^\vee U_{K} V \varphi \rangle^{\vee}_{w_{n}}$ then yields the desired result.
\end{proof}

\section{Domain of holomorphy}
\label{sec_domain_of_holomorphy}
The integral transformations allow for an analytic continuation with respect to the parameters and the $x$-variables. In this Section, we discuss the analytic continuation and the domains of the integral transformations.

Before proceeding, it is more convenient to extend the weight function to arbitrary number of parameters, \emph{i.e.} $b \to \tilde{b} = (\tilde{b}_{0}, \ldots, \tilde{b}_{\ell -1 }) \in (\C^{\ast})^{\ell}$, by
\begin{equation}
w_{n,\ell}(y ; \tilde{b} \lvert q , t) = \prod_{1 \leq k \leq n} \frac{\prod_{0 \leq s\leq \ell-1} \Gamma( \tilde{b}_{s} y_{k}^{\pm} ; p , q)}{\Gamma( y_{k}^{\pm 2} ; p , q)} \prod_{1 \leq k < l \leq n} \frac{\Gamma( t y_{k}^{\pm} y_{l}^{\pm} ; p , q)}{\Gamma( y_{k}^{\pm} y_{l}^{\pm} ; p , q)}.
\end{equation}
The analytic continuation of the elliptic hypergeometric integral is then obtained from Lemma~5.1 of \cite{NI19}:
\begin{lemma}[Lemma~5.1 of \cite{NI19}]
\label{lemma_NI19}
Let $\ell\in\Z_{\geq 0}$, $\tilde{b}=(\tilde{b}_{0} ,\ldots , \tilde{b}_{\ell-1})\in (\C^{\ast})^{\ell}$, and $f(y ; \tilde{b})$ be a holomorphic function on $(\C^{\ast})^{n}\times (\C^{\ast})^{\ell}$. Suppose the parameters satisfy $\abs{\tilde{b}_{s}} < 1$ for all $s\in\{0,1,\ldots, \ell-1 \}$ and $\abs{t} < R^{2}$ for some $R\in (0,1]$. Then the integral
\begin{equation}
\int_{\mathbb{T}^{n}} d\omega_{n}(y) \ w_{n,\ell}( y ; \tilde{b} \lvert q , t) f(y ; \tilde{b})
\label{eq_elliptic_hypergeometric_integral}
\end{equation}
can be continued to a holomorphic function on
\begin{equation}
\label{eq_holomorphic_domain}
\{ \tilde{b} = (\tilde{b}_{0} , \ldots , \tilde{b}_{\ell - 1} )\in (\C^{\ast})^{\ell} \lvert \ \abs{\tilde{b}_{s}} < R^{-1} ,  \tilde{b}_{r} \tilde{b}_{s} \notin p^{\Z_{\leq 0}} q^{\Z_{\leq 0}}  \  ( r,s \in \{0,1,\ldots,\ell -1\})\}.
\end{equation}
\end{lemma}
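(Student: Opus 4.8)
The plan is to obtain the continuation by deforming the torus $\mathbb{T}^{n}$ into a cycle that keeps separating the poles of the integrand which accumulate at $0$ from those which diverge to $\infty$, as the parameters leave the polydisc $\abs{\tilde{b}_{s}}<1$. First I would identify these poles. Since $f(y;\tilde{b})$ is holomorphic on $(\C^{\ast})^{n}\times(\C^{\ast})^{\ell}$ and the reciprocal factors $1/\Gamma(y_{k}^{\pm2};p,q)$ and $1/\Gamma(y_{k}^{\pm}y_{l}^{\pm};p,q)$ are holomorphic in $y$ (being products of theta functions, as computed just above), the only poles of the integrand come from the numerator factors $\prod_{s}\Gamma(\tilde{b}_{s}y_{k}^{\pm};p,q)$ and from the pair factors $\Gamma(ty_{k}^{\pm}y_{l}^{\pm};p,q)$. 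Recalling that $\Gamma(z;p,q)$ has poles exactly on $z\in p^{-\Z_{\geq0}}q^{-\Z_{\geq0}}$, the single-variable poles in $y_{k}$ sit on $\tilde{b}_{s}p^{\Z_{\geq0}}q^{\Z_{\geq0}}$ (accumulating at $0$) and on $\tilde{b}_{s}^{-1}p^{-\Z_{\geq0}}q^{-\Z_{\geq0}}$ (diverging to $\infty$), while the pair poles lie where $ty_{k}^{\pm}y_{l}^{\pm}\in p^{-\Z_{\geq0}}q^{-\Z_{\geq0}}$.

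I would treat the one-variable case first, reusing the cycle $\mathcal{C}_{1}$ constructed in Section~\ref{sec_scalar_product}. Grouping the single-variable poles into the sets $S_{0}$ and $S_{\infty}$ as there, a point of $S_{0}$ meets a point of $S_{\infty}$ precisely when $\tilde{b}_{r}p^{\mu}q^{\nu}=\tilde{b}_{s}^{-1}p^{-\mu'}q^{-\nu'}$, that is when $\tilde{b}_{r}\tilde{b}_{s}\in p^{\Z_{\leq0}}q^{\Z_{\leq0}}$; hence the hypothesis $\tilde{b}_{r}\tilde{b}_{s}\notin p^{\Z_{\leq0}}q^{\Z_{\leq0}}$ excluded in \eqref{eq_holomorphic_domain} is exactly the separation condition $S_{0}\cap S_{\infty}=\emptyset$ of \eqref{eq_seperation_condition}. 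Under this separation the cycle $\mathcal{C}_{1}$ is well defined, its integral is independent of the auxiliary radius $\rho\in[R,R^{-1}]$, and, because the poles move continuously with $\tilde{b}$ and never cross $\mathcal{C}_{1}$ while $\abs{\tilde{b}_{s}}<R^{-1}$, the integral over $\mathcal{C}_{1}$ is holomorphic in $\tilde{b}$ throughout the domain \eqref{eq_holomorphic_domain}. When $\abs{\tilde{b}_{s}}<1$ every point of $S_{0}$ lies inside and every point of $S_{\infty}$ outside $\mathbb{T}^{1}$, so $\mathcal{C}_{1}$ reduces to $\mathbb{T}^{1}$ and the two integrals agree; this identifies the continuation.

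For general $n$ I would take the product cycle $\mathcal{C}_{n}=\mathcal{C}_{1}\times\cdots\times\mathcal{C}_{1}\subset\mathbb{A}^{n}_{R}$ of \eqref{eq_m_cycle}. The role of the hypothesis $\abs{t}<R^{2}$ is to disarm the pair poles: on $\mathbb{A}^{n}_{R}$ one has $\abs{ty_{k}^{\pm}y_{l}^{\pm}}\leq\abs{t}R^{-2}<1$, so $ty_{k}^{\pm}y_{l}^{\pm}$ never meets $p^{-\Z_{\geq0}}q^{-\Z_{\geq0}}$ (all of whose points have modulus at least $1$) and the pair factors remain holomorphic on $\mathcal{C}_{n}$. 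Consequently the integral over $\mathcal{C}_{n}$ is holomorphic in $\tilde{b}$ on \eqref{eq_holomorphic_domain}, and it agrees with the torus integral for $\abs{\tilde{b}_{s}}<1$, since there the cycle reduces to $\mathbb{T}^{n}$. This yields the claimed continuation.

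The step I expect to be the main obstacle is this multivariate deformation. Unlike the single-variable poles, the pair poles couple $y_{k}$ and $y_{l}$, so one must verify that the simple product cycle can avoid all of them simultaneously rather than merely variable by variable; this is exactly what forces the quantitative coupling $\abs{t}<R^{2}$ between the contour radius $R$ and $t$, and it is the source of the trade-off between the size $R^{-1}$ of the admissible parameter region and the allowed range of $\abs{t}$. A careful treatment would also check that the local choices (the radius $\rho$ and the small circles around individual poles) patch into a single cycle whose homology class depends holomorphically, in the locally constant sense, on $\tilde{b}$ over the whole domain.
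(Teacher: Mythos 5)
Your proposal is correct and takes essentially the approach underlying the paper: the paper does not reprove this lemma (it is imported from Lemma~5.1 of \cite{NI19}), but your argument — separating the pole sets $S_{0}$ and $S_{\infty}$ via the condition $\tilde{b}_{r}\tilde{b}_{s}\notin p^{\Z_{\leq 0}}q^{\Z_{\leq 0}}$, building the product cycle $\mathcal{C}_{n}$ from the one-variable cycle, and using $\abs{t}<R^{2}$ together with $\abs{\tilde{b}_{s}}<R^{-1}$ to keep the pair poles $t\,y_{k}^{\pm}y_{l}^{\pm}\in p^{\Z_{\leq 0}}q^{\Z_{\leq 0}}$ away from the cycle — is exactly the construction the paper sets up in Section~\ref{sec_scalar_product} and on which the cited proof rests. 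Your closing caveat about patching the locally chosen cycles into a single-valued continuation is the right remaining technicality, and it is handled in the standard way (any two admissible cycles are homologous in the complement of the singular set, so the integral depends only on the separation property).
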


The product of the Cauchy kernel function in \eqref{eq_Cauchy_KF} and the weight function in \eqref{eq_weight_function} can be expressed as 
\begin{equation}
\Phi(x,  y \lvert q , t) w_{n}(y; b \lvert q , t) = w_{n,2m+8}( y ; \tilde{b} \lvert q, t)
\end{equation}
where $\tilde{b}\in(\C^{\ast})^{2m + 8}$ is given by
\begin{equation}
\label{eq_b_vector}
\tilde{b}= ( b_{0},\ldots, b_{7}, c x_{1}, c x_{1}^{-1} ,\ldots,  c x_{m} , c x_{m}^{-1}),
\end{equation}
with $c = p^{\half} q^{\half} t^{-\half}$.
The results of \cite[Lemma~5.1]{NI19} allows us to then find the domain of holomorphy in a straightforward way by starting in the domain $\abs{\tilde{b}_{s}}<1$ for all $s\in \{0,1,\ldots, \ell -1 \}$, with $\ell =2m+8$, as well as the cases where the gauge functions $U_{K}(y;b\lvert q)$ in \eqref{eq_gauge_function_U} are included. 

The requirements on the function $f(y; \tilde{b})$ in Lemma~\ref{lemma_NI19} can also be relaxed and we only require that the function is holomorphic in the domain $\mathbb{A}^{n}_{\rho} \times \cU$, where $\cU$ is the domain in \eqref{eq_holomorphic_domain}.

\begin{remark}
\label{remark_Type_I_integrals}
Note that the integrand for elliptic hypergeometric integrals of Type I do not have poles at 
\begin{equation}
y_{k}^{\ve} y_{l}^{\ve^\prime} \in  p^{\Z_{\geq 0} }q^{\Z_{\geq0}} t \quad (1 \leq k < l \leq n; \ \ve,\ve^\prime\in\{\pm\}).
\end{equation}
The results of Lemma~\ref{lemma_NI19} still hold, although without any requirement on the parameter $t$ and it follows from Cauchy's theorem that the integral does not depend on $R\in(0,1]$. For $f(y ; \tilde{b})$ holomorphic in $\mathbb{A}^{n}_{\rho} \times (\C^{\ast})^{\ell}$ for some $\rho\in (0,1]$, the elliptic hypergeometric integrals of Type I can be analytically continued to a holomorphic function on
\begin{equation}
\{ \tilde{b}\in(\C^{\ast})^{\ell } \lvert \ \abs{\tilde{b}_{r}} < \rho^{-1} , \ \tilde{b}_{r}\tilde{b}_{s} \notin p^{\Z_{\leq 0}} q^{\Z_{\leq 0}} \ (r,s\in\{0,1,\ldots,\ell - 1 \} )\}.
\end{equation}
\end{remark}

We are now able to find the analytic continuation of the eigenfunction transformations in a straightforward way:
\begin{proposition}
\label{prop_Cauchy_kernel_transform}
Let $R \in (0,1]$ and $x\in \mathbb{T}^{m}$. Suppose the model parameters satisfy
\begin{equation}
\abs{p} < \min( 1 , \abs{q^{-1} t } ) , \quad \abs{t} < R^{2} , \quad  \abs{a_s} > \abs{(p q t)^{\half}}  \ (s\in\{0,1,\ldots,7\}),
\end{equation}                            
and let $\varphi(y)$ be holomorphic in $(\C^{\ast})^{n}$. Then the function
\begin{equation}
\psi(x;a \lvert q, t) = \int_{\mathbb{T}^{n}} d \omega_{n}(y) \ w_{n}(y ; b \lvert q , t) \Phi(x , y \lvert q ,t ) \varphi(y),
\end{equation}
where $b=(b_0 , \ldots, b_7)$ with $b_s = p^{\half} q^{\half} t^{\half} a_{s}^{-1}$ for all $s\in\{0,1,\ldots,7\}$, can be analytically continued to a holomorphic function on
\begin{multline}
\label{eq_image_domain_Cauchy}
\bigl\{ x\in (\C^{\ast})^{m} \lvert \  \abs{ (p q t^{-1})^{\half} } R < \abs{x_{i}} <  \abs{ (p q t^{-1})^{-\half}} R^{-1} \ ( i \in\{1,2,\ldots,m\} ), \bigr. \\
x_{i}^{\ve} \notin a_{s} p^{\Z_{<0}} q^{\Z_{<0}} \ (s\in\{0,1,\ldots,7\} ; \ i\in\{1,2,\ldots,m\} ; \ \ve\in\{\pm\} ), \\
\bigl. x_{i}^{\ve} x_{j}^{\ve^\prime} \notin  p^{\Z_{<0}} q^{\Z_{<0}} t \  ( 1 \leq i < j \leq m ; \ \ve , \ve^\prime \in \{\pm\})\bigr\}
\end{multline}
and parameters 
\begin{equation}
 \quad \abs{p} < \min( 1 , \abs{q^{-1} t } R^{-2}  ) , \quad \abs{a_s} > \abs{(p q t)^{\half}}R  \quad (s\in\{0,1,\ldots,7\}),
\end{equation}      
if they satisfy $t \notin p^{\Z_{>0}} q^{\Z_{>0}}$ and $a_{r}a_{s} \notin p^{\Z_{>0}} q^{\Z_{>0}} t$ for all $r,s\in\{0,1,\ldots,7\}$.
\end{proposition}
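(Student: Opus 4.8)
The plan is to recognise the integrand as a single $BC_{n}$ elliptic hypergeometric integral in an enlarged set of parameters, and then to invoke Lemma~\ref{lemma_NI19} essentially verbatim. Using the factorisation $\Phi(x,y\lvert q,t)\,w_{n}(y;b\lvert q,t)=w_{n,2m+8}(y;\tilde{b}\lvert q,t)$ recorded above, with $\tilde{b}\in(\C^{\ast})^{2m+8}$ as in \eqref{eq_b_vector}, $\tilde b=(b_0,\ldots,b_7,cx_1,cx_1^{-1},\ldots,cx_m,cx_m^{-1})$ and $c=p^{\half}q^{\half}t^{-\half}$, the transform becomes $\int_{\mathbb{T}^{n}}d\omega_{n}(y)\,w_{n,2m+8}(y;\tilde b\lvert q,t)\,\varphi(y)$. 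Since $b_{s}=p^{\half}q^{\half}t^{\half}a_{s}^{-1}$, the hypothesis $\abs{a_{s}}>\abs{(pqt)^{\half}}$ gives $\abs{b_{s}}<1$, while $\abs{p}<\abs{q^{-1}t}$ gives $\abs{c}<1$, hence $\abs{cx_{i}^{\pm}}=\abs{c}<1$ for $x\in\mathbb{T}^{m}$. Together with $\abs{t}<R^{2}$, all hypotheses of Lemma~\ref{lemma_NI19} hold with $\ell=2m+8$ and $f(y;\tilde b)=\varphi(y)$, which is holomorphic on $(\C^{\ast})^{n}$ and hence a fortiori of the admissible type.

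First I would apply Lemma~\ref{lemma_NI19} to continue $\psi(x;a\lvert q,t)$ holomorphically in $\tilde b$ to the domain \eqref{eq_holomorphic_domain}, namely $\{\abs{\tilde b_{s}}<R^{-1},\ \tilde b_{r}\tilde b_{s}\notin p^{\Z_{\leq 0}}q^{\Z_{\leq 0}}\}$. It then only remains to translate these conditions back into conditions on $(x,a)$. The modulus bounds split according to the two kinds of entries of $\tilde b$: for $s\in\{0,\ldots,7\}$ the bound $\abs{b_{s}}<R^{-1}$ is equivalent to $\abs{a_{s}}>\abs{(pqt)^{\half}}R$, giving the stated parameter constraint; for the paired entries the bound $\abs{cx_{i}^{\pm}}<R^{-1}$ is equivalent to $\abs{(pqt^{-1})^{\half}}R<\abs{x_{i}}<\abs{(pqt^{-1})^{-\half}}R^{-1}$, which is precisely the annulus in \eqref{eq_image_domain_Cauchy}. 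Its non-emptiness is exactly the relaxed constraint $\abs{p}<\abs{q^{-1}t}R^{-2}$, which together with $\abs{p}<1$ yields $\abs{p}<\min(1,\abs{q^{-1}t}R^{-2})$.

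Next I would translate the non-resonance conditions $\tilde b_{r}\tilde b_{s}\notin p^{\Z_{\leq 0}}q^{\Z_{\leq 0}}$ by running through the types of index pairs, using $b_{r}b_{s}=pqt\,(a_{r}a_{s})^{-1}$, $b_{s}(cx_{i}^{\ve})=pq\,a_{s}^{-1}x_{i}^{\ve}$, and $(cx_{i}^{\ve})(cx_{j}^{\ve'})=pqt^{-1}x_{i}^{\ve}x_{j}^{\ve'}$. These become, respectively, $a_{r}a_{s}\notin p^{\Z_{>0}}q^{\Z_{>0}}t$, then $x_{i}^{\ve}\notin a_{s}p^{\Z_{<0}}q^{\Z_{<0}}$, and then $x_{i}^{\ve}x_{j}^{\ve'}\notin p^{\Z_{<0}}q^{\Z_{<0}}t$ for distinct $i,j$; the degenerate same-index product $(cx_{i})(cx_{i}^{-1})=c^{2}=pqt^{-1}$ yields the $x$-independent condition $t\notin p^{\Z_{>0}}q^{\Z_{>0}}$. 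Collecting these reproduces the three families of exclusions in \eqref{eq_image_domain_Cauchy} together with the residual parameter hypotheses $t\notin p^{\Z_{>0}}q^{\Z_{>0}}$ and $a_{r}a_{s}\notin p^{\Z_{>0}}q^{\Z_{>0}}t$ of the proposition.

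The substantive content is not computational — the translations above are elementary — but conceptual: one must check that enlarging $b$ to $\tilde b$ and continuing in $\tilde b$ legitimately continues $\psi$ as a function of the original variables $(x,a)$. This is guaranteed because the $x$-dependence of the integrand enters \emph{only} through the holomorphic parametrisation $x\mapsto(cx_{1},cx_{1}^{-1},\ldots,cx_{m},cx_{m}^{-1})$, so that Lemma~\ref{lemma_NI19}, which provides holomorphy jointly in all $2m+8$ entries of $\tilde b$, restricts along this embedding to a holomorphic function of $x$ on the domain \eqref{eq_image_domain_Cauchy}; the pairing of the two entries $cx_{i}$ and $cx_{i}^{-1}$ associated with each $x_{i}$ is exactly what produces the $t\notin p^{\Z_{>0}}q^{\Z_{>0}}$ constraint. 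Thus the main point to be careful about is the bookkeeping of the paired parameters and the verification that the embedding $x\mapsto\tilde b$ lands inside the holomorphy domain of the continued integral throughout the stated annulus.
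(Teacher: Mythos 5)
Your proposal is correct and takes essentially the same route as the paper's own proof: both absorb the Cauchy kernel into the weight function via the enlarged parameter vector $\tilde b$ of \eqref{eq_b_vector}, invoke Lemma~\ref{lemma_NI19} with $\ell=2m+8$, and then translate the conditions $\abs{\tilde b_s}<R^{-1}$ and $\tilde b_r\tilde b_s\notin p^{\Z_{\leq 0}}q^{\Z_{\leq 0}}$ back into the stated constraints on $(x,a)$, with the same case analysis (parameter--parameter, parameter--variable, variable--variable pairs, and the same-index pair $(cx_i)(cx_i^{-1})=pqt^{-1}$ giving $t\notin p^{\Z_{>0}}q^{\Z_{>0}}$). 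Your explicit remark that holomorphy in $(x,a)$ follows by restricting the jointly holomorphic continuation along the embedding $x\mapsto(cx_1,cx_1^{-1},\ldots,cx_m,cx_m^{-1})$ is a point the paper leaves implicit, but it is the same argument.
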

\begin{proof}
As already mentioned, we can express the integral transform as
\begin{equation}
 \int_{\mathbb{T}^{n}} d \omega_{n}(y) \ w_{n,2m+8}(y ; \tilde{b} \lvert q , t) \varphi(y),
\end{equation}
with $\tilde{b}$ as in \eqref{eq_b_vector}. The integral defines a holomorphic function on $\abs{\tilde{b}_{s}}<1$ for all $s\in\{0,1,\ldots, 2m+7\}$ and it is straightforward to check that it yields a holomorphic function of $x\in\mathbb{T}^{n}$ as $\abs{p^{\half} q^{\half} t^{-\half}} < 1$. Since $\abs{t}< R^{2}$ ($R \in (0,1]$), the integral can be continued to a holomorphic function on 
\begin{equation}
\abs{\tilde{b}_{s}} < R^{-1} \quad (s\in\{0,1,\ldots,2m+7\})
\end{equation}
by Lemma~\ref{lemma_NI19}, if the parameters satisfy 
\begin{equation}
\tilde{b}_{s} \tilde{b}_{r} \notin p^{\Z_{\leq 0}} q^{\Z_{\leq 0}}
\end{equation}
for all $r,s \in\{0,1,\ldots,2m+7\}$. Let us start by considering the first condition, \emph{i.e.} $\abs{\tilde{b}_{s}} < R^{-1}$ for all $s\in\{0,1,\ldots,2m+7\}$. When $s\in\{0,1,\ldots, 7\}$, we have that
\begin{equation}
\abs{\tilde{b}_{s}} = \abs{b_{s}} = \abs{ (p q t)^{\half} a_{s}^{-1}} < R^{-1} \quad (s\in\{0,1,\ldots,7\})
\end{equation}
which yields the condition $\abs{a_{s}} > \abs{ (p q t)^{\half} } R$. If $s\in\{8,9,\ldots,2m+7\}$, then
\begin{equation}
\abs{\tilde{b}_{s}} < R^{-1} \Leftrightarrow \begin{cases} \abs{(p q t^{-1})^{\half} x_{i} }< R^{-1} , \ &\text{ if } s = 2 i + 6 \\
 \abs{(p q t^{-1})^{\half} x_{i}^{-1} } < R^{-1} , \ &\text{ if } s = 2 i + 7
\end{cases}
\quad ( i\in\{1,2,\ldots,m\} ).
\end{equation}
These yield that the domain can be continued to the $m$-dimensional annulus
\begin{equation}
\abs{ (p q t^{-1})^{\half}} R < \abs{x_{i}} < \abs{ (p q t^{-1})^{-\half}} R^{-1}\quad (i\in\{1,2,\ldots,m\})
\end{equation}
with respect to the $x$-variables. Assuming that $\abs{p q t^{-1}} \leq R^{-2}$ ensures that this $m$-dimensional annulus is non-empty and if $\abs{ q^{-1} t} R^{-2} > 1$ it is clear that this condition is already fulfilled for $\abs{p}<1$.

We consider the second condition, \emph{i.e.} $\tilde{b}_{r} \tilde{b}_{s}\notin p^{\Z_{\leq 0}} q^{\Z_{\leq 0}}$ ($r,s\in\{0,1,\ldots,2m+7\}$), in the various cases. 
\newline(1) When $r,s\in\{0,1,\ldots,7\}$, we have that
\begin{equation}
\tilde{b}_{r} \tilde{b}_{s} = p q t a_{r}^{-1} a_{s}^{-1} \notin p^{\Z_{\leq 0}} q^{\Z_{\leq 0}} \Leftrightarrow a_{r} a_{s} \notin p^{\Z_{>0}} q^{\Z_{>0}} t.
\end{equation}
(2) When $r\in\{0 ,1, \ldots, 7\}$ and $s = 2 i + 6$, or $s = 2 i + 7$, with $i\in\{1,2,\ldots,m\}$, we obtain that
\begin{equation}
\tilde{b}_{r} \tilde{b}_{s} = p q a_{r}^{-1} x_{i} \notin p^{\Z_{\leq 0}} q^{\Z_{\leq 0}} \Leftrightarrow x_{i}\notin a_{r} p^{\Z_{< 0}} q^{\Z_{< 0}}
\end{equation}
and
\begin{equation}
\tilde{b}_{r} \tilde{b}_{s} = p q a_{r}^{-1} x_{i}^{-1} \notin p^{\Z_{\leq 0}} q^{\Z_{\leq 0}} \Leftrightarrow  x_{i}^{-1} \notin a_{r} p^{\Z_{< 0}} q^{\Z_{< 0}},
\end{equation}
respectively. Combining these yields that $x_{i}^{\ve} \notin a_{s} p^{\Z_{< 0}} q^{\Z_{< 0}}$ for all $s\in\{0,1,\ldots,7\}$, $i \in\{1,2,\ldots,m\}$ and $\ve\in\{\pm\}$.
\newline(3) When $r = 2 i + 6$ and $s= 2i + 7$ with $i\in\{1,2,\ldots,m\}$, we obtain that $\tilde{b}_{r} \tilde{b}_{s} = p q t^{-1} \notin p^{\Z_{\leq 0}} q^{\Z_{\leq 0}}$ which gives the condition $t \notin p^{\Z_{>0}} q^{\Z_{>0}}$.
\newline(4) When $r = 2 i + 6$ and $s= 2j + 6$, or $s= 2j+7$, with $1 \leq i < j \leq m$, we obtain that
\begin{equation}
\tilde{b}_{r} \tilde{b}_{s} = p q t^{-1} x_{i} x_{j} \notin p^{\Z_{\leq 0}} q^{\Z_{\leq 0}} \Leftrightarrow x_{i} x_{j} \notin  p^{\Z_{< 0}} q^{\Z_{<0}} t
\end{equation}
and 
\begin{equation}
\tilde{b}_{r} \tilde{b}_{s} = p q t^{-1} x_{i} x_{j}^{-1} \notin p^{\Z_{\leq 0}} q^{\Z_{\leq 0}} \Leftrightarrow x_{i} x_{j}^{-1} \notin p^{\Z_{<0}} q^{\Z_{<0}} t,
\end{equation}
respectively. These yield the conditions $x_i x_j^{\ve} \notin p^{\Z_{<0}} q^{\Z_{<0}} t$ for $\ve\in\{\pm\}$.
\newline(5) When $r=2i + 7$ and $s=2j +6$, or $2 j + 7$, with $1 \leq i < j \leq m$, we obtain that
$
\tilde{b}_{r} \tilde{b}_{s} = p q t^{-1} x_{i}^{-1} x_{j} \notin p^{\Z_{\leq 0}} q^{\Z_{\leq 0}},
$ and $
\tilde{b}_{r} \tilde{b}_{s} = p q t^{-1} x_{i}^{-1} x_{j}^{-1} \notin p^{\Z_{\leq 0}} q^{\Z_{\leq 0}},
$
respectively. These yield the conditions $x_{i}^{-1} x_{j}^{\ve} \notin p^{\Z_{<0}} q^{\Z_{<0}} t$ for $\ve\in\{\pm\}$.
Combining (4) and (5) yields that $x_{i}^{\ve} x_{j}^{\ve^{\prime}} \notin p^{\Z_{<0}} q^{\Z_{<0}} t$ for all $1 \leq i < j \leq m$ and $\ve,\ve^\prime \in \{\pm\}$.
\end{proof}

\begin{remark}
It is important to keep in mind that when we are interested in eigenfunctions of analytic difference operators obtained by these integral transformations, we need to restrict the domains beforehand to ensure that the shift operators do not push these eigenfunctions outside their domain of holomorphy.
\end{remark}
It is clear that the result of Lemma~\ref{lemma_NI19} can also be used for the integral transformations that include the gauge functions $V(y \lvert q ,t )$ in \eqref{eq_gauge_function_V} and $U_{K}(y ; b \lvert q )$ in \eqref{eq_gauge_function_U}: It is straightforward to check that including the gauge function $V(y \lvert q , t)$ will not require the restriction $\abs{t}<R^2$ on the parameter $t$. The integral transform can then be analytically continued to a (globally) meromorphic function on $(\C^{\ast})^{m}$ as explained in Remark~\ref{remark_Type_I_integrals}. Including the gauge function $U_{K}(y;b \vert q)$  ($K\subseteq\{0,1,\ldots,7\}$, $b = p^{\half}q^{\half} t^{\half} a^{-1}$) will only affect the domain of holomorphy with respect to the parameters $a$. The restrictions on the model parameters are then given by
\begin{equation}
\abs{a_s} > \abs{ (p q t )^{\half} } R , \quad a_{r} a_{s} \notin p^{\Z_{>0}} q^{\Z_{>0}} t \quad (r,s \notin K)
\end{equation}
for those model parameters $\{a_s\}_{s\notin K}$ whose index $s$ is not part of $K\subseteq\{0,1,\ldots,7\}$. 

\begin{remark}
For particular functions $\varphi(y)$, it known that the integral transforms in Theorems~\ref{thm_Cauchy_eigenfunction_transform_Type_II} and \ref{thm_Cauchy_eigenfunction_transform_Type_I} can be analytically continued to (globally) meromorphic functions of $(x,a)$ on particular hypersurfaces in $(\C^{\ast})^{m+8}$; see Remark~4.2 of \cite{NI19}. This can also be directly observed in special cases where the integrals have known evaluations; see Section~\ref{sec_known_evaluations}. A thorough investigation of the requirements for the functions $\varphi(y)$ is left for future work.
\end{remark}

The restriction on $\varphi(y)$ in Proposition~\ref{prop_Cauchy_kernel_transform}, \emph{i.e.} to be holomorphic in $(\C^{\ast})^{n}$, is stricter than necessary. Indeed, it is clear that the integral is well-defined for any function $\varphi(y)$ holomorphic in any domain where we can continuously deform the integration contour. The following result then follows:
\begin{corollary}
Let $R \in (0,1]$ and fix the parameters such that 
\begin{equation}
\abs{p} < \min( 1 , \abs{q^{-1} t } R^{-2}), \quad  \abs{q} <1, \quad \abs{t} < R^{2}, \quad \abs{b_{s}}< R^{-1} \ (s\in\{0,1,\ldots,7\}),
\end{equation}
satisfying $t \notin p^{\Z_{>0}} q^{\Z_{>0}}$ and $b_{r} b_{s} \notin p^{\Z_{\leq 0}} q^{\Z_{\leq 0}}$ for all $r,s\in\{0,1,\ldots,7\}$. The domain of the integral transform
\begin{equation}
\varphi(y) \mapsto \int_{\mathcal{C}_{n}(\tilde{b}\lvert q , t)} d \omega_{n}(y) \ w_{n}(y ; b \lvert q , t) \Phi(x , y \lvert q ,t ) \varphi(y)
\end{equation} 
consists of all functions $\varphi(y)$ that are holomorphic in $\mathbb{A}_{\rho}^{n}$ for $ \rho \in (0,R]$, and its image is given by functions that are holomorphic in \eqref{eq_image_domain_Cauchy}.
\end{corollary}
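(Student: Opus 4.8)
The plan is to deduce the Corollary directly from Proposition~\ref{prop_Cauchy_kernel_transform} together with the explicit cycle construction of Section~\ref{sec_scalar_product}, exploiting the observation (already flagged in the preceding remark) that the only source of poles in the $y$-variables is the weight function together with the Cauchy kernel, never the input $\varphi$. First I would rewrite the integrand as $w_{n,2m+8}(y;\tilde b\lvert q,t)\,\varphi(y)$ with $\tilde b$ as in \eqref{eq_b_vector}, so that its poles in $y$ are precisely those of the extended weight function $w_{n,2m+8}$, located at the points collected in $S_0$ and $S_\infty$ with the $2m$ additional entries $c x_i^{\pm}$ adjoined. Under the hypotheses $\abs{b_s}<R^{-1}$, $\abs{t}<R^{2}$, the separation condition $b_r b_s\notin p^{\Z_{\leq 0}}q^{\Z_{\leq 0}}$, and the non-coincidence conditions built into \eqref{eq_image_domain_Cauchy}, one has $S_0\cap S_\infty=\emptyset$, so the cycle $\mathcal C_n(\tilde b\lvert q,t)$ of \eqref{eq_m_cycle} is available and $w_{n,2m+8}\,\varphi$ is holomorphic on a neighbourhood of it.

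The key geometric point to verify is that this cycle can be taken inside the annulus $\mathbb A^n_R$. I would check this from the construction $\mathcal C_1=\mathcal C_{r}(0)+\sum_{c\in S_0:\,\abs c> r}\mathcal C_{\tilde\epsilon}(c)-\sum_{c\in S_\infty:\,\abs c< r}\mathcal C_{\tilde\epsilon}(c)$ with base radius $r\in[R,R^{-1}]$ (renaming the base radius $r$ to avoid clashing with the $\rho$ of the statement): the hypotheses force $\sup_{c\in S_0}\abs c<R^{-1}$ and $\inf_{c\in S_\infty}\abs c>R$, where the annulus condition on $x$ in \eqref{eq_image_domain_Cauchy} guarantees $\abs{c x_i^{\pm}}<R^{-1}$ so that the kernel contributes no poles outside this range. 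Hence every encircled pole, and the base circle itself, lies in $R\le\abs{y_k}\le R^{-1}$, giving $\mathcal C_n(\tilde b\lvert q,t)\subseteq\mathbb A^n_R$. Since $\rho\in(0,R]$ yields $\mathbb A^n_R\subseteq\mathbb A^n_\rho$, any $\varphi$ holomorphic in $\mathbb A^n_\rho$ is holomorphic on a neighbourhood of the cycle, so the integral converges and, by Cauchy's theorem, is independent of the admissible choice of $r$ and of small deformations of the cycle within the common domain of holomorphy; this is exactly the assertion that the transform is defined on the stated class of inputs.

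For the image I would run the argument of Proposition~\ref{prop_Cauchy_kernel_transform} essentially verbatim, since the dependence on $x$ enters only through the parameters $c x_i^{\pm}$ of $w_{n,2m+8}$ and is therefore untouched by relaxing the hypothesis on $\varphi$. Applying Lemma~\ref{lemma_NI19} in the relaxed form noted after its statement, with $f=\varphi$ holomorphic only on $\mathbb A^n_\rho\times\cU$ and $\ell=2m+8$, I would translate the continuation conditions $\abs{\tilde b_s}<R^{-1}$ and $\tilde b_r\tilde b_s\notin p^{\Z_{\leq 0}}q^{\Z_{\leq 0}}$ into the three families of conditions defining \eqref{eq_image_domain_Cauchy}, exactly as in cases (1)--(5) of that proof, thereby obtaining holomorphy in $x$ on \eqref{eq_image_domain_Cauchy}. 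The one point demanding care is uniformity as $x$ ranges over \eqref{eq_image_domain_Cauchy}: one must ensure the cycle can be continuously deformed throughout this $x$-domain while remaining inside $\mathbb A^n_R$, where $\varphi$ is holomorphic. This is guaranteed precisely because \eqref{eq_image_domain_Cauchy} keeps $\abs{c x_i^{\pm}}<R^{-1}$ and the non-coincidence conditions preserve $S_0\cap S_\infty=\emptyset$ at every point, and I expect this deformation bookkeeping, rather than any single estimate, to be the only genuinely delicate step.
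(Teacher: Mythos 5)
Your proof is correct and takes essentially the same route as the paper: the paper derives this corollary precisely by combining Proposition~\ref{prop_Cauchy_kernel_transform} (whose case analysis (1)--(5) you rerun for the image) with the relaxed form of Lemma~\ref{lemma_NI19} and the observation that the cycle $\mathcal{C}_{n}(\tilde{b}\lvert q,t)$ can be taken inside $\mathbb{A}^{n}_{R}\subseteq\mathbb{A}^{n}_{\rho}$, so that holomorphy of $\varphi$ near the deformable contour suffices. Your explicit bookkeeping on the location of the poles $S_{0}$, $S_{\infty}$ and the deformation of the cycle as $x$ ranges over \eqref{eq_image_domain_Cauchy} is exactly the content the paper leaves implicit.
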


The domain of holomorphy for the integral transform using the kernel function of dual-Cauchy type \eqref{eq_dual_Cauchy_KF} is more straightforward to compute since the kernel function $\tPhi(x,y)$ is holomorphic in $(\C^{\ast})^{m}\times (\C^{\ast})^{n}$. In fact, it was shown in \cite[Eq.~(2.5)]{NI19} that $\tPhi(x,y)$ has a finite sum expansion in terms of the elliptic interpolation polynomials of type $BC$ in the $x$-variables and products of theta functions in the $y$-variables.
The only restrictions are then obtained from the weight function $w_{n}(y; a \lvert t , q)$ and the following result is obtained directly from Lemma~\ref{lemma_NI19}:
\begin{proposition}
\label{prop_dual-Cauchy_kernel_transform}
Let $R\in (0,1]$ and $x \in \mathbb{T}^{m}$. Suppose the model parameters satisfy 
\begin{equation}
\abs{t}< 1 ,\quad \abs{q}< R^{2}, \quad \abs{a_s} <1 \ (s\in\{0,1,\ldots,7\}),
\end{equation}
and let $\varphi(y)$ be holomorphic in $(\C^{\ast})^{n}$. Then the function
\begin{equation}
{\psi^\vee}(x;a\lvert q , t) = \int_{\mathbb{T}^{n}} d\omega_{n}(y) w_{n}( y ; a \lvert t , q ) {\tPhi}(x,y) \varphi(y)
\end{equation}
can be continued to a holomorphic function on $(\C^{\ast})^{m}\times \cU^{\vee}$, where
\begin{equation}
\cU^{\vee} = \{a \in (\C^{\ast})^{8} \lvert \ \abs{a_{s}} < R^{-1}, \ a_{r} a_{s} \notin p^{\Z_{\leq 0}} t^{\Z_{\leq 0}} \ (s\in\{0,1,\ldots,7\})\}.
\end{equation}
\end{proposition}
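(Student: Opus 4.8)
The plan is to reduce the statement directly to Lemma~\ref{lemma_NI19}, exploiting that the dual-Cauchy kernel $\tPhi(x,y)$ in \eqref{eq_dual_Cauchy_KF} is a finite product of theta functions and is therefore holomorphic on all of $(\C^\ast)^m\times(\C^\ast)^n$. In particular, unlike the Cauchy kernel $\Phi(x,y\lvert q,t)$, it cannot be absorbed into the weight function as extra parameters $c\,x_i^{\pm}$, and it introduces no poles in the $x$-variables. This is precisely why the conclusion holds for \emph{all} $x\in(\C^\ast)^m$, in contrast with the annular domain \eqref{eq_image_domain_Cauchy} appearing in Proposition~\ref{prop_Cauchy_kernel_transform}.

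First I would recognize the integrand as an elliptic hypergeometric integral of the type treated in Lemma~\ref{lemma_NI19}, but with the roles of the bases $q$ and $t$ interchanged. Indeed, by the definition \eqref{eq_weight_function} and our convention $w_n(\,\cdot\,\lvert q,t)=w_n(\,\cdot\,\lvert p,q,t)$, the weight $w_n(y;a\lvert t,q)$ is built from elliptic Gamma functions $\Gamma(\,\cdot\,;p,t)$ with pairwise coupling $q$; it is exactly the weight $w_{n,\ell}(y;\tilde b\lvert q,t)$ of Lemma~\ref{lemma_NI19} with $\ell=8$ and $\tilde b=a$, after interchanging the roles of the bases $q$ and $t$, that is, after replacing the lemma's $q$ by $t$ and its $t$ by $q$. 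Taking $f(y;a)=\tPhi(x,y)\varphi(y)$, which for each fixed $x\in(\C^\ast)^m$ is holomorphic on $(\C^\ast)^n$ and is in fact independent of $a$, the hypotheses $\abs{a_s}<1$ and $\abs{q}<R^2$ of the proposition match the hypotheses $\abs{\tilde b_s}<1$ and $\abs{t}<R^2$ of the lemma under this swap (the standing assumption $\abs{t}<1$ guaranteeing convergence of the base-$t$ products), so the lemma applies.

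With these identifications, Lemma~\ref{lemma_NI19} continues the integral, as a function of $a$, to a holomorphic function on $\{\abs{a_s}<R^{-1},\ a_r a_s\notin p^{\Z_{\leq 0}}t^{\Z_{\leq 0}}\}$, which is exactly $\cU^{\vee}$ once the base $q$ of the lemma is replaced by $t$. It remains to upgrade this to joint holomorphy on $(\C^\ast)^m\times\cU^{\vee}$: since $\tPhi(x,y)$ is holomorphic in $(x,y)$ jointly and the deformed cycle $\mathcal{C}_n(a\lvert t,q)$ can be chosen locally uniformly in $(x,a)$ (the poles of $w_n(y;a\lvert t,q)$ being independent of $x$), the integral depends holomorphically on $x$ by standard arguments, combining uniform convergence on the compact cycle with Morera's theorem, and jointly on $(x,a)$ throughout the region where the cycle avoids the poles. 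The only genuine bookkeeping is tracking the $q\leftrightarrow t$ interchange through the hypotheses and the output domain; the essential simplification relative to Proposition~\ref{prop_Cauchy_kernel_transform}, and the main point to verify, is that $\tPhi$ contributes no constraint on $x$, so that no annulus or excluded hypersurfaces in the $x$-variables arise.
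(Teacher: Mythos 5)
Your proposal is correct and follows essentially the same route as the paper: the paper also observes that $\tPhi(x,y)$ is holomorphic on $(\C^{\ast})^{m}\times(\C^{\ast})^{n}$, so the only restrictions come from the weight function $w_{n}(y;a\lvert t,q)$, and then obtains the result directly from Lemma~\ref{lemma_NI19} with the roles of $q$ and $t$ interchanged. Your extra care about joint holomorphy in $(x,a)$ via uniform convergence on the compact cycle and Morera's theorem is a sound filling-in of a step the paper leaves implicit.
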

\noindent{}Note that we have interchanged the roll of the parameters $q$ and $t$ in this integral transformation since the dual-Cauchy kernel function ${\tPhi}(x,y)$ relates a pair of van Diejen operators with $q$ and $t$ interchanged; see Lemma~\ref{lemma_dual-Cauchy_KFI}.

\begin{corollary}
Let $R \in (0,1]$ and fix the parameters such that 
\begin{equation}
\abs{p} < 1, \quad  \abs{q} < R^{2}, \quad \abs{t} < 1, \quad \abs{a_{s}}< R^{-1} \ (s\in\{0,1,\ldots,7\}),
\end{equation}
satisfying $a_{r} a_{s} \notin p^{\Z_{\leq 0}} t^{\Z_{\leq 0}}$ for all $r,s\in\{0,1,\ldots,7\}$. The domain of the integral transform
\begin{equation}
\varphi(y) \mapsto \int_{\mathcal{C}_{n}(a\lvert t , q)} d \omega_{n}(y) \ w_{n}(y ; a \lvert t , q) \tPhi(x , y) \varphi(y),
\end{equation} 
consists of all functions $\varphi(y)$ that are holomorphic in $\mathbb{A}^{n}_{\rho}$ for $\rho \in (0,R]$ and its image is given by functions that are holomorphic with respect to $x$ in $(\C^{\ast})^{m}$ and meromorphic with respect to $a$.
\end{corollary}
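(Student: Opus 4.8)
The plan is to derive this as a direct consequence of Proposition~\ref{prop_dual-Cauchy_kernel_transform}, upgraded by a contour-deformation argument, the whole point being that the dual-Cauchy kernel $\tPhi(x,y)$ is holomorphic on all of $(\C^\ast)^m\times(\C^\ast)^n$. First I would note that, since $\tPhi(x,y)$ contributes no poles, every pole of the integrand in the $y$-variables comes from the weight function $w_n(y;a\lvert t,q)$ alone. Running the analysis of Section~\ref{sec_scalar_product} with the roles of $q$ and $t$ interchanged, these poles are located at $y_k\in a_s p^{\Z_{\geq 0}}t^{\Z_{\geq 0}}$ (accumulating at the origin) and at $y_k\in a_s^{-1}p^{\Z_{\leq 0}}t^{\Z_{\leq 0}}$ (accumulating at infinity), together with the paired poles at $y_k^{\ve}y_l^{\ve^{\prime}}\in q\,p^{\Z_{\geq 0}}t^{\Z_{\geq 0}}$. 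The separation hypothesis $a_r a_s\notin p^{\Z_{\leq 0}}t^{\Z_{\leq 0}}$ forces the two accumulating families to be disjoint, while the bounds $\abs{a_s}<R^{-1}$ and $\abs{q}<R^2$ guarantee that the cycle $\mathcal{C}_n(a\lvert t,q)$ can be realised inside the annulus $\mathbb{A}^n_R$, separating the inner poles from the outer ones.

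Next I would relax the holomorphy hypothesis on $\varphi$. Because $\mathcal{C}_n(a\lvert t,q)$ lies inside $\mathbb{A}^n_R\subseteq\mathbb{A}^n_\rho$ for every $\rho\in(0,R]$, any $\varphi$ holomorphic on $\mathbb{A}^n_\rho$ is automatically holomorphic in a neighbourhood of the cycle; hence the integrand $w_n(y;a\lvert t,q)\,\tPhi(x,y)\,\varphi(y)$ is holomorphic there and the transform is well defined. I would then invoke Cauchy's theorem to see that the value is independent of the admissible representative of the cycle: as the base radius is varied over the range in which the inner and outer pole families stay separated, no pole crosses $\mathcal{C}_n(a\lvert t,q)$, so the integral is unchanged and, in particular, does not depend on the choice of $\rho\in(0,R]$. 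This is the dual-Cauchy counterpart of the deformation used for the Cauchy-type corollary, and it is cleaner here precisely because $\tPhi$ introduces no $x$-dependent poles.

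To read off the regularity of the image I would treat $x$ and $a$ separately. For fixed admissible parameters the cycle is compact and $\tPhi(x,y)$ is jointly holomorphic---in fact a finite Laurent polynomial in $x$ with theta-function coefficients in $y$---so differentiation under the integral sign yields holomorphy of the transform with respect to $x$ on the whole of $(\C^\ast)^m$, with no annular restriction; this is exactly the feature that distinguishes the dual-Cauchy case from the Cauchy-type domain \eqref{eq_image_domain_Cauchy}. For the $a$-dependence I would appeal to Lemma~\ref{lemma_NI19} with $\ell=8$, as in Proposition~\ref{prop_dual-Cauchy_kernel_transform}, to continue the transform holomorphically onto $\cU^{\vee}$; when $a$ reaches a hypersurface $a_r a_s\in p^{\Z_{\leq 0}}t^{\Z_{\leq 0}}$ an inner and an outer pole of $w_n$ collide on the cycle and produce at worst a pole in $a$, giving meromorphy.

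The hard part will be the contour bookkeeping: making precise that, for $\varphi$ holomorphic only on the annulus $\mathbb{A}^n_\rho$, the cycle $\mathcal{C}_n(a\lvert t,q)$ can simultaneously be kept inside $\mathbb{A}^n_R$ (so that $\varphi$ stays holomorphic along it) and be homologous, within the domain of holomorphy of $w_n\,\tPhi$, to every other admissible cycle, so that the transform is both well defined and independent of all choices. Once this deformation is justified, holomorphy in $x$ and the meromorphic continuation in $a$ follow immediately from the entirety of $\tPhi$ and from Lemma~\ref{lemma_NI19}.
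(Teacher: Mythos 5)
Your proposal is correct and takes essentially the same route as the paper, which states this corollary without a separate proof precisely because it follows at once from Proposition~\ref{prop_dual-Cauchy_kernel_transform}, the cycle construction of Section~\ref{sec_scalar_product} with the roles of $q$ and $t$ interchanged, and Lemma~\ref{lemma_NI19}, exactly as you argue. The only (inessential) slip is describing $\tPhi(x,y)$ as a finite Laurent polynomial in $x$ --- it is in fact a finite sum of elliptic interpolation functions of type $BC$ in $x$ with theta-function coefficients in $y$, not a Laurent polynomial --- but your argument only uses its joint holomorphy on $(\C^{\ast})^{m}\times(\C^{\ast})^{n}$, which is correct.
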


%\begin{remark}
%The kernel function ${\tPhi}(x,y)$ satisfies the analytic $p$-difference equations
%\begin{equation}
%T_{p,x_{i}} {\tPhi}(x,y) = (p x_{i}^{2})^{-n} {\tPhi}(x,y) \quad (i\in\{1,2,\ldots,m\})
%\end{equation}
%which yields that the function ${\psi^{\vee}}(x;a\lvert q, t)$ satisfies the same analytic $p$-difference equations. The function ${\psi}^{\vee}(x;a\lvert q, t)$ thus belongs to the set of $W_{m}$-invariant holomorphic functions on $(\C^{\ast})^{m}$ with quasi-periodicity of degree $n$. This $\C$-vector space, which was denoted by ${H}_{n+1,m}^{(p)}$ in \cite{NI16}, is known to have a $\C$-linear basis given by the elliptic interpolation functions of type $BC_{m}$ \cite{NI16}. Furthermore, it is known \cite{NI19} that the dual-Cauchy kernel function has an explicit expansion in terms of these elliptic interpolation functions. Thus, we obtain that certain eigenfunctions of the van Diejen operator can be expressed as linear combinations of these elliptic interpolation functions.
%\end{remark}

\section{The $E_8$ symmetry}
\label{sec_E8_symmetry}

The van Diejen model, as defined by the principal operator, admits a symmetry under the action of the Weyl group associated to the Lie algebra $D_8$. A symmetry under the action of the $E_{8}$ Weyl group is known in the univariate case from the work of Ruijsenaars \cite{Rui15} and alluded to for the multivariate case in \cite{Rui09a}. In this Section, we show how the gauge transformations $U_{K}$ in Lemma~\ref{lemma_gauge_transform_U} and the Cauchy-type integral transform in Theorem~\ref{thm_Cauchy_eigenfunction_transform_Type_II} are related to the $E_8$ Weyl group reflections on the parameter space.

\subsection{The $E_{8}$ root lattice and Weyl group.}
Before proceeding, let us recall some basic facts about the root lattice of type $E_8$. We mainly follow the conventions in \cite{Nou18}. Denote by $\cV=\C^8=\C \epsilon_0 \oplus \C \epsilon_{1} \oplus \cdots \oplus \C \epsilon_{7}$ a complex vector space with canonical basis $\{\epsilon_{0},\ldots,\epsilon_{7}\}$ and by $( \cdot \lvert \cdot ): \cV \times \cV \to \C$ the symmetric bilinear form satisfying $( \epsilon_i \lvert \epsilon_j ) = \delta_{i,j}$ ($i,j\in \{0,1,\ldots,7\}$). Setting $\phi = (\half , \half ,\ldots, \half)\in\cV$, we realize the root lattice $P=Q(E_{8})$ and root system $\Delta(E_{8})$ of type $E_{8}$ as
\begin{equation}
P = \{ v \in \Z^{8} \cup ( \phi + \Z^{8}) \ \lvert \ ( \phi \lvert v ) \in \Z\} \subset \cV, \quad \Delta(E_{8}) = \{ \alpha \in P \lvert ( \alpha \lvert \alpha)=2\}.
\end{equation}
The root system consists of two different classes of vectors
\begin{align}
(1)&\quad \pm \epsilon_{r} \pm \epsilon_{s} \quad (0 \leq r < s \leq 7) \\
(2)&\quad  \half( \pm \epsilon_{0} \pm \epsilon_{1} \pm \cdots \pm \epsilon_{7}) \quad (\text{with even number of minus signs}).
\end{align}
We then have the \emph{simple roots}
\begin{equation}
\alpha_{0} = \phi - \epsilon_{0} - \epsilon_{1} - \epsilon_{2} - \epsilon_{3} , \quad \alpha_{j} = \epsilon_{j} - \epsilon_{j+1} \ (j\in\{1,2,\ldots,6\}), \quad \alpha_{7} = \epsilon_{7} + \epsilon_{0},
\label{eq_simple_roots}
\end{equation}
corresponding to the Dynkin diagram in Fig.~\ref{fig_Dynkin_diagram_E8},
\begin{figure}[h!]
\label{fig_Dynkin_diagram_E8}
%\begin{equation}
%\fbox{
\begin{picture}(220,45)(-20,-5)
\multiput(0,0)(30,0){7}{\circle{4}}
\put(60,25){\circle{4}}
\multiput(2,0)(30,0){6}{\line(1,0){26}}
\put(60,2){\line(0,1){21}}
\put(57,32){\small $\alpha_0$}
\put(-3,-12){\small $\alpha_1$}
\put(27,-12){\small $\alpha_2$}
\put(57,-12){\small $\alpha_3$}
\put(87,-12){\small $\alpha_4$}
\put(117,-12){\small $\alpha_5$}
\put(147,-12){\small $\alpha_6$}
\put(177,-12){\small $\alpha_7$}
\end{picture}
%}
%\end{equation}
\caption{The Dynkin diagram corresponding to the $E_{8}$ root system.}%
\end{figure}
such that 
\begin{equation}
P = Q(E_{8}) = \Z \alpha_{0} \oplus \cdots \oplus \Z \alpha_{7}.
\end{equation}
We also have the highest root
\begin{equation}
\phi = 3 \alpha_{0} + 2\alpha_{1} + 4 \alpha_{2} + 6\alpha_3 + 5 \alpha_4 + 4 \alpha_5 +3 \alpha_6 + 2 \alpha_7
\end{equation}
with respect to the simple roots. 

For each $\alpha \in\cV \setminus \{\alpha \in\cV \lvert ( \alpha \lvert \alpha ) = 0\}$ we define the reflection with respect to $\alpha$ by
\begin{equation}
\r_{\alpha} : \cV \to \cV , \quad \r_{\alpha}(v) = v - 2 \frac{(\alpha \lvert v)}{(\alpha \lvert \alpha)} \alpha \quad ( v \in \cV)
\end{equation}
We denote by $\r_{i}$  ($i \in\{0,1,\ldots,7\}$) the simple reflections corresponding to the simple roots, \emph{i.e.} $\r_{i}= \r_{\alpha_i}$ for all $i \in\{0,1,\ldots,7\}$. Then the Weyl group $W(E_{8})$ is generated by the simple reflections $\r_{i}$ ($i \in\{0,1,\ldots,7\}$), \emph{i.e.} $W(E_{8}) = \langle \r_{0},\r_{1},\ldots,\r_{7}\rangle$, that satisfy the fundamental relations $\r_{i}^{2} = 1$, $\r_{i}\r_{i} = \r_{j} \r_{i}$ for distinct $i,j\in\{0,1,\ldots,7\}$ such that $(\alpha_{i}\lvert\alpha_{j})=0$, and $\r_{i} \r_{j} \r_{i} = \r_{j} \r_{i} \r_{j}$ for distinct $i,j\in\{0,1,\ldots,7\}$ such that $(\alpha_{i}\lvert\alpha_{j})=-1$. 

We remark that $W(E_8)=\langle \r_0, \r_1,\ldots,\r_7\rangle$ contains the following Weyl groups of type $D_8$ and of type $A_7$: 
\begin{equation}
W(E_8)%=\langle \r_0, \r_1,\ldots,\r_7\rangle
\supset
W(D_8)=\langle \r_1,\ldots,\r_7,\r_8\rangle
\supset
W(A_7)=\langle \r_1,\ldots,\r_6,\r_8\rangle
\end{equation}
where $\r_8=\r_{\alpha_8}$ denotes the reflection by the root $\alpha_8=\epsilon_7-\epsilon_0$.\footnote{We have that, \emph{e.g.}, $\r_8 = \r_0 \r_3 \r_4 \r_5 \r_2 \r_3 \r_4 \r_1 \r_2 \r_3 \r_0 \r_3 \r_4 \r_5 \r_2 \r_3 \r_4 \r_1 \r_2 \r_3 \r_0$.}
The subgroup $W(D_8)$ corresponds to the Dynkin diagram in Fig.~\ref{fig_Dynkin_diagram_D8}.  
\begin{figure}[h!]
%\begin{equation}
%\fbox{
\begin{picture}(220,45)(-20,-10)
\multiput(0,0)(30,0){7}{\circle{4}}
\put(150,25){\circle{4}}
\multiput(2,0)(30,0){6}{\line(1,0){26}}
\put(150,2){\line(0,1){21}}
\put(147,32){\small $\alpha_8$}
\put(-3,-12){\small $\alpha_1$}
\put(27,-12){\small $\alpha_2$}
\put(57,-12){\small $\alpha_3$}
\put(87,-12){\small $\alpha_4$}
\put(117,-12){\small $\alpha_5$}
\put(147,-12){\small $\alpha_6$}
\put(177,-12){\small $\alpha_7$}
\end{picture}
%}
%\end{equation}
\label{fig_Dynkin_diagram_D8}
\caption{The Dynkin diagram corresponding to the $D_{8}$ root system.}%
\end{figure}\newline
Note that $W(A_7)=\langle \r_1,\ldots,\r_6,\r_8\rangle$ represents 
the permutation group $\mathfrak{S}_8$ of the vectors $\epsilon_i$ 
($i\in\{0,1,\ldots,7\}$) and that 
$W(D_8)=\langle \r_1,\ldots,\r_7,\r_8\rangle$ provides 
the extension of $\mathfrak{S}_8$ by the even sign changes of $\epsilon_i$. 
Note also that 
\begin{equation}
\Delta(D_8)=\{\pm \epsilon_i\pm \epsilon_j \ |\ i,j\in\{0,1,\ldots,7\},\ i < j \}.
\end{equation}

We denote by $u=(u_0,u_1,\ldots,u_7)$ the canonical coordinates of $\cV=\C^8$.  Then the action of the reflections $\r_\alpha$ ($\alpha\in\Delta(E_8))$ on $\mathcal{V}$ are explicitly described as follows:  
The first class of $E_8$ roots are in fact the $D_8$ roots, and their actions are given by 
\begin{equation}
\r_{\epsilon_i\mp\epsilon_j}.(u_0,\ldots,u_7)=(v_0,\ldots,v_7),
\quad 
v_k
=\begin{cases}
\ \pm u_j\quad &(k=i)\\
\ \pm u_i\quad &(k=j)\\
\ u_k\quad &(k\ne i,j)
\end{cases},
\end{equation}
with double signs in the same order.  
The second class of $E_8$ roots are expressed as 
\begin{equation}
\alpha=\frac{1}{2}(-\sum_{i\in K}\epsilon_i+\sum_{i\not\in K}\epsilon_i) 
=\phi-\epsilon_K,\quad \epsilon_K=\sum_{i\in K}\epsilon_i
\end{equation}
with a subset $K\subseteq\{0,1,\ldots,7\}$ of indices with even cardinality.  
The reflections of this class are given by 
\begin{eqnarray}
&&\r_{\phi-\epsilon_K}.(u_0,\ldots,u_7)=(v_0,\ldots,v_7), \quad
\\
&&v_k=
\begin{cases}
\ u_k-\frac{1}{4}(\sum_{i\in K}u_i-\sum_{j\notin K}u_j)\quad&(k\in K),
\\[4pt]
\ u_k+\frac{1}{4}(\sum_{i\in K}u_i-\sum_{j\notin K}u_j)\quad&(k\notin K).  
\end{cases}
\nonumber
\end{eqnarray}

Passing to the multiplicative variables, we denote by $\mu=(\mu_0,\mu_1,\ldots,\mu_7)$ the canonical coordinates of the 8-dimensional algebraic torus $(\C^\ast)^8$.  
Then, we obtain the following actions of $W(E_8)$ on $(\C^\ast)^8$:  
For each distinct pair $i,j\in\{0,1,\ldots,7\}$, 
\begin{equation}
\r_{\epsilon_i\mp\epsilon_j}.(\mu_0,\ldots,\mu_7)=(\nu_0,\ldots,\nu_7),
\quad 
\nu_k
=\begin{cases}
\ \mu_j^{\pm1 }\quad &(k=i)\\
\ \mu_i^{\pm 1}\quad &(k=j)\\
\ \mu_k\quad &(k\ne i,j)
\end{cases},
\end{equation}
and for each subset $K\subseteq\{0,1,\ldots,7\}$ with even cardinality, 
\begin{eqnarray}
&&\r_{\phi-\epsilon_K}.(\mu_0,\ldots,\mu_7)=(\nu_0,\ldots,\nu_7),\quad
\\
&&\nu_k=
\begin{cases}
\ \mu_k\,\prod_{i\in K}\mu_i^{-\frac{1}{4}}{\prod_{j\notin K}\mu_j^{\frac{1}{4}}}
\quad&(k\in K),
\\[4pt]
\ \mu_k\,\prod_{i\in K}\mu_i^{\frac{1}{4}}{\prod_{j\notin K}\mu_j^{-\frac{1}{4}}}
\quad&(k\notin K).  
\end{cases}
\nonumber
\end{eqnarray}
{\smallskip}

It follows from the definition that the van Diejen operator $\cD_x(a|q,t)$ is invariant 
under permutations of the parameters $a_s$, \emph{i.e.} invariant under the action of $W(A_7)=\mathfrak{S}_8$. Let $a = p^{\half} q^{\half} \mu$, then $\cD_x(a_{\{i,j\}(pq)}|q,t)=\cD_x(\r_{\epsilon_i+\epsilon_j}.a|q,t)$ since $\cD_x(a|q,t)$ is invariant under the action of $\mathfrak{S}_8$ and it follows from Lemma~\ref{lemma_gauge_transform_U} that the $D_8$ reflections $\r_{\epsilon_i+\epsilon_j}$ ($i\neq j$) correspond to the 
gauge transformation $U_{K}(x;a|q)^{-1} \circ \cD_x(a|q,t) \circ U_{K}(x;a|q)$ 
%\begin{equation}=\begin{cases}
%pq a_s^{-1}\  &(s\in K)\\
%a_s\  &(s\notin K)\\
%U_{K}(x;a|q)^{-1} \circ \cD_x(a|q,t) \circ U_{K}(x;a|q)=\cD_x(a_{K(pq)}|q,t),
%\quad (a_{K(pq)})_s
%\end{cases}
%\end{equation}
for $K=\{i,j\}$,  \emph{i.e.} 
%\begin{equation}
\begin{equation}
\label{eq_D8_reflection}
U_{\{i,j\}}(x;a|q)^{-1}\circ \cD_x(a|q,t) \circ U_{\{i,j\}}(x;a|q)=
\cD_{x}(r_{\epsilon_i+\epsilon_j}.a|q,t).
%\end{equation}
\end{equation}
%In fact, we have 
%$\cD_x(a_{\{i,j\}(pq)}|q,t)=\cD_x(\r_{\epsilon_i+\epsilon_j}.a|q,t)$ since 
%$\cD_x(a|q,t)$ is invariant under the action of $\mathfrak{S}_8$. 
For a general subset $K\subseteq\{0,1,\ldots,7\}$ with even cardinality $|K|=2r$ we can express $K$ as $K=\{i_1,\ldots,i_r,j_1,\ldots,j_r\}$ and the gauge transformation by $U_K(x;a|q)$ then
corresponds to the transformation of parameters by
%\begin{equation}
\begin{equation}
\label{eq_W8_element}
{w}_K=\r_{\epsilon_{i_1}+\epsilon_{j_1}}\cdots \r_{\epsilon_{i_r}+\epsilon_{j_r}}
%\r_{\epsilon_{i_1}-\epsilon_{j_1}}\r_{\epsilon_{i_1}+\epsilon_{j_1}}
%\cdots \r_{\epsilon_{i_r}-\epsilon_{j_r}}\r_{\epsilon_{i_r}-\epsilon_{j_r}}
\in W(D_8)
%\end{equation}
\end{equation}
and it is clear that
%\begin{equation}
\begin{equation}
U_{K}(x;a|q)^{-1} \circ \cD_x(a|q,t) \circ U_{K}(x;a|q)=\cD_x(w_K.a|q,t).  
%\end{equation}
\end{equation}
This implies that, if $\varphi(x;b|q,t)$ is an eigenfunction of $\cD_x(x;b|q,t)$, 
$b=w_{K}.a$, 
with eigenvalue $\Lambda$, then 
%\begin{equation}
\begin{equation}
\label{eq_D8_eigfunc_transf}
\psi(x;a|q,t)=U_K(x;a\lvert q)\varphi(x;b|q,t)
\end{equation}
%\end{equation}
is an eigenfunction of $\cD_x(a|q,t)$ with the same eigenvalue, as in Proposition~\ref{prop_gauge_transf_U}.  

The second class of $E_8$ reflections on the parameter space can be associated with 
integral transforms of Cauchy type.  
We first look at the reflection $\r_\phi$ by the highest root $\phi$. 
It is related to the $m=n$ gauge-integral transform of Cauchy type given by 
\begin{equation}
\psi(x;a\lvert q , t) = \int_{\mathbb{T}^{m}}d\omega_{m}(y) \,w_{m}(y; b \lvert q , t) \Phi(x,y\lvert q , t) \,U_{\{0,1,\ldots,7\}}(y;b|q)\varphi(y;pq b^{-1}\lvert q,t)
\end{equation}
under the balancing condition $a_0\cdots a_7=p^2 q^2 t^2$,  where $b=p^{\half} q^{\half} t^{\half}a^{-1}$. 
In fact, we have
\begin{equation}
(pq b^{-1})_s=p^{\half}q^{\half}t^{-\half}a_s=a_s pq(a_0\cdots a_7)^{-\frac{1}{4}}
=(\r_{\phi}.a)_s\quad(s \in \{0,1,\ldots,7\}).  
\end{equation}
Similarly, for a subset $K\subseteq\{0,1,\ldots,7\}$ with even cardinality, 
the $E_8$ reflection $\r_{\phi-\epsilon_K}$ can be attained by the gauge-integral transform 
of the form 
\begin{equation}
\label{eq_E8_transform}
\psi(x;a\lvert q , t) = U_K(x;a|q)\int_{\mathbb{T}^{m}}d\omega_{m}(y) \,w_{m}(y; b \lvert q , t) \Phi(x,y\lvert q , t) \,U_{K^{\mathrm{c}}}(y;b|q)\varphi(y;d \lvert q,t),
\end{equation}
under the balancing condition 
\begin{equation}
\label{eq_E8_restriction}
\prod_{s\in K}a_s^{-1} \prod_{s\notin K}a_s=(pq)^{2-|K|}t^2, 
\end{equation}
where $K^\mathrm{c}=\{0,1,\ldots,7\}\backslash K$, 
$b=p^{\half} q^{\half} t^{\half} a_{K(pq)}^{-1}$, and $d =b_{K^{\mathrm{c}}(pq)}$. This implies that, if $\varphi(y;d \lvert q,t)$ is an eigenfunction of $\cD_x(d \lvert q , t)$ with eigenvalue $\Lambda$, then $\psi(x;a \lvert q , t)$ \eqref{eq_E8_transform} is an eigenfunction of $\cD_{x}(a \lvert q, t)$ with the same eigenvalue, as proven in Theorem~\ref{thm_Cauchy_eigenfunction_transform_Type_II}.

To summarize, we have the following Theorem:
%\begin{theorem}
%%The action of every element of the Weyl group  $W(D_8)$
%$(1)$  By the transformations in (6.18), the model parameter of the eigenfunctions of the van diejen operator transforms according to the reflection $w_{K}$ in (6.16).
%%The reflections by the $D_{8}$ roots ${\epsilon_{i} + \epsilon_{j}}$ $(i,j\in\{0,1,\ldots,7\})$ on the parameters of eigenfunctions can be obtained by the gauge transformations in \eqref{eq_D8_reflection}.
%\newline{}$(2)$ For each  $K\subseteq \{0,1,\ldots,7\}$ with even cardinality, the reflection by the $E_{8}$ roots $\phi - \epsilon_{K}$ on the model parameters of eigenfunctions of the van Diejen operator can be obtained by \eqref{eq_E8_transform} under the restriction \eqref{eq_E8_restriction} on the parameter $t$.
%\end{theorem}

\begin{theorem} Let $K\subseteq\{0,1,\ldots,7\}$ with even cardinality. \newline{} 
%The action of every element of the Weyl group  $W(D_8)$
$(1)$  By the transformation in \eqref{eq_D8_eigfunc_transf}, the model parameters of eigenfunctions of the van Diejen operator transform according as the action of the element $w_{K}\in W(D_{8})$ in \eqref{eq_W8_element}.
%The reflections by the $D_{8}$ roots ${\epsilon_{i} + \epsilon_{j}}$ $(i,j\in\{0,1,\ldots,7\})$ on the parameters of eigenfunctions can be obtained by the gauge transformations in \eqref{eq_D8_reflection}.
\newline{}$(2)$ Let $t$ satisfy the restriction \eqref{eq_E8_restriction}. By the transformation in \eqref{eq_E8_transform}, the model parameters of eigenfunctions of the van Diejen operator transform according as the action of the reflection $\r_{\phi- \epsilon_{K}}$ by the $E_{8}$ root $\phi - \epsilon_{K}$.
\end{theorem}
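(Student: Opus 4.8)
The plan is to handle the two parts separately, in each case reducing the asserted parameter transformation to a composition of the gauge and integral identities already established and then matching the resulting parameter vector against the explicit $W(E_8)$-actions recorded above. For part $(1)$ I would start from Lemma~\ref{lemma_gauge_transform_U}, i.e. $U_K(x;a\lvert q)^{-1}\circ\cD_x(a\lvert q,t)\circ U_K(x;a\lvert q)=\cD_x(a_{K(pq)}\lvert q,t)$. Writing $K=\{i_1,\ldots,i_r,j_1,\ldots,j_r\}$ and $w_K$ as in \eqref{eq_W8_element}, and using $a=p^{\half}q^{\half}\mu$, a direct evaluation of the disjoint product of reflections $\r_{\epsilon_{i_l}+\epsilon_{j_l}}$ shows that $w_K.a$ has $s$-th entry $pq\,a_{\sigma(s)}^{-1}$ for $s\in K$ and $a_s$ for $s\notin K$, where $\sigma$ is the involution exchanging $i_l\leftrightarrow j_l$. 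Thus $w_K.a$ differs from $a_{K(pq)}$ only by the permutation $\sigma$ of the indices inside $K$; since $\cD_x$ is invariant under permutations of the parameters, i.e. under $W(A_7)=\mathfrak{S}_8$, we get $\cD_x(a_{K(pq)}\lvert q,t)=\cD_x(w_K.a\lvert q,t)$, and the eigenfunction statement then follows verbatim from Proposition~\ref{prop_gauge_transf_U} applied to \eqref{eq_D8_eigfunc_transf}.

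For part $(2)$ the essential point is to identify the parameter vector $d$ appearing in \eqref{eq_E8_transform} with $\r_{\phi-\epsilon_K}.a$. First I would compute $d=b_{K^{\mathrm{c}}(pq)}$ with $b=p^{\half}q^{\half}t^{\half}a_{K(pq)}^{-1}$ explicitly, obtaining $d_s=p^{-\half}q^{-\half}t^{\half}a_s$ for $s\in K$ and $d_s=p^{\half}q^{\half}t^{-\half}a_s$ for $s\notin K$. Next, substituting $a=p^{\half}q^{\half}\mu$ into the reflection formula for $\r_{\phi-\epsilon_K}$ and writing $P=\prod_{i\in K}\mu_i$, $Q=\prod_{j\notin K}\mu_j$, one checks that the balancing condition \eqref{eq_E8_restriction} is exactly the statement $P^{-1}Q=(pq)^{-2}t^2$. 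This collapses the quarter-power factors $\prod_{i\in K}\mu_i^{\mp 1/4}\prod_{j\notin K}\mu_j^{\pm 1/4}$ into honest monomials in $p,q,t$, after which a short computation gives $\r_{\phi-\epsilon_K}.a=d$.

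It then remains to read off the transform \eqref{eq_E8_transform} as a composition of two steps already proven. Applying Theorem~\ref{thm_Cauchy_eigenfunction_transform_Type_II} with $n=m$ and index set $K^{\mathrm{c}}$, the inner integral sends an eigenfunction $\varphi(y;d\lvert q,t)=\varphi(y;b_{K^{\mathrm{c}}(pq)}\lvert q,t)$ of $\cD_y(d\lvert q,t)$ to an eigenfunction of $\cD_x(\tilde a\lvert q,t)$ with $\tilde a=p^{\half}q^{\half}t^{\half}b^{-1}=a_{K(pq)}$ and the same eigenvalue; here \eqref{eq_E8_restriction} is precisely the condition guaranteeing that the required balancing condition $\prod_s(a_{K(pq)})_s=p^{2}q^{2}t^{2}$, i.e. the $n=m$ case of \eqref{eq_balancing_condition_KFI}, holds. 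Composing with the outer gauge factor $U_K(x;a\lvert q)$ via Proposition~\ref{prop_gauge_transf_U} produces an eigenfunction of $\cD_x(a\lvert q,t)$ with the same eigenvalue, which combined with $d=\r_{\phi-\epsilon_K}.a$ gives the claim.

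The main obstacle is the bookkeeping in part $(2)$: one must track the chain of parameter maps $a\mapsto a_{K(pq)}\mapsto b\mapsto b_{K^{\mathrm{c}}(pq)}=d$ and verify that it reproduces the $E_8$ reflection, the delicate point being that the fractional $\tfrac14$-powers in the reflection formula only simplify once the balancing condition \eqref{eq_E8_restriction} is imposed. Everything else is a direct application of the gauge and Cauchy-type integral transforms established earlier, together with the $\mathfrak{S}_8$-invariance of $\cD_x(a\lvert q,t)$.
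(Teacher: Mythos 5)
Your proposal is correct and follows essentially the same route as the paper: part (1) via Lemma~\ref{lemma_gauge_transform_U}, the $\mathfrak{S}_8$-invariance of $\cD_x(a\lvert q,t)$, and Proposition~\ref{prop_gauge_transf_U}; part (2) by factoring \eqref{eq_E8_transform} into the Cauchy-type integral transform of Theorem~\ref{thm_Cauchy_eigenfunction_transform_Type_II} with index set $K^{\mathrm{c}}$ followed by the outer gauge factor $U_K$, and identifying $d=\r_{\phi-\epsilon_K}.a$ once the balancing condition \eqref{eq_E8_restriction} collapses the quarter-power factors. Your writeup merely makes explicit the computations (the chain $a\mapsto a_{K(pq)}\mapsto b\mapsto b_{K^{\mathrm{c}}(pq)}$ and the equivalence of \eqref{eq_E8_restriction} with the $n=m$ balancing condition for $a_{K(pq)}$) that the paper asserts more briefly.
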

{\smallskip}
It is also worth noting that we can construct an operator which is invariant under the transformations $a\to a_{K(pq)}$, for all $K\subseteq\{0,1,\ldots,7\}$ with even cardinality, and $t\to p q t^{-1}$: Define
\begin{equation}
G(x;a\lvert q, t )= \prod_{1 \leq i \leq m} \prod_{0\leq s \leq 7} \Gamma(a_{s} x_{i};p,q) \prod_{1 \leq i < j \leq m} \Gamma( t x_{i} x_{j}^{\pm};p,q),
\end{equation}
then it is straightforward to check that 
\begin{equation}
U_{K}(x;a\lvert q ) = \frac{G(x ; w_{K}.a\lvert q, t)}{G(x ; a \lvert q ,t)}, \quad V(x\lvert q, t ) = \frac{G(x ;a \lvert q , p q t^{-1})}{G(x;a \lvert q, t)}
\label{eq_gauge_relations}
\end{equation}
for any $K\subseteq\{0,1,\ldots,7\}$ with even cardinality. It follows from Lemmas~\ref{lemma_gauge_transform_V} and \ref{lemma_gauge_transform_U} and Eq.~\eqref{eq_gauge_relations} that
\begin{multline}
G(x;a\lvert q, t) \circ \cD_{x}(a \lvert q , t) \circ G(x;a\lvert q, t)^{-1} \\
= G(x; w_K.a\lvert q , t)\circ  {\cD}_{x}(w_{K}.a \lvert q , t) \circ G(x; w_{K}.a\lvert q ,t)^{-1}  \\
= G(x;a \lvert q, pqt^{-1}) \circ  \cD_{x}(a \lvert q , p q t^{-1}) \circ G(x;a\lvert q, p q t^{-1})^{-1} 
\end{multline}
for all $K\subseteq\{0,1,\ldots,7\}$ with even cardinality by using the reflection property. Then it is clear that the operator $\widetilde{\cD}_{x}(a\lvert q , t) = G(x;a\lvert q ,t)\circ \cD_{x}(a\lvert q ,t) \circ G(x;a\lvert q , t)^{-1}$, and consequently also the eigenvalues of the van Diejen operator, is invariant under the action of $W(D_{8})\times W(A_{1})$. (Note that $\widetilde{\cD}_{x}(a\lvert q , t)$ is not $W_{m}$-invariant.)

\section{Conclusions and outlook}
\label{sec_conclusions_outlook}

In this paper, we have introduced several transformations for the eigenfunctions of the principal van Diejen Hamiltonian. In particular, we obtained various exact eigenfunctions of the van Diejen operator which can be expressed as both Type I and Type II $BC_{n}$ elliptic hypergeometric integrals. Using these transformations from the known eigenfunctions in Section~\ref{sec_previous_results}, or the simple explicit functions in Section~\ref{sec_special_cases_eigenfunction}, it is possible to construct several different eigenfunctions of the van Diejen operator under certain parameter restrictions. We presented two of the many possible transforms where the eigenfunctions are given by the elliptic hypergeometric integrals of Selberg type in Theorems~\ref{thm_Selberg_type_II} and \ref{thm_Selberg_type_I} starting from the constant eigenfunction $1$. Another interpretation of these results is that the $BC_{n}$ elliptic hypergeometric integrals of Selberg type are governed by eigenvalue equations for a van Diejen operator with particular eigenvalues. Our results provide even further motivation for the study of elliptic hypergeometric integrals and for finding new evaluation/transformation formulas.

Also, it is an open important question whether the construction of general eigenfunctions can be obtained by suitable combinations/iterations of integral transformations. The main difficulty in this approach is to avoid contradictory balancing conditions while applying the sequence of transformations. It would also be an intriguing problem to describe the totality of eigenfunctions obtained by the gauge and integral transformations, starting from the constant eigenfunctions or the free eigenfunctions.

The full family of commuting analytic difference operators for the van Diejen model was constructed by Hikami and Komori \cite{KH97}. We believe that all these analytic difference operators satisfy the same symmetry relations that we have found in this paper, although it is not a priori clear to us from their construction, and that our transformations yield joint eigenfunctions of the van Diejen model.

{\medskip}
The van Diejen model also has a physics interpretation of relativistic particles in one dimensions when the model parameters, including the shift parameter and elliptic nom\'e, are restricted such that the weight function $w_{m}(x;a\lvert q, t)$ is positive definite and the operator $\cD_{x}(a\lvert q, t)$ is formally self-adjoint with respect to the weighted $L^{2}( \mathbb{T}^{m}; w_{m}(x;a\lvert q, t) d\omega_{m}(x))$ inner product. Although we have not emphasized it in the paper, our results can also be applied for these model parameters. Using our results, it could be possible to find a suitable Hilbert space where the van Diejen operators are self-adjoint and obtain interpretations of the eigenfunctions as quantum mechanical wave functions that diagonalize the van Diejen Hamiltonian. Self-adjoint extensions of the van Diejen Hamiltonian was found by Ruijsenaars in the univariate case \cite{Rui15} by applying the integral transform of Cauchy type, which required further conditions on the model parameters. It would be interesting to explore whether our transformations can be used in either the construction of the suitable Hilbert space or its analytic continuation, with respect to the parameter space, for general $m\in\Z_{\geq 0}$.

{\medskip}
An interesting generalization of the models of Calogero-Moser-Sutherland (CMS) type was found by Chalykh, Feigin, Sergeev, and Veselov \cite{CFV98,Ser02} with application in quantum field theory and super-symmetric gauge theories; see \cite{AHL21} and references therein. Such generalizations are also known for the Macdonald-Ruijsenaars models \cite{SV09b,FS14,AHL14} and was recently found by one of the authors for the van Diejen model as well \cite{Ata20}. The weight function and kernel functions for this so-called \emph{deformed van Diejen model} was constructed in \cite{Ata20}, which also suggest that similar type of gauge and integral transformations exists for the deformed van Diejen model and its eigenfunctions can be expressed in terms of (possibly generalized) $BC$-type elliptic hypergeometric integrals. In particular, a kernel function linking the standard van Diejen operator to the deformed operator was explicitly constructed in \cite{Ata20} which would yield exact eigenfunctions of the deformed model in terms of Type II elliptic hypergeometric integrals of Selberg type. It would be interesting to explore this further, however, recent results in \cite{AHL21} suggests that the investigation of appropriate integration cycles are crucial for the deformed models.

\subsection{Higher symmetries}
In this paper we have presented several symmetries of the van Diejen model in the parameter space $(a \lvert q, t)\in(\C^{\ast})^{8} \times (\C^{\ast})^{2}$. Together with the known symmetries, these include (0) $\mathfrak{S}_{8}$-symmetry in $a$, (1) $p$-shifts in the parameters $a$ by $p^{P}$, (2) the transformation $(a \lvert q, t)\leftrightarrow(a \lvert q, p q t^{-1})$ by the gauge function $V(x\lvert q , t)$ \eqref{eq_gauge_function_V}, (3) the transformation $(a \lvert q, t) \leftrightarrow (a_{K(pq)} \lvert q, t)$ for $K\subseteq \{0,1,\ldots,7\}$ with even cardinality by the gauge function $U_{K}(x;a\lvert q)$ \eqref{eq_gauge_function_U}, (4) the integral transform of Cauchy type where $(a \lvert q , t) \leftrightarrow ( p^\half  q^\half t^\half a^{-1} \lvert q,t)$, and (5) the integral transform of dual-Cauchy type where  $(a \lvert q , t) \leftrightarrow ( a \lvert t , q )$. 
We showed in Section~\ref{sec_E8_symmetry} that the action of the Weyl group $W(D_{8})$ can be obtained through permutation symmetry and the $U_{K}$ gauge transformations. Combining these with the $m=n$ gauge-integral transformations of Cauchy-type allowed us to find a relation between the transformations and the $E_8$ Weyl group $W(E_{8})$. It is unclear if the symmetries of the van Diejen operator can be extended to a larger symmetry group by including the other transformations mentioned above.

\subsection{Different forms of the van Diejen operator.}
Finally, we wish to point out that the (informal) definition of the zeroth order coefficient in Remark~\ref{remark_zeroth_order_coefficient} allows us to express the van Diejen operator in a similar form as \eqref{eq_vD_operator_balanced} by finding two (meromorphic) functions $\phi^\pm(x;a \lvert q ,t )$ such that 
\begin{equation}
B^{0}(x;a\lvert q , t) = -\sum_{1\leq i \leq m} A^{+}_{i}(x;a\lvert q ,t) \phi^{+}(x_{i} ;a \lvert q , t) + A^{-}_{i}(x;a\lvert q ,t) \phi^{-}(x_{i} ;a \lvert q , t)
\end{equation}
satisfies the conditions in Remark~\ref{remark_zeroth_order_coefficient}. This entails that %, \emph{i.e.} 
the functions $\phi^{\pm}$ satisfy 
\newline
(1) $\phi^{\pm}(p x ; a \lvert q , t) = ( p^{2} q^{2} t^{2}/a_{0}\cdots a_{7} t^{2m})^{\mp1} \phi^{\pm}( x ; a \lvert q , t)$, % $\Leftrightarrow$ $B^{0}(x;a\lvert q, t)$ is a $p$-periodic function, 
\newline
(2) $\phi^{+}(x^{-1};a\lvert q, t) = \phi^{-}(x ;a\lvert q, t)$, %$\Leftrightarrow$ $B^{0}(x;a\lvert q, t)$ is $W_{m}$-invariant 
\newline
(3) $\phi^{-\ve}( \varpm p^{\half \ell} q^{\half \ve} ; a \lvert q , t) = ( p^{2} q^{2} t^{2} / a_{0} \cdots a_{7} t^{2m})^{\half \ve \ell}$ for all $\ell\in\Z$ and $\ve\in\{\pm\}$, %$\Leftrightarrow$ $B^{0}(x;a\lvert q, t)$ satisfies the residue condition
\newline
(4) $B^{0}(x;a\lvert q, t)$ should only have poles at $x_{i} \in \varpm p^{\half \Z} q^{\half \ve}$ ($i\in\{1,\ldots,m\}$; $\ve\in\pm$). 
%\newline
The van Diejen operator can then be expressed as
\begin{equation}
\cD_{x}(a\lvert q, t) = \sum_{1 \leq i \leq m} \sum_{\ve=\pm} A_{i}^{\ve}(x;a\lvert q , t) \bigl( T_{q,x_{i}}^{\ve} - \phi^{\ve}(x_{i} ; a \lvert q ,t ) \bigr) + C
\end{equation}
for some constant $C$. We find that a class of solutions to this problem can be readily obtained by the gauge functions: It follows from straightforward calculations that 
\begin{equation}
\phi^{+}(x_{i} ; a\lvert q, t )= (T_{q,x_{i}} U_{K}(x;a\lvert q ))/U_{K}(x;a\lvert q) \quad ( K\subseteq \{0,1,\ldots,7\})
\end{equation}
fulfil the conditions above if $K$ has even cardinality and the model parameters satisfy \eqref{eq_balancing_condition_U}. (Setting $K=\emptyset$ yields that $\phi^{+}=1$ and we obtain \eqref{eq_vD_operator_balanced} under the ellipticity condition \eqref{eq_ellipticity_condition_1}.) We also find that 
\begin{equation}
\phi^{+} = (T_{q,x_{i}} U_{K}(x;a\lvert q ) V(x\lvert q, t) )/ (U_{K}(x;a\lvert q) V(x\lvert q , t))
\end{equation}
($K\subseteq \{0,1,\ldots,7\}$ with even cardinality) is another solution when the model parameters satisfy 
\begin{equation}
( \prod_{s\in K} a_{s}^{-1})( \prod_{s\notin K} a_{s}) p^{2m+ \abs{K}} q^{2m+ \abs{K}} t^{-2m} = p^{4} q^{4} t^{-2}
\end{equation}
even though the solution is a function of all $x$-variables.
It would be interesting to check whether there are other (non-transcendental) solutions to this problem, apart from the ones above.

\section*{Acknowledgements}
%We would like to thank ...

The second author (M.N.) would like to thank the mathematics department of KTH for their warm hospitality during his stay in 2020-2021. M.N. is grateful to the Knut and Alice Wallenberg Foundation for funding his guest professorship at KTH where part of this work was conducted. M.N. gratefully acknowledges partial financial support by JSPS Kakenhi Grants (S) 17H06127 and (B) 18H01130.

%%\begin{tikzpicture}
\bigskip\noindent
\parbox{.135\textwidth}{\begin{tikzpicture}[scale=.03]
\fill[fill={rgb,255:red,0;green,51;blue,153}] (-27,-18) rectangle (27,18);  
\pgfmathsetmacro\inr{tan(36)/cos(18)}
\foreach \i in {0,1,...,11} {
\begin{scope}[shift={(30*\i:12)}]
\fill[fill={rgb,255:red,255;green,204;blue,0}] (90:2)
\foreach \x in {0,1,...,4} { -- (90+72*\x:2) -- (126+72*\x:\inr) };
\end{scope}}
\end{tikzpicture}} \parbox{.85\textwidth}{This project has received funding from the European Union's Horizon 2020 research and innovation programme under the Marie Sk{\l}odowska-Curie grant agreement No 895029.}

\appendix
\section{The theta and Gamma functions}
\label{sec_theta_and_gamma_functions}

We recall that the multiplicative theta function with base $p\in\C$ is given by 
\begin{equation}
\theta(x ; p) = \prod_{\ell \in \Z_{\geq 0}}( 1 - p^{\ell} x ) ( 1 - p^{\ell +1 } x^{-1}) = ( x , p x^{-1} ; p)_{\infty} \quad (\abs{p} < 1).
\label{eq_theta_function}
\end{equation}
For $\abs{p}<1$, we have that $\theta(x ; p)$ is analytic for $x \in \C^{\ast}$ and has (simple) zeroes at $x \in p^{\Z}$.
Using \eqref{eq_theta_function}, it is straightforward to check that the theta function satisfies
\begin{equation}
\label{eq_theta_properties}
\theta( p x ; p ) = \theta(x^{-1} ; p)  = - (1/x) \theta(x ; p), \quad \theta( x ; p ) = \theta( p x^{-1} ; p).
\end{equation}
A useful identity for the theta function is the duplication formula
\begin{equation}
\theta(x^2 ; p ) = \theta( x , - x , p^{\half} x , -p^{\half} x ; p).
\label{eq_double_angle_theta}
\end{equation}
%(Note that there is no ambiguity here due to the square root.)

The elliptic Gamma function $\Gamma(x ; p , q )$ is defined as the common solution to the analytic difference equations
\begin{equation}
\Gamma( p x ; p , q) = \theta(x ; q) \Gamma (x ; p , q) , \quad \Gamma(q x ; p , q ) = \theta(x;p) \Gamma(x ; p , q).
\label{eq_Gamma_difference_eq}
\end{equation}
Ruijsenaars introduced the so-called minimal solutions to these analytic difference equations \cite{Rui97} given by
\begin{equation}
\Gamma(x ; p , q ) = \prod_{i,j \in \Z_{\geq 0}} \frac{1 - p^{i+1} q^{j+1} x^{-1}}{ 1 - p^{i} q^{j} x } = \frac{(p q x^{-1} ; p ,q )_{\infty} }{( x ; p , q )_{\infty}}.
\label{eq_elliptic_Gamma_function}
\end{equation}
The elliptic Gamma function is a meromorphic function of $x \in \C^{\ast}$ for $\abs{p}<1$ and $\abs{q}<1$, that is invariant under interchanging $p \leftrightarrow q$, with zeroes at $x \in p^{\Z_{>0}} q^{\Z_{>0}}$ and poles at $x \in p^{\Z_{\leq 0}} q^{\Z_{\leq0}}$. Moreover, it is straightforward to check that it satisfies the reflection property
\begin{equation}
\Gamma(p q x^{-1} ;p , q) = 1 / \Gamma(x ; p , q).
\label{eq_reflection_property}
\end{equation}
The elliptic Gamma function also satisfies the duplication formula
\begin{equation}
\Gamma( x^{2} ; p , q) = \Gamma( x , -x , p^{\half} x , - p^{\half} x, q^{\half} x , -q^{\half} x , p^{\half} q^{\half} x , - p^{\half} q^{\half} x ; p , q).
\end{equation}
These duplication formulas are useful when considering the free case in Section~\ref{sec_special_cases_eigenfunction}.

\section{Elliptic hypergeometric series and integrals}
\label{app_elliptic_SF}

Some known relations between the elliptic hypergeometric series \cite{DJKMO88,FT97} and the elliptic hypergeometric integrals are presented here, as well as some of the known evaluations of the elliptic Selberg integrals of Type I and Type II.
\subsection{The elliptic hypergeometric series}

The elliptic hypergeometric series ${}_{12}V_{11}(b_{0} ; b_{1}, \ldots,b_{7} ; q , p)$ \cite{DJKMO88,FT97} is formally defined by
\begin{equation}
{}_{12}V_{11}(b_{0} ; b_{1}, \ldots, b_{7} ; q ,  p) = \sum_{\ell\in\Z_{\geq 0}} \frac{\theta( b_{0} q^{2 \ell}  ; p)}{\theta( b_{0} ; p )} \frac{\theta( b_{0}, \ldots, b_{7} ; p )_{q,\ell}}{\theta( q , b_{0} q / b_{1}, b_{0} q / b_{2} , \ldots, b_{0} q / b_{7} ; p )_{q,\ell}} q^{\ell},
\end{equation}
where 
\begin{equation} 
\theta( x ; p )_{q,\ell} = \frac{ \Gamma( q^{\ell} x ; p , q) }{ \Gamma( x ; p , q)} = \theta(x ; p) \theta(q x ; p) \cdots \theta(q^{\ell-1} x ; p) \quad (\ell\in\Z_{\geq 0}),
\end{equation}
under the parameter condition $b_{1}^{2} \cdots b_{7}^{2} = q^{4} b_{0}^{6}$. 
It follows from the properties of the multiplicative theta function that the series is terminating if $b_{s} \in p^{\Z} q^{\Z_{\leq 0}}$ for some $s\in\{1,2,\ldots,7\}$, which then yields a (globally) meromorphic function of the remaining parameters $b_{0},\ldots,b_{7} \in \C^{\ast}$.

\subsection{The elliptic hypergeometric integral}
The elliptic hypergeometric integral $I( b_{0},\ldots, b_{7} ; p , q)$ is defined by
\begin{equation}
\label{eq_hypergeometric_integral}
I( b_{0},\ldots, b_{7} ; p , q) = \frac{(p;p)_{\infty} (q;q)_{\infty}}{2} \int_{\mathcal{C}_{1}} \frac{d y}{2 \pi \imag y} \frac{\prod_{0 \leq s \leq 7} \Gamma( b_{s} y^{\pm} ; p , q) }{\Gamma(y^{\pm2}; p q)} 
\end{equation}
with cycle $\mathcal{C}_{1}$ as in Section~\ref{sec_scalar_product} and the parameters satisfy $b_{r} b_{s} \notin p^{\Z_{\leq0}} q^{\Z_{\leq 0}}$ for all $r,s \in\{0,1,\ldots,7\}$. Note that \eqref{eq_hypergeometric_integral} differs from the elliptic hypergeometric integral presented in Section~\ref{sec_intro} by a multiplicative constant. When all the parameters satisfy $\abs{b_{s}}<1$ ($s\in\{0,1,\ldots,7\}$), then the cycle can be continuously deformed to be on the unit circle $\abs{y}=1$ (with positive orientation), as we have discussed in Section~\ref{sec_scalar_product}. If one of the parameters satisfy $b_{s} = p^{-M} q^{-N}$ for some $M,N\in\Z_{\geq 0}$, then the integral \eqref{eq_hypergeometric_integral} yields the elliptic hypergeometric series ${}_{12}V_{11}$. In particular, there are two relations that are of interest for our purposes: \newline (1) If the parameters satisfy the balancing condition $b_{0} \cdots b_{7} = p^{2} q^{2}$ and that either $p q b_{0}^{-1} b_{r}^{-1} \in q^{\Z_{\leq0}}$ for some $r\in\{1,2,\ldots, 6 \}$ or $p^{2} q b_{0}^{-1} b_{7}^{-1} \in p q^{\Z_{\leq0}}$, then
\begin{multline}
I( b_{0}, \ldots, b_{7} ; p , q) = \frac{\Gamma( p^{2} q^{2} b_{0}^{-2}, b_{0} b_{7}^{-1} ;p ,q)}{\prod_{r=1}^{6} \Gamma( p q b_{0}^{-1} b_{r},  p^{2} q b_{r}^{-1} b_{7}^{-1} ; p ,q) } \prod_{1 \leq r < s \leq 6} \Gamma( b_{r} b_{s} ; p , q) \\
\times  {}_{12}V_{11}( p^{2}q b_{0}^{-2} ; p q b_{0}^{-1} b_{1}^{-1},\ldots , p q b_{0}^{-1} b_{6}^{-1}, p^{2} q b_{0}^{-1} b_{7}^{-1} ; q , p ).
\end{multline}
(2) If $p q b_{0}^{-1} b_{r}^{-1} = p^{-M} q^{-N}$ for some $r\in\{1,2,\ldots, 7\}$ and $M,N\in\Z_{\geq 0}$, then \cite{Kom04}
\begin{align}
I( b_{0}, \ldots, b_{7} ; p ,q ) &= \frac{\Gamma(p^{2} q^{2} b_{0}^{-2} ; p , q) \prod_{0 \leq r < s \leq 7} \Gamma( b_{r} b_{s} ; p , q)}{\prod_{ 1 \leq r \leq 7} \Gamma( p q b_{0}^{-1} b_{r} ; p , q)} \\
&\times {}_{12}V_{11}( p^{2} q b_{0}^{-2}; p q b_{0}^{-1} b_{1}^{-1},\ldots, p q^{-N} , \ldots, p q b_{0}^{-1} b_{7}^{-1}; q , p) \nonumber \\
&\times  {}_{12}V_{11}( p q^{2} b_{0}^{-2}; p q b_{0}^{-1} b_{1}^{-1},\ldots, p^{-M} q ,\ldots , p q b_{0}^{-1} b_{7}^{-1} ;  p , q). \nonumber
\end{align}
(Note that these specializations of the parameters break the $p\leftrightarrow q$ symmetry of the integral unless $M=N$ and the other parameters are given generic values.)

These relations can be computed in a straightforward manner by using residue calculus and the transformations formulas due to Rains \cite{Rai10} and Spiridonov \cite{Spi04}, as explained in Proposition~5.4 of \cite{Nou18}.

Using these results, and the result of Theorem~\ref{thm_Cauchy_eigenfunction_transform_Type_II}, it is straightforward to check that the elliptic hypergeometric series is an eigenfunction of the van Diejen operator.

\subsection{Known evaluations of elliptic hypergeometric integrals}
\label{sec_known_evaluations}
The following $BC_n$ elliptic Selberg hypergeometric integral of type II has the known evaluation \cite{vDS01,Rai10}
\begin{multline}
\label{eq_Selberg_integral_Type_II}
\int_{\mathbb{T}^{n}} d\omega_{n}(y) \prod_{1 \leq k \leq n}\frac{\prod_{ 0 \leq s \leq 5} \Gamma( b_{s} y_{k}^{\pm}; p ,q )}{\Gamma( y_{k}^{\pm2} ; p , q)} \prod_{1 \leq k< l \leq n}\frac{\Gamma( t y_{k}^{\pm} y_{l}^{\pm};p,q) }{\Gamma( y_{k}^{\pm} y_{l}^{\pm} ; p , q) } \\
	= \frac{2^{n} n!}{(p;p)^{n}_{\infty} (q;q)^{n}_{\infty}} \prod_{ 0 \leq k \leq n-1} \frac{\Gamma(t^{k+1};p,q)}{\Gamma(t ; p , q)}\prod_{0 \leq r < s \leq 5} \Gamma( t^{k} b_{r} b_{s} ; p ,q)
\end{multline}
if parameters satisfy the condition $b_{0}\cdots b_{5} t^{2n} = p q t^{2}$.

The following $BC_n$ elliptic Selberg hypergeometric integral of type I has the known evaluation \cite{vDS01,Rai10}
\begin{multline}
\label{eq_Selberg_integral_Type_I}
\int_{\mathbb{T}^{n}} d\omega_{n}(y) \prod_{ 1 \leq k \leq n} \frac{\prod_{ 0 \leq s \leq 2n+3} \Gamma( b_{s} y_{k}^{\pm} ; p ,q) }{\Gamma(y_{k}^{\pm 2} ; p , q) } \prod_{1 \leq k < l \leq n} \frac{1}{\Gamma( y_{k}^{\pm} y_{l}^{\pm} ; p ,q )}\\
= \frac{2^{n} n! }{(p;p)_{\infty}^{n} (q;q)_{\infty}^{n} } \prod_{0 \leq r < s \leq 2n +3 } \Gamma( b_{r} b_{s} ; p , q)
\end{multline}
if the parameters satisfy $ b_{0} \cdots b_{2n+3} = p q$.

\section{List of eigenfunction transformations}
\label{app_transforms}

In this Section we collect the different transformations that can be obtained from Lemmas~\ref{lemma_gauge_transform_V} and \ref{lemma_gauge_transform_U} and Theorems~\ref{thm_Cauchy_eigenfunction_transform_Type_II}-\ref{thm_dual-Cauchy_eigenfunction_transform_Type_I}. In the following, we always assume that the parameters satisfy the necessary restrictions so that the integration cycle can be chosen as the $n$-dimensional torus $\mathbb{T}^n$.

\subsection{Eigenfunction transforms of Cauchy type} 
Fix two index sets $I,J\subseteq \{0,1,\ldots,7\}$ with even cardinality, \emph{i.e.} $\abs{I}, \abs{J} \in 2 \Z_{\geq0}$. (Note that we allow for these index sets to also be empty.) Suppose that $\varphi(y;b\lvert q, t)$ is an eigenfunction of the van Diejen operator $\cD_{y}(b\lvert q, t)$, then the functions
%\begin{subequations}
\begin{multline}
\label{eq_Cauchy_transf_eigenfunctions_1} 
\psi_{1}(x) = U_{I}(x;a \lvert q ) \int_{\mathbb{T}^{n}} d\omega_{n}(y) \ w_{n}(y ; b_{I(t)} \lvert q , t ) \ \Phi(x , y \lvert q , t ) \\
\cdot U_{J}(y ; b_{I(t)} \lvert q) \ \varphi(y ; (b_{I(t)})_{J(pq) }\lvert q ,t) ,
\end{multline}
\begin{multline}
\label{eq_Cauchy_transf_eigenfunctions_2}
\psi_{2}(x) = U_{I}(x ; a \lvert q) \int_{\mathbb{T}^{n}} d\omega_{n}(y) \ w_{n}(y ; b_{I(t)} \lvert q ,t ) \ \Phi( x, y \lvert q, t ) \\
\cdot U_{J}(y ; b_{I(t)} \lvert q ) \ V(y \lvert q , t ) \  \varphi(y ; (b_{I(t)})_{J(pq)}\lvert q , p q t^{-1} ) ,
\end{multline}
\begin{multline}
\label{eq_Cauchy_transf_eigenfunctions_3}
\psi_{3}(x) = U_{I}(x;a \lvert q) V(x \lvert q , t ) \int_{\mathbb{T}^{n}} d\omega_{n}(y) \ w_{n}(y ; d_{I(p q t^{-1})} \lvert q  ,p q t^{-1}) \ \Phi(x , y  \lvert q  , p q t^{-1} ) \\
\cdot  U_{J}(y ; d_{I(p q t^{-1})} \lvert q) \  \varphi(y ; (d_{I(p q t^{-1})})_{J(pq)} \lvert q ,p q t^{-1} ),
\end{multline}
and
\begin{multline}
\psi_{4}(x) = U_{I}(x;a\lvert q ) V(x \lvert q,t) \int_{\mathbb{T}^{n}} d\omega_{n}(y) \ w_{n}(y ; d_{I(p q t^{-1})} \lvert q ,p q t^{-1}) \ \Phi(x , y \lvert q , p q t^{-1}) \\
\cdot U_{J}(y ; d_{I(p q t^{-1})} \lvert q ) \ V(y \lvert q , p q t^{-1}) \ \varphi( y ; (d_{I(p q t^{-1})})_{J(p q)} \lvert q ,t ) ,
\label{eq_Cauchy_transf_eigenfunctions_4}
\end{multline}
%\end{subequations}
where $b = p^{\half} q^{\half} t^{\half} a^{-1}$ and $d =p q t^{-\half} a^{-1}$, are eigenfunctions of the van Diejen operator $\cD_{x}{(a \lvert q , t )}$ if the parameters satisfy
\begin{equation}
(\prod_{s\in I} p q a_{s}^{-1} )(\prod_{s\notin I} a_{s}) t^{2(m-n)} = p^{2} q^{2} t^{2},
\end{equation}
resp.
\begin{equation}
(\prod_{s\in I} p q a_{s}^{-1} )(\prod_{s\notin I} a_{s}) p^{2(m-n)} q^{2(m-n)} t^{-2(m-n)} = p^{4} q^{4} t^{-2},
\end{equation}
in \eqref{eq_Cauchy_transf_eigenfunctions_1} and \eqref{eq_Cauchy_transf_eigenfunctions_2}, resp. \eqref{eq_Cauchy_transf_eigenfunctions_3} and \eqref{eq_Cauchy_transf_eigenfunctions_4}.
 (Throughout the Section, we have that $(a_{I(c_1)})_{J(c_2)} =(((a_{I(c_1)})_{J(c_2)})_{0} , \ldots, ((a_{I(c_1)})_{J(c_2)})_7)$ is given by 
\begin{equation}
((a_{I(c_1)})_{J(c_2)})_{s} = \begin{cases}
c_{2} c_{1}^{-1} a_{s} &\text{if } s\in I \bigcap J \\
c_{1} a_{s}^{-1}  &\text{if } s\in I \setminus J \\
c_{2} a_{s}^{-1} &\text{if } s\in J \setminus I \\
a_{s} &\text{if } s\notin I \bigcup J \\
\end{cases} \quad (s\in\{0,1,\ldots,7\}).
\end{equation}
for any vector $a\in(\C^{\ast})^{8}$, index sets $I,J\subseteq\{0,1,\ldots,7\}$, and constants $c_1,c_2 \in\C^{\ast}$.)

\subsection{Eigenfunction transforms of dual-Cauchy type}
Fix two index sets $I,J\subseteq \{0,1,\ldots,7\}$ with even cardinality and suppose that $\varphi(y;a \lvert t ,q  )$ is an eigenfunction of the van Diejen operator $\cD_{y}(a \lvert t , q)$, then the functions
%\begin{subequations}
\begin{multline}
\label{eq_dual_Cauchy_transf_1}
	\psi^{\vee}_{1}(x) = U_{I}( x ; a \lvert q) \int_{\mathbb{T}^{n}} d\omega_{n}(y) \  w_{n}( y ; a_{I(pq)} \lvert t, q )  \ \tPhi(x , y) \\ 
	\cdot U_{J}( y ; a_{I(pq)} \lvert t ) \ \varphi( y ; (a_{I(pq)})_{J(pt) } \lvert t , q ),
\end{multline}
\begin{multline}
\label{eq_dual_Cauchy_transf_2}
\psi^{\vee}_{2}(x) = U_{I}(x ; a \lvert q) \int_{\mathbb{T}^{n}} d\omega_{n}(y) \ w_{n}(y ; a_{I(pq)} \lvert t , q ) \ \tPhi(x , y ) \\ 
\cdot U_{J}(y ; a_{I(pq)} \lvert t ) \ V(y \lvert t , q ) \ \varphi(y ; (a_{I(pq)})_{J(pt)} \lvert t , p q^{-1} t ) ,
\end{multline}
\begin{multline}
\label{eq_dual_Cauchy_transf_3}
 \psi^{\vee}_{3}(x) = U_{I}( x ; a \lvert q ) V( x \lvert q  , t ) \int_{\mathbb{T}^{n}} d\omega_{n}(y) \ w_{n}( y ; a_{I(pq)} \lvert p q t^{-1} , q ) \ {\tPhi}(x , y ) \\
\cdot U_{J}( y ; a_{I(pq)} \lvert p q t^{-1} ) \ \varphi( y ; (a_{I(pq)})_{J(p^{2} q t^{-1})} \lvert p q t^{-1}  , q ),
\end{multline}
and
\begin{multline}
\label{eq_dual_Cauchy_transf_4}
\psi^{\vee}_{4}(x) = U_{I}(x ; a \lvert q) V( x \lvert q  , t ) \int_{\mathbb{T}^{n}} d\omega_{n}(y) \ w_{n}( y ; a_{I(pq)} \lvert p q t^{-1} , q ) \ {\tPhi}( x , y ) \\ 
\cdot U_{J}(y ; a_{I(pq)} \lvert p q t^{-1}) \  V(y \lvert p q t^{-1} , q ) \ \varphi(y ; (a_{I(pq)})_{J(p^{2} q t^{-1})} \lvert p q t^{-1} ,p^{2} t^{-1})
\end{multline}
%\end{subequations}
are eigenfunctions of the van Diejen operator $\cD_{x}{(a \lvert q , t )}$ if the parameters satisfy
\begin{equation}
(\prod_{s\in I} p q a_{s}^{-1})(\prod_{s\notin I} a_{s}) q^{2n} t^{2m} = p^{2} q^{2} t^{2},
\end{equation}
resp.
\begin{equation}
(\prod_{s\in I} p q a_{s}^{-1}) ( \prod_{s\notin I} a_{s}) p^{2m} q^{2(n + m)} t^{-2m} = p^{4} q^{4} t^{-2},
\end{equation}
in \eqref{eq_dual_Cauchy_transf_1} and \eqref{eq_dual_Cauchy_transf_2}, resp. \eqref{eq_dual_Cauchy_transf_3} and \eqref{eq_dual_Cauchy_transf_4}.

\section{Relation to previous works}

\label{app_KFI}
\subsection{Relation to \cite{KNS09}}

In this Section, we will present the relation between the van Diejen operator constructed in \cite{KNS09} and our operator $\cD_x(a\lvert q, t)$. One possible way of relating such operators was already given in Appendix~B of \cite{KNS09}. Our approach in this paper differs as we make uses the result in Lemma~\ref{lemma_parameter_p_shift}. We also show how the kernel functions $\Phi(x,y \lvert q, t)$ and $\tPhi(x,y)$, and their corresponding kernel function identities, are obtained from the results in \cite{KNS09}.
\subsubsection{Preliminaries for the additive notation}

Explaining the relation requires the introduction, and specification, of the odd function $[u]$ used in \cite{KNS09}. %Throughout this Section, we will use $u=(u_1 , u_2, \ldots , u_m)$ and $v = (v_1 , v_2, \ldots , v_{n} )$ for the the canonical variables in $\C^{m}$ and $\C^{n}$, and $\mu = (\mu_{0},\ldots,\mu_{7})$ for the parameters, in the additive notation.

The odd function $[u]$ is an entire function in $u \in \C$ satisfying the three-term relation
\begin{equation}
\label{eq_Weierstrass_relation}
[x \pm u][y\pm v] - [x \pm v] [y \pm u] = [x \pm y][u\pm v] \quad ( x,y,u,v\in\C)
\end{equation}
where $[u \pm v] = [u+v][u-v]$. Such functions are fall into three categories: rational, trigonometric/hyperbolic, or elliptic. In the elliptic case, $[u]$ coincides with the Weierstrass $\sigma$ function \cite{WhitWat} associated with a period lattice $\Omega = \Z \omega_1 \oplus \Z \omega_2$, where $\omega_1$, $\omega_2$ are linearly independent over $\R$, multiplied by a Gaussian term $\exp(c u^2 )$ and an overall constant. We also need to introduce the \emph{Legendre constants} $\eta_{\omega}$ ($\omega\in \Omega$) which are obtained from the quasi-periodicity relations 
\begin{equation}
[u + \omega ] = \epsilon_{\omega} \e^{2 \pi \imag \eta_{\omega} ( u + \half \omega)} [u] \quad ( \omega \in \Omega),
\end{equation}
for some $\epsilon : \Omega \to \{\pm\}$. The normalization of the Legendre constants are chosen such that $ \eta_{\omega_1} \omega_2 - \eta_{\omega_2} \omega_1 = 1$. We also find it useful to introduce 
\begin{equation}
\omega_0 = 0 \quad \omega_3 = -\omega_1 - \omega_2.
\end{equation}
Our specialization of the constants and periods are such that 
\begin{equation}
\label{eq_odd_function_specialization}
[u] = e( - \frac{1}{2} u ) \theta( e(u ) ; p ) , \quad e(u) = \exp( 2\pi \imag u), \quad p = e(\omega_2).
\end{equation}
In this specialization, we have that $\omega_1 = 1 $, with $\omega_2\in\C$ satisfying $\Im(\omega_{2}) > 0$, which also gives that $\epsilon_{\omega_{0}}=1$, $\epsilon_{\omega_{1}}=\epsilon_{\omega_{2}}=\epsilon_{\omega_{3}}=-1$ and $\eta_{\omega_{0}} = \eta_{\omega_{1}}=0$, $\eta_{\omega_{2}} = - \eta_{\omega_{3}} = -1$.

The elliptic Gamma functions, denoted by $G_{\ve}( u \lvert \delta)$ ($\ve\in\{\pm\}$) in \cite{KNS09}, are solutions to the analytic difference equations
\begin{equation}
G_{\ve}(u + \delta \lvert \delta) = \ve [u] G_{\ve}(u \lvert \delta) \quad (\ve \in \{\pm\})
\end{equation}
and, for our specialization, are given by
\begin{align*}
G_{+}(u \lvert \delta) = e( - \half \delta \binom{u / \delta}{2}) \Gamma( e(u) ; p , e(\delta)), \
G_{-}(u \lvert \delta) = e(\half \delta \binom{u/\delta}{2}) \Gamma( p e(u) ; p , e(\delta)).
\end{align*}
%$p = e(\omega_{2})$. 

\subsubsection{The van Diejen operator in additive notation}
Having given the preliminaries, we are now in the position to proceed to the van Diejen operator. The principal van Diejen operator, in the additive notation, is given by the analytic difference operator \cite[Eq. (2.13)]{KNS09}
\begin{equation}
\label{eq_van_diejen_operator_additive_notation}
\mathcal{E}_{u}^{(\mu \lvert \delta , \kappa )} = \cV^{0}(u ; \mu \lvert \delta , \kappa) + \sum_{1 \leq i \leq m} \sum_{\ve = \pm} \cV_{i}^{\ve}( u ; \mu \lvert \delta , \kappa) \exp({\ve \delta \frac{\partial}{\partial u_i}})\end{equation}
where 
\begin{equation}
\cV_{i}^{\ve}(u ; \mu \lvert \delta , \kappa) =\frac{\prod_{0 \leq s \leq 7} [ \ve u_i + \mu_{s} ]}{[ \ve 2 u_{i}] [ \ve 2 u_{i} + \delta ]} \prod_{j \neq i} \frac{[ \ve u_{i} \pm u_{j} + \kappa] }{[ \ve u_{i} \pm u_{j} ]}
\end{equation}
for all $i\in\{1,2,\ldots,m\}$ and $\ve \in \{\pm\}$, and $\cV^{0}(u ; \mu \lvert \delta , \kappa)$ given by
\begin{equation}
\cV^{0}(u ; \mu \lvert \delta , \kappa) = \frac{1}{2} \sum_{0 \leq r \leq 3} \cV^{0}_{r}( u ; \mu \lvert \delta , \kappa),
\end{equation}
where
\begin{multline}
 \cV^{0}_{r}( u ; \mu \lvert \delta , \kappa) =  \frac{ e(-\half \eta_{\omega_r}( c_{m+1}(\mu \lvert 2\delta , \kappa) +2 \omega_{r})) \prod_{0 \leq s \leq 7}[ \half ( \omega_r - \delta ) + \mu_s]}{ [ \kappa ] [ \kappa - \delta ]} \\ 
\cdot \prod_{1 \leq j \leq m} \frac{[\half( \omega_r - \delta ) \pm u_j + \kappa ]}{[ \half ( \omega_r - \delta ) \pm u_j]}.
\end{multline}
with 
\begin{equation}
c_{m}(\mu \lvert \delta , \kappa ) = \sum_{0 \leq s \leq 7} \mu_s + 2(m-1) \kappa - 2 \delta \quad (m\in\Z_{\geq 0}).
\end{equation}
(We note that the form above holds for any odd function $[u]$ that satisfies the three-term relation \cite{KNS09}.)
\subsubsection{van Diejen's operator from additive to multiplicative notation}

Using our specialization of the odd function $[u]$ \eqref{eq_odd_function_specialization} yields that the van Diejen operator can be factorized as 
\begin{multline}
\label{eq_vd_op_first_step}
\cE_{u}^{(\mu \lvert \delta , \kappa)} = \frac{e(\half(2 \kappa + \delta))}{e(\half(2 m \kappa + \sum_{0 \leq s \leq 7} \mu_{s})) } \Bigl( {\mathcal{A}}^{0}(u ; \mu \lvert \delta , \kappa) 
 + \sum_{1 \leq i \leq m} \sum_{\ve = \pm} {\mathcal{A}_{i}}^{\ve}( u ; \mu \lvert \delta , \kappa) \e^{\ve \delta \frac{\partial}{\partial u_i}} \Bigr)
\end{multline}
with coefficients 
\begin{multline}
\mathcal{A}^{\ve}_{i}(u ; \mu \lvert \delta , \kappa) = e(-2 \ve u_{i})\frac{\prod_{0 \leq s \leq 7} \theta( e(\mu_{s}) e(\ve u_{i});p) }{\theta( e(\ve 2 u_{i}) , e(\delta) e(\ve 2 u_{i});p)}  \prod_{j\neq i}  \frac{\theta(e( \kappa) e(\ve u_{i}) e( \pm u_{j}) ;p)}{ \theta( e(\ve u_{i}) e(\pm u_{j});p)},
\end{multline}
for all $i\in\{1,2,\ldots,m\}$ and $\ve\in\{\pm\}$, and $\mathcal{A}^{0}(u ;\mu \lvert \delta, \kappa) = \frac{1}{2} \sum_{ 0 \leq r \leq 3}\mathcal{A}^{0}_{r}(u ;\mu \lvert \delta, \kappa)$ where
\begin{multline}
\mathcal{A}_{r}^{0}(u ; \mu \lvert \delta , \kappa ) = e( - \half \eta_{\omega_{r}}( c_{m+1}(\mu\lvert 2 \delta,\kappa)  + 2\omega_{r}) + (\delta -2  \omega_{r} )) \\ 
\cdot \frac{\prod_{0 \leq s \leq 7} \theta( e(\half \omega_{r}) e(-\half \delta) \e(\mu_{s});p)}{\theta( e( \kappa) , e(-\delta) e( \kappa) ;p)} \prod_{ 1 \leq j \leq m} \frac{\theta( e(\half \omega_{r}) e(-\half \delta) e(\kappa) e(\pm u_{j}) ; p ) }{\theta( e(\half \omega_{r}) e(-\half \delta) e(\pm u_{j}) ; p )}
\end{multline}
for all $r\in\{0,1,2,3\}$. The parameters and variables are then related by
\begin{equation}
\label{eq_relations}
\begin{split}
e(\mu_{0}) = p^{-1} a_{0} , \quad e(\mu_{1})= p^{-1} a_{1},\quad e(\mu_{s})= a_{s} \ (s\in\{2,3,\ldots,7\}) \\
e(\delta) = q , \quad e(\kappa) = t , \quad e(u_{i} ) = x_{i} \ (i\in\{1,2,\ldots,m\})
\end{split}
\end{equation}
and we have $c_{r} = e(\half \omega_r)$ for all $r\in\{0,1,2,3\}$. Here, we are using the result of Lemma~\ref{lemma_parameter_p_shift}, with $\abs{K}=2$, to remove the factors $x_{i}^{-\ve 2}$ in front of the coefficients by choosing $K=\{0,1\}$. Different choices will only change the operator by an overall multiplicative constant. Using this relation, we find that 
%\begin{align*}
%\mathcal{A}^{\ve}_{i}(u ; \mu \lvert \delta , \kappa) &= x_{i}^{-\ve 2} \frac{\theta( p^{-1} a_{0} x_{i}^{\ve} , p^{-1} a_{1} x_{i}^{\ve} , a_{2}x_{i}^{\ve} ,\ldots, a_{7} x_{i}^{\ve};p)}{\theta( x_{i}^{\ve 2} , q x_{i}^{\ve 2};p)} \prod_{j\neq i} \frac{\theta( t x_{i}^{\ve} x_{j}^{\pm} ;p )}{\theta( x_{i}^{\ve} x_{j}^{\pm} ;p )}\\ 
%&= p^{-2} a_{0} a_{1} \frac{\prod_{s=0}^{7} \theta(a_{s} x_{i}^{\ve} ;p)}{\theta( x_{i}^{\ve 2} , q x_{i}^{\ve 2};p)} \prod_{j\neq i} \frac{\theta( t x_{i}^{\ve} x_{j}^{\pm} ;p )}{\theta( x_{i}^{\ve} x_{j}^{\pm} ;p )}\\ 
%&= p^{-2} a_{0} a_{1} A^{\ve}_{i}(x; a\lvert q ,t)
%\end{align*}
$
\mathcal{A}^{\ve}_{i}(u ; \mu \lvert \delta , \kappa) = a_{0} a_{1}p^{-2} A^{\ve}_{i}(x; a\lvert q ,t)
$
and 
%and that 
%\begin{multline}
%\mathcal{A}_{r}^{0}(u ; \mu \lvert \delta , \kappa ) = X_{r}\\
%\cdot \frac{ \theta( c_{r} p^{-1} q^{-\half} a_{0}, c_{r} p^{-1}q^{-\half} a_{1}, c_{r} q^{-\half} a_{2}, \ldots,c_{r} q^{-\half} a_{7};p)}{\theta(t , q^{-1} t ;p)} \prod_{j=1}^{m} \frac{\theta(c_{r} q^{-\half} t x_{j}^{\pm}; p ) }{\theta( c_{r} q^{-\half} x_{j}^{\pm} ; p )} \\
%= K_{r} c_{r}^{2} p^{-2} q^{-1} a_{0} a_{1} \frac{\prod_{s=0}^{7} \theta( c_{r} q^{-\half} a_{s} ; p)}{\theta( t , q^{-1} t ;p)} \prod_{j=1}^{m} \frac{\theta(c_{r} q^{-\half} t x_{j}^{\pm}; p ) }{\theta( c_{r} q^{-\half} x_{j}^{\pm} ; p )} \\
%= a_{0} a_{1} p^{-2} L_{r}^{(m)}( a \lvert q, t) \frac{\prod_{s=0}^{7} \theta(c_{r}q^{-\half} a_{s} ; p )}{\theta( t , q^{-1} t ; p)} \prod_{j=1}^{m} \frac{\theta(c_{r} q^{-\half} t x_{j}^{\pm}; p ) }{\theta( c_{r} q^{-\half} x_{j}^{\pm} ; p )} \\
%= a_{0} a_{1} p^{-2} A^{0}_{r}(x ; a \lvert q , t) 
%\end{multline}
%where we used that 
%\begin{equation}
%X_{r} = (a_{0} \cdots a_{7}  p^{-2} q^{-4} t^{2m})^{-\half \eta_{\omega_{r}}} q e(- 2 \omega_{r})
%\end{equation}
%which gives that 
%\begin{equation}
%X_{r} c_{r}^{2} p^{-2} q^{-1} a_{0} a_{1}= \begin{cases}
%a_{0} a_{1} p^{-2}  , &r=0,1\\
 %a_{0} a_{1} p^{-2} (a_{0}\cdots a_{7})^{\half} p^{-1} q^{-2} t^{m} & r=2 \\
%a_{0} a_{1} p^{-2} (a_{0}\cdots a_{7}^{-\half} p^{3} q^{2} t^{-m} & r=3  
%\end{cases}
%= a_{0} a_{1} p^{-2} L_{r}^{(m)}(a \lvert q, t)
%\end{equation}
%for all $r\in\{0,\ldots,3\}$
$
\mathcal{A}_{r}^{0}(u ; \mu \lvert \delta , \kappa ) =  a_{0} a_{1} p^{-2} A^{0}_{r}(x ; a \lvert q , t)
$
by straightforward calculations, which yields that 
\begin{equation}
\label{eq_vd_op_second_step}
\mathcal{E}_{u}^{(\mu\lvert \delta , \kappa)} =  a_{0}^\half a_{1}^{\half}a_{2}^{-\half} \cdots a_{7}^{-\half}p^{-1} q^{\half} t^{-m+1}  \cD_{x}(a \lvert q, t).
\end{equation}

\subsubsection{Kernel function identities.} 

From the result in Theorem~2.3 of \cite{KNS09}, we have the kernel function identities
\begin{equation}
\label{eq_KNS_KFI_Cauchy}
\cE_{u}( \mu \lvert \delta , \kappa) \Phi_{BC}(u , v \lvert \delta , \kappa) = \cE_{v}( \nu \lvert \delta , \kappa) \Phi_{BC}(u , v \lvert \delta , \kappa),
\end{equation}
where $\nu = (\nu_{0},\ldots,\nu_{7})$ for $\nu_{s} = \half ( \delta +\kappa) - \mu_{s}$, under the balancing condition $2(m-n-2)\kappa -2 \delta + \sum_{0 \leq s \leq 7} \mu_{s} = 0$
and 
\begin{equation}
\label{eq_KNS_KFI_dual}
[\kappa] \cE_{u}^{(\mu\lvert \delta , \kappa)} \Psi_{BC}( u ,v ) + [\delta] \cE_{v}^{(\mu\lvert \kappa, \delta)} \Psi_{BC}(u,v) = 0
\end{equation}
under the balancing condition $2(m-1) \kappa + 2(n-1)\delta + \sum_{ 0 \leq s \leq 7} \mu_{s} = 0$, for the functions
\begin{multline}
\Phi_{BC}(u,v\lvert q, t) = \prod_{\substack{ 1 \leq i \leq m \\ 1 \leq k \leq n}} \prod_{\ve=\pm}  G_{+}( \ve u_{i} + v_{k} + \half(\delta - \kappa)\lvert \delta)G_{-}( \ve u_{i} - v_{k} + \half(\delta - \kappa)\lvert \delta)
\end{multline}
and 
\begin{equation}
\Psi_{BC}(u,v) = \prod_{\substack{1 \leq i \leq m \\ 1 \leq k \leq n}} [u_{i} + v_{k}] [u_{i} - v_{k}].
\end{equation}
Note that the balancing conditions above are the same as our balancing condition in \eqref{eq_balancing_condition_KFI} and \eqref{eq_dual_balancing_condition} when using the parametrization in \eqref{eq_relations}. Let us start with \eqref{eq_KNS_KFI_Cauchy}: Using \eqref{eq_vd_op_first_step}, we have that 
\begin{equation}
\cE_{v}^{(\nu\lvert q , t)} = \frac{e(\half \sum_{0 \leq s \leq 7} \mu_{s}) }{e(\half(2(m+1)\kappa + 3 \delta))}  \Bigl( {\mathcal{A}}^{0}(v ; \nu \lvert \delta , \kappa) 
 + \sum_{1 \leq k \leq n} \sum_{\ve = \pm} {\mathcal{A}_{k}}^{\pm}( v ; \nu \lvert \delta , \kappa) \e^{\ve \delta \frac{\partial}{\partial v_k}} \Bigr)
\end{equation}
and use the parametrization in \eqref{eq_relations}, $e(v_{k}) = p^{1/2} y_{k}$ for all $k\in\{1,2,\ldots,n\}$, and
\begin{equation}
e(\nu_{0}) = p^{\half} b_{0}, \quad e(\nu_{1})= p^{\half} b_{1} , \quad e(\nu_{s}) = p^{-\half} b_{s} \ (s\in\{2,3,\ldots,7\}) 
\end{equation}
 to obtain that 
\begin{multline}
\cE_{v}^{(\nu\lvert q , t)} = \frac{ p q^\half t}{ t^{n} (b_{0}\cdots b_{7})^{\half}} \frac{pq^{2}}{b_{0} b_{1}} \\
\cdot \Bigl( A^{0}(p^\half y ; p^{-\half} b \lvert q , t) + \sum_{1 \leq k \leq n} \sum_{\ve = \pm} q^{-2} (p^{\half} y_{k})^{-4 \ve} A_{k}^{\ve}(p^\half y, p^{-\half} b \lvert q , t) T_{q,y_{k}}^{\ve} \Bigr)
\end{multline}
by straightforward calculations. Let $g(y)$ be any function satisfying 
\begin{equation}
T_{q,y_{k}} g(y) = q^{-2} p^{-2} y_{k}^{-4} g(y) \ \Leftrightarrow \ T_{q,y_{k}}^{-1} g(y) = q^{-2} p^{2} y_{k}^{2} g(y)
\end{equation}
for all $k\in\{1,2,\ldots,n\}$, then 
\begin{equation}
\cE_{v}^{(\nu\lvert q , t)} = \frac{ p q^\half t}{ t^{n} (b_{0}\cdots b_{7})^{\half}} \frac{pq^{2}}{b_{0} b_{1}} g(y)^{-1} \circ \cD_{p^{\half} y}(p^{-\half} b \lvert q, t) \circ g(y).
\end{equation}
Using the result of Lemma~\ref{lemma_half-period_shift} allows us to express the kernel function identity as
\begin{equation}
\frac{q^{\half} t a_{0} a_{1} }{(a_{0}\cdots a_{7})^{\half} p t^m} \cD_{x}(a \lvert q,t) F(y) \Phi_{BC} = \frac{q^{\half} t }{ b_{0} b_{1} p t^{n}} \cD_{y}(b\lvert q , t) F(y) \Phi_{BC}
\end{equation}
%\end{equation}
where $F(y) = g(y) f(y;p^{-\half} b \lvert q , t)$, with $f(x;a\lvert q ,t)$ as in Lemma~\ref{lemma_half-period_shift}, satisfies the $q$-difference equation $T_{q,y_{k}}F(y) = (b_{0} \cdots b_{7})^{-\half} p q t^{-n+1} F(y)= t^{-m}F(y)$ under the balancing condition. Using the balancing condition in the identity above, it is straightforward to check that the factors in front of the operators cancel, and we obtain that
 \begin{equation}
\label{eq_KFI_1}
 \cD_{x}(a \lvert q,t) F(y) \Phi_{BC} = \cD_{y}(b \lvert q, t) F(y) \Phi_{BC}
\end{equation}
Finally, we obtain that 
%\begin{multline}
\begin{equation}
\Phi_{BC}(u, v \lvert \delta , \kappa) 
%= \prod_{1 \leq i \leq m} \prod_{1 \leq k \leq n}  e(- \frac{\kappa}{\delta }v_{k})
 %\\ 
%\cdot \prod_{\ve=\pm} \Gamma( q^{\half} t^{-\half} e(\ve u_{i}) e(v_{k}) , p q^{\half} t^{-\half} e(\ve u_{i}) e(-v_{k}) ; p , q) \\
%=  \prod_{1 \leq i \leq m} \prod_{1 \leq k \leq n} e(- \frac{\kappa}{\delta }v_{k}) \prod_{\ve=\pm}   \Gamma( p^{\half} q^{\half} t^{-\half} x_{i}^{\ve} y_{k} , p^{\half} q^{\half} t^{-\half} x_{i}^{\ve} y_{k}^{-1} ; p , q) \\
= \Phi(x,y\lvert q , t) \prod_{1 \leq k \leq n}e(- m \frac{\kappa}{\delta }v_{k}).
%\end{multline}
\end{equation}
Using that $\exp( \delta \frac{\partial}{\partial v_{k}}) \prod_{1 \leq k \leq n}e(- m \frac{\kappa}{\delta }v_{k}) = t^{m}  \prod_{1 \leq k \leq n}e(- m \frac{\kappa}{\delta }v_{k})$, we find that the product of this factor and $F(y)$ in \eqref{eq_KFI_1} yields a $q$-periodic function, \emph{i.e.} 
\begin{equation}
F(y) \Phi_{BC} = [\mbox{{quasi-const.}}] \cdot \Phi( x , y \lvert q, t).
\end{equation}
The quasi-constant can be ignored in the kernel function identity and we obtain the kernel function identity in Lemma~\ref{lemma_Cauchy_KFI}.

The kernel function of dual-Cauchy type is obtained in a straightforward way from $\Psi_{\text{BC}}$ using our specialization of $[u]$ in \eqref{eq_odd_function_specialization}, Eqs. \eqref{eq_relations} and  \eqref{eq_vd_op_second_step}, and setting $e(v_{k})= y_{k}$ for all $k\in\{1,2,\ldots,n\}$. The kernel function identity also follows from straightforward calculations using this specialization.
%, \emph{i.e.}
%\begin{equation}
%\Psi_{\text{BC}}(u,v) = \prod_{1 \leq i \leq m } \prod_{1\leq k \leq n} x_{i}^{-1} \theta( x_{i} y_{k}^{\pm};p) \quad x_{i}=e(u_{i}), \ y_{k} = e(v_{k}).
%\end{equation}

\subsection{Relation to \cite{Rui09a}}

As we have previously said, the kernel function, and kernel function identity, in Lemma~\ref{lemma_Cauchy_KFI} for $m=n$ was constructed by Ruijsenaars. In this Section, the exact relations between our notation and those in \cite{Rui09a} is presented.

The periods and shift parameters are related by
\begin{equation}
\omega_{1} =  \text{``} \frac{\pi}{r} \text{''}, \quad q = e( \imag \frac{ a_{-\ve}}{\omega_{1}}), \quad p = e(\imag \frac{a_{\ve}}{\omega_{1}} ) \quad (\ve\in\{\pm\})
\end{equation}
where ``$r$''$\in \R_{>0}$ corresponds to the period along the real line and $a_{\ve} \in\R_{>0}$ corresponds to the (quasi-)periods in the imaginary direction. Here, the choice of $\ve\in\{\pm\}$ is arbitrary and reflects the $p\leftrightarrow q$ duality of the model. We proceed by fixing $\ve \in\{\pm\}$ in this part. 

Ruijsenaars' `right-hand-side' function $R_{\ve}(u) = R(r , a_{\ve} ; u)$ is defined by the product
\begin{equation}
R_{\ve}(u) = \prod_{\ell\in\Z_{\geq 0}} (1 - e( \imag (2\ell + 1) \frac{a_{\ve}}{2\omega_{1}}) e(\frac{u}{\omega_{1}}))(1 - e( \imag (2\ell + 1) \frac{a_{\ve}}{2\omega_{1}}) e(-\frac{u}{\omega_{1}}))
\end{equation}
and is related to the multiplicative theta function by $R_{\ve}(u) = \theta( p^{\half} e(\frac{u}{\omega_{1}});p)$.
The elliptic Gamma function $G(r , a_{+} ,a_{-} ; u)$ in \cite{Rui09a} is defined by the infinite product
\begin{equation}
G(r,a_{+}, a_{-} ;u) = \prod_{i,j \in \Z_{\geq 0}} \frac{1- e(\imag (2 i + 1) \frac{a_{+}}{2\omega_{1}})e(\imag (2 j + 1) \frac{a_{-}}{\omega_{1}}) e(- \frac{u}{\omega_{1}})  }{1- e(\imag (2 i + 1) \frac{a_{+}}{2\omega_{1}})e(\imag (2 j + 1) \frac{a_{-}}{\omega_{1}}) e( \frac{u}{\omega_{1}}) }
\end{equation}
which is related to the multiplicative Gamma function $\Gamma(x;p,q)$ by
\begin{equation}
G(r,a_{+}, a_{-} ;u)  = \Gamma( p^\half q^\half e(\frac{u}{\omega_{1}}); p , q).
\end{equation}
Setting $x= e(\frac{u}{\omega_{1}})$, we find the relations 
\begin{equation}
R_{\ve}(u) = \theta(p^\half x ; p ) , \quad G(r,a_{+}, a_{-} ;u) = \Gamma( p^\half q^\half x ; p , q)
\end{equation}
between our notation and that of \cite{Rui09a} for the multiplicative theta and Gamma functions. We will suppress the dependence on the scaling parameters from now on and just write $G(x)$ for Ruijsenaars' elliptic Gamma function.

The van Diejen operator is denoted by ``$\cA_{\ve}(h , \mu ; u)$'' $(\ve\in\{\pm\})$ in \cite{Rui09a} and is defined as
\begin{equation}
\cA_{\ve}(h , \mu ; u) =\cV(h,\mu ; u) + \sum_{1 \leq i \leq m}( \cV_{i}(h,\mu; u) \e^{-\imag a_{-\ve} \frac{\partial}{\partial u_{i}}} + \cV_{i}(h , \mu ; -u) \e^{\imag a_{-\ve} \frac{\partial}{\partial u_{i}}})
\end{equation}
with coefficients given by 
\begin{equation}
\cV_{i}(h,\mu;u) = \frac{\prod_{0 \leq s \leq 7} R_{\ve}(u_{i} - h_{s} - \imag a_{-\ve} / 2)}{R_{\ve}( 2 u_{i} + \imag a_{\ve}/2) R_{\ve}( 2 u_{i} - \imag a_{-\ve} + \imag a_{\ve}/2)}\prod_{j\neq i} \frac{R_{\ve}(u_{i} \pm u_{j} - \mu + \imag a_{\ve}/2)}{R_{\ve}( u_{i} \pm u_{j} + \imag a_{\ve}/2)},
\end{equation}
where we use the notation $R_{\ve}(u\pm v+\alpha) = R_{\ve}(u+ v+\alpha) R_{\ve}(u - v+\alpha) $, and
\begin{equation}
\cV(h,\mu;u) = \frac{\sum_{0 \leq r \leq 3} p_{r}(h) \bigl[ (\prod_{1 \leq j \leq m} \mathcal{E}_{r}(\mu;u_{j})) - \mathcal{E}_{r}(\mu ; \omega_{r} /2)^{m}\bigr]}{2 R_{\ve}(\mu - \imag a_{\ve} /2) R_{\ve}( \mu - \imag a_{-\ve} - \imag a_{\ve}/2)}
\end{equation}
where
\begin{align*}
p_{0}(h) = \prod_{ 0 \leq s \leq 7} R_{\ve}(h_{s}), \quad  & p_{2}(h) = p \prod_{ 0 \leq s \leq 7} e(-\frac{h_{s}}{2\omega_{1}}) R_{\ve}( h_{s} - \half \omega_{2}), \\
 p_{1}(h) = \prod_{ 0 \leq s \leq 7} R_{\ve}(h_{s} - \half \omega_1), \quad & p_{3}(h) = p \prod_{ 0 \leq s \leq 7} e(\frac{h_{s}}{2\omega_{1}}) R_{\ve}(h_{s} - \half \omega_{3}),
\end{align*}
and 
\begin{equation}
\mathcal{E}_{r}(\mu;u_{j}) = \frac{R_{\ve}( \pm u_{j} + \mu - \half \imag ( a_{\ve}+a_{-\ve}) - \half \omega_{r})}{R_{\ve}( \pm u_{j}  - \half \imag ( a_{\ve} + a_{-\ve}) - \half \omega_{r})}
\end{equation}
with $\omega_{0}=0$, $\omega_{2} = \imag a_{\ve}$ and $\omega_{3}=-\omega_{1} - \omega_{2}$.

We note that the operator $\exp( \imag a_{-\ve} \frac{\partial}{\partial u })$ acts as the $T_{q,x}$-shifts operator on the multiplicative variables $x = e( u / \omega_1)$, \emph{i.e.}
\begin{equation}
\e^{\ve' \imag a_{-\ve} \frac{\partial}{\partial u }} e(u/\omega_{1}) = e( (\ve' \imag a_{-\ve})/\omega_1) e(u / \omega_1) = q^{\ve' } e(u/\omega_1) \quad (\ve' \in\{\pm\}),
\end{equation}
and expect that $\cV_{i}(h , \mu ; \pm u)$ should correspond to $A_{i}^{\mp}(x ; a \lvert q , t)$ ($i\in\{1,2,\ldots,m\}$) in our notation. It follows from straightforward calculations, and the reflection property, that 
\begin{multline}
\cV_{i}(h , \mu ;  u) = \frac{\prod_{ 0 \leq s \leq 7} \theta( p^{\half} q^{-\half} e(-h_{s}/\omega_{1}) x_{i}; p )}{\theta( p x_{i}^{2} ;p ) \theta( p q^{-1} x^{2}_{i};p)} \prod_{j\neq i} \frac{\theta( p e(-\mu/ \omega_1) x_{i} x_{j}^{\pm};p) }{\theta(p x_{i} x_{j}^{\pm};p)} \\
= \frac{\prod_{ 0 \leq s \leq 7} \theta( p^{\half} q^{\half} e(h_{s}/\omega_{1}) x_{i}^{-1}; p )}{\theta( x_{i}^{-2} ;p ) \theta( q x^{-2}_{i};p)} \prod_{j\neq i} \frac{\theta(  e(\mu/ \omega_1) x_{i}^{-1} x_{j}^{\pm};p) }{\theta( x_{i}^{-1} x_{j}^{\pm};p)}
\end{multline}
which allows us to identify the correspondence between parameters
\begin{equation}
a_{s} = p^{\half} q^{\half} e( h_{s} / \omega_1) \ (s\in\{0,1,\ldots,7\}), \quad t = e(\mu/ \omega_1).
\end{equation}
We then obtain that $\cV_{i}(h, \mu;u) = A_{i}^{-}(x; a \lvert q , t)$ for all $i\in\{1,2,\ldots,m\}$. (Note that $e(h_s/\omega_1)$ are the same  as the parameters $\mu_{s}$ used in Section~\ref{sec_E8_symmetry} when considering the $W(D_{8})$ reflections.) Using the correspondence, we obtain that 
\begin{equation}
\mathcal{E}_{r}(\mu;u_{j}) = t^{\delta_{r,2}-\delta_{r,3}} \frac{\theta( c_{r} q^{-\half} t x_{j}^{\pm};p)}{\theta( c_{r} q^{-\half} x_{j}^{\pm};p)} \quad (r\in\{0,1,2,3\})
\end{equation}
by using the quasi-periodicity of the multiplicative theta function, and $p_{r}(h) = L_{r}^{(0)}(a\lvert q, - ) \prod_{ 0 \leq s \leq 7} \theta(c_r q^{-\half} a_{s};p)$, for all $r\in\{0,1,2,3\}$. It is then clear that 
\begin{equation}
\cV(h , \mu ; u) = A^{0}(x;a \lvert q, t ) - C
\end{equation}
where the constant is obtained by straightforward calculations to equal
\begin{equation}
\label{eq_Ruijs_constant}
C = \frac{1}{2}\sum_{ 0 \leq r \leq 3} L_{r}^{(2 m)}(a\lvert q, t) \frac{\prod_{0 \leq s \leq 7}\theta(c_r q^{-\half} a_{s};p) }{\theta( t , q^{-1} t ; p )} \Bigl( \frac{\theta(q^{-\half} t ;p)}{\theta( q^{-\half} ;p)}\Bigr)^{2m}.
\end{equation}

Ruijsenaars' kernel function 
\begin{equation}
\mathcal{S}(h; u , v ) = \prod_{1 \leq i,k \leq m} G(\pm u_{i} \pm v_{k} - \half \imag (a_{+} + a_{-}) - \frac{1}{4} \sum_{s=0}^{7} h_{s})
\end{equation}
satisfies the kernel function identity
\begin{equation}
\bigl(\cA_{\ve}(h;\mu; u) - \cA_{\ve}( - J_{R} h , \mu ; v) - \sigma_{\ve}(h)\bigr)\mathcal{S}(h;u,v) = 0
\end{equation}
for $(-J_{R} h)_{s} = -h_{s} + \frac{1}{4} \sum_{0 \leq s \leq 7} h_s$ ($s\in\{0,1,\ldots,7\}$), some constant $\sigma_{\ve}(h)\in\C$, if the parameters satisfy $2 \mu = 2 \imag (a_{+} + a_{-}) + \sum_{0 \leq s \leq 7} h_{s}$; see Proposition~4.1 of \cite{Rui09a}. In our convention, we have that the balancing conditions coincide since
\begin{equation}
2 \mu = 2 \imag (a_{+} + a_{-}) + \sum_{0\leq s \leq 7} h_{s} \Leftrightarrow t^{2} = p^{-2} q^{-2} a_{0} \cdots a_{7}
\end{equation}
and that $\exp( (-J_{r} h)_{s}) = p^{\half} q^{\half} t^{\half} a_{s}^{-1} = b_{s}$ for all $s\in\{0,1,\ldots,7\}$.
It is clear that $G( u - \half \imag ( a_{+} + a_{-})) = \Gamma( e(u/\omega_1) ; p ,q )$, and using 
\begin{equation}
e( -\frac{1}{4} \sum_{0 \leq s \leq 7} h_{s}/\omega_1) = e( ( \imag a_{+} + \imag a_{-} - \mu)/2\omega_1) = p^{\half} q^{\half} t^{-\half}
 \end{equation}
under the balancing condition, gives that 
\begin{equation}
\mathcal{S}(h ; u , v ) = \prod_{1 \leq i,k\leq m} \Gamma( p^{\half} q^{\half} t^{-\half} x_{i}^{\pm} e( v_{k}/\omega_1); p ,q ).
\end{equation}
Setting $y_{k} = e(v_{k}/\omega_1)$ for all $k\in\{1,2,\ldots,m\}$ yields the kernel function identity in Lemma~\ref{lemma_Cauchy_KFI} for the $n=m$ case.

To summarize: The van Diejen operator in \cite{Rui09a} (and \cite{Rui15}) equals the van Diejen operator $\cD_{x}(a\lvert q , t)$ up to an additive constant $C$ \eqref{eq_Ruijs_constant} for
\begin{equation}
a_{s}= p^\half q^\half \e^{2 \imag r h_{s}} \ (s\in\{0,1,\ldots,7\}), \ q= \e^{-2r a_{-}} , \ p=\e^{-2r a_{+}}, \ t= \e^{2\imag r \mu},
\end{equation}
and setting $x_{i} = \exp( 2 \imag r u_{i})$ for all $i\in\{1,2,\ldots,m\}$. The kernel function identity are then obtain by inserting these parameters.

\newcommand{\etalchar}[1]{$^{#1}$}

\end{document}